\title{Cycle Patterns and Mean Payoff Games}
\author{Georg Loho\thanks{FU Berlin, Germany \& U Twente, The Netherlands.} 
\and Matthew Maat\thanks{U Twente, The Netherlands}
\and Mateusz Skomra\thanks{LAAS-CNRS, Université de Toulouse, CNRS, Toulouse, France.}}
\date{}
\newcommand{\Cs}{\mathscr{C}}
\newcommand{\RR}{\mathbb{R}}
\newcommand{\ZZ}{\mathbb{Z}}
\newcommand{\QQ}{\mathbb{Q}}
\newcommand{\sgn}{\operatorname{sgn}}
\newcommand{\val}{\mathrm{val}}
\newcommand{\region}[1]{\mathcal{Z}(#1)}
\newcommand{\dectree}{\mathcal{T}}
\newcommand{\Nodesub}{\mathscr{N}}
\newcommand{\Win}{V^{+}}
\newcommand{\Vmax}{V_{\text{Max}}}
\newcommand{\Vmin}{V_{\text{Min}}}
\begin{document}
\maketitle
\begin{abstract}
We introduce the concept of a \emph{cycle pattern} for directed graphs 
as functions from the set of cycles to the set $\{-,0,+\}$.   
The key example for such a pattern is derived from a weight function, giving rise to the sign of the total weight of the edges for each cycle. 
Hence, cycle patterns describe a fundamental structure of a weighted digraph, and they arise naturally in games on graphs, in particular parity games, mean payoff games, and energy games. 
        
Our contribution is threefold: we analyze the structure and derive hardness results for the realization of cycle patterns by weight functions. 
Then we use them to show hardness of solving games given the limited information of a cycle pattern. 
Finally, we identify a novel geometric hardness measure for solving mean payoff games (MPG) using the framework of linear decision trees, and use cycle patterns to derive lower bounds with respect to this measure, for large classes of algorithms for MPGs.  
\end{abstract}
\newcommand{\setOf}[2]{\{#1 \colon #2\}}

\section{Introduction}

The set of negative cycles is a basic structure of a directed graph equipped with a weight function. 
This set is also the main obstacle for shortest path algorithms, and it serves to determine the winner in certain games on graphs.
Furthermore, it is a special case of an infeasible subsystem of a linear inequality system, and it has been used to deduce the hardness of determining all vertices of a polyhedron.  
We consider a refined version of the structure captured by the set of negative cycles, namely a function from the directed cycles in a digraph to $\{-,0,+\}$, depending on the signs of the weights of the cycles. 
More generally, this leads to the concept of a \emph{cycle pattern}, an arbitrary function on the set of directed cycles of a directed graph to the set $\{-,0,+\}$. 
To the best of our knowledge, although this concept is natural and easy to define, it has not been studied before.

Our goal in studying cycle patterns is twofold. 
On the one hand, we develop their basic structural theory and address fundamental hardness questions. 
On the other hand, we use cycle patterns as a framework to analyze the complexity of games on graphs. 

The first motivation to study cycle patterns stems from the analogy with 
another combinatorial concept: oriented matroids capture the essence of patterns arising in an algorithmic context. 
Specifically, oriented matroids generalize the structure of the signs of the minors of a matrix or the conformal orientations of spanning trees of a graph. Although a cycle pattern is more basic in nature, from an algorithmic point of view, several similar questions arise. 
For example, a cycle pattern is \emph{realizable} if it is associated with a weighted directed graph and captures which cycles have a positive, zero or a negative weight. 
Here, it is remarkable that a purely combinatorial criterion characterizes the realizable cycle patterns (\cref{thm:char1}). 
This is in contrast to oriented matroids for which realizability cannot be checked by a purely combinatorial criterion. 
However, the algorithmic problem of deciding the realizability of a cycle pattern given by an arithmetic circuit is hard (\cref{thm:realizablecomplexity}) which shows that cycle patterns still capture a rich structure. 
Furthermore, the range of integers required to realize all cycle patterns is big, 
as demonstrated by rather tight upper and lower bounds (\cref{thm:char1} and \cref{thm:expweights}). 
Finally, our results on the realizability problem reveal more general insights into the size of integer points in polyhedral fans in geometric combinatorics. 

\subsection{Structure theory for games on graphs}

The second main motivation for this paper is the problem of solving mean payoff games (MPGs), energy games (EGs) and parity games (PGs). 
These games are zero-sum two player games played on a directed graph, and they are closely related to each other: many problems related to mean payoff games and energy games are equivalent, and solving parity games can be reduced to solving mean payoff games. The decision problems related to solving these games are known to be in NP$\cap$coNP. 
However, the question whether there exists a polynomial-time algorithm for them has been open for decades.
We refer to \cite{fijalkow2023gamesgraphs} for a broad overview of related classes of games and their algorithmic complexity, including further references. 

Some advances on this question were made in the form of quasipolynomial parity game algorithms and a number of pseudopolynomial algorithms for mean payoff games and energy games. 
However, for the majority of these algorithms, examples have been constructed with running time close to the upper bound (quasipolynomial or exponential running time).

Many of these worst-case instances are handcrafted and considered as rare, they are usually very algorithm-specific, where one algorithm can often solve the hard instances for another. 
This is reinforced by the existence of a smoothed polynomial algorithm for mean payoff games \cite{loff_et_al_SmoothedGames}; 
the smoothed polynomial algorithm is fast on every game that is sufficiently far away from a certain set of ``non-generic'' game instances. 
We use cycle patterns to give a new quantification of which structures are hard independent of the specific algorithms. 
Cycle patterns capture the most fundamental combinatorial structure appearing in mean payoff games, as the winning condition of MPGs
is formulated in terms of the signs of cycles. 
On this level, one can consider a game as \emph{hard} if the derived cycle pattern has only exponential realizations, that is, all weight functions leading to the same cycle pattern use entries that exponential in the size of the game graph.
In contrast, a game with weights polynomial in the size of the graph can be solved in polynomial time by any of the existing pseudopolynomial algorithms. 
To apply this measure of complexity to large classes of algorithms we use the framework of \emph{linear decision trees} (\cref{sec:geometric}). 

As cycle patterns already capture a crucial part of the structure of an MPG, we investigate how much of the information of a weight function is needed to solve an MPG efficiently. 
We study this in two directions. 
On one hand, we consider potential preprocessing of an MPG to reduce the size of the weights and make algorithms more efficient. 
This is inspired by the seminal result~\cite{FrankTardos:1987} that uses a preprocessing preserving the structure of a problem and thereby making certain algorithms in combinatorial optimization strongly polynomial. 
It turns out that this does not work well in the context of MPGs while preserving the cycle pattern. 
On the other hand, we deduce how hard it becomes to solve an MPG when only the information of the cycle pattern is available. 
For this, we choose Boolean circuits as a suitable formalism to encode a cycle pattern. 
Specifically, assuming $RP\neq NP$, there cannot be a randomized polynomial time algorithm solving mean payoff games using only the cycle pattern (\cref{sec:solving-games-via-cycle-patterns}). 
We also use the encoding by Boolean circuits to understand complexity questions on realizations of cycle patterns (\cref{sec:complexity}). 
We show that checking realizability is coNP-complete, and that checking parity-realizability (i.e. whether a cycle pattern can be realized in a parity game) is coNP-complete even if a realization is given. 
Furthermore, distinguishing cycle patterns with given realizations is NP-complete. 

Finally, going back to the analogy with oriented matroids, we also draw motivation from the theory of polyhedra as a structure theory of linear programming. 
The decision problem for MPGs is equivalent to \emph{tropical linear programming}~\cite{AkianGaubertGuterman:2012} which leads to the corresponding structure theory of tropical polyhedra; see~\cite{Joswig:2022} for more in this direction.  
While the study of \emph{tropical oriented matroids}~\cite{ArdilaDevelin:2009} and \emph{abstract tropical linear programming}~\cite{Loho:2016} form a further motivation for introducing cycle patterns as another tool to analyze tropical linear inequality systems, the formal connection is not directly clear.

\subsection{Related work}

The problem of computing the set of all negative weight cycles has been the object of limited study. 
An algorithm to do so was developed in \cite{YamadaKinoshita:2002}, although it has been shown that it is NP-hard to enumerate all negative cycles \cite{KhachiyanBorosBorysElbassioniGurvich:2008}. 
The latter result was used to show NP-hardness for several problems for polyhedra \cite{BorosElbassioniGurvichTiwary:2011}. 
Another well-studied and related topic is that of cycle bases. 
Associating a (signed) incidence vector to each cycle of an (undirected) graph leads to the \emph{cycle space} formed by their linear span. 
Finding ``good'' bases of this space, like minimum weight w.r.t. a weight function, is a useful tool in several algorithms; we refer to \cite{KAVITHA2009199} for further details. 
Although the cycles in our context are usually directed and with an additional sign information, one still can define an analogue of a cycle space as discussed in \cref{sec:cyclespace}.

Parity games, mean payoff games and energy games are well-studied because of their interesting complexity status, and their occurrence in applications like model checking \cite{EMERSON2001491}, and-or-scheduling \cite{mohring_scheduling_2004} and hyper temporal networks \cite{Comin2014Hypertemporal}. 
Many algorithms have been developed that solve these games; we give a brief overview of the occurring approaches. 

First of all, strategy improvement is an algorithm that can be used to solve classes of games far broader than the games we discuss here. Its main idea is to start with some arbitrary strategy, find some valuation of the strategy, and augment the strategy with improving moves until optimality is reached. There are versions of strategy improvement specialized for mean payoff games \cite{DBLP:journals/dam/BjorklundV07, Dhingra2006HowGames}, energy games \cite{DBLP:journals/ijfcs/BrimC12} and parity games \cite{DBLP:conf/cav/VogeJ00}. Under some conditions, strategy improvement algorithms can be made symmetric for the two players \cite{DBLP:conf/icalp/ScheweTV15}.
Another approach that works for many types of games is value iteration. Its main idea is to maintain some value on each vertex of the graph, and then updating the values until some fixed point is reached. Value iteration also has versions specialized for mean payoff games \cite{ZWICK1996343}, energy games \cite{DBLP:journals/fmsd/BrimCDGR11} (a fast version is in \cite{dorfman_et_al:LIPIcs.ICALP.2019.114,austin2023erratatofasterdeterministic}) and for parity games \cite{Jurdzinski2017Succinct}. The latter can solve parity games in quasipolynomial time, using a structure called a universal tree.

There are also some more specialized algorithms for specific games. For example, the GKK algorithm \cite{gurvich1990cyclic} for mean payoff games augments potentials on vertices. It also has a symmetric version \cite{Ohlmann2022GKK}. Since solving mean payoff games is equivalent to some problems related to tropical linear programs, it is also possible to solve these games with a combinatorial simplex method \cite{Allamigeon2014TropicalSimplex}. For parity games, there are attractor-based algorithms like Zielonka's algorithm \cite{ZIELONKA1998135} or its quasipolynomial variants \cite{parys:LIPIcs.MFCS.2019.10}. Finally, there are parity game algorithms based on $\mu$-calculus or automata, like the first quasipolynomial algorithm \cite{Calude2017Quasipolynomial}.

To find some order in this sea of ideas and algorithms, some attempts have been made to highlight the underlying structure of some classes of algorithms for games on graphs. 
Strategy improvement has been considered more generally on lattices \cite{baldan_et_al:LIPIcs.CSL.2023.7,KohLoho:2021}.
Another related concept that describes the underlying structure of value iteration is universal graphs \cite{fijalkow:hal-03800510}. 
This generalizes the concept of universal trees in parity games. Even more general than these is the concept of monotonic graphs \cite{ohlmann2021monotonic}, which captures the essence of means to measure progress in strategy improvement, value iteration and attractor decompositions.

\subsection{Outline of the paper}
This paper is divided into two main parts. 
In the first part, we focus only on the graph-theoretic aspects. 
In \cref{sec:cycle_patterns}, we introduce the notion of cycle pattern, and discuss realizability and parity-realizability. 
Then we discuss the (lower) bounds for realizations.  
We also discuss the complexity of related problems in \cref{sec:complexity}.

The second part focuses on applying the concepts and results to the study of solving games on graphs. 
In \cref{sec:games} we remind the reader of the rules of these games, 
and show hardness of the latter problem in the cycle pattern framework. 
Finally, in \cref{sec:geometric}, combining a geometric point of view with the framework of linear decision trees, we draw geometric and algorithmic implications from the theory developed before. 
This highlights the difficulty of some underlying problems of solving games on graphs.

\section*{Acknowledgements}

We thank Antonio Casares, Nathana\"{e}l Fijalkow, Michael Joswig and Cedric Koh for insightful conversations, and Bruno Loff especially for explanations about the linear decision tree model.
This research benefited from the support of the FMJH Program PGMO. 

\section{Cycle patterns}
\label{sec:cycle_patterns}

We introduce the combinatorial structure of \emph{cycle patterns} and examine how this relates to the sets of positive and negative cycles in a weighted digraph. 

\subsection{Basics}

We collect basic terminology and the fundamental definitions. 

\subsubsection{Terminology}\label{sec:terminology}

Let $G = (V,E)$ be a directed graph, potentially with parallel edges and loops. 
While for games one can usually get rid of parallel edges and loops via preprocessing, we leave as much flexibility in the definition of a graph as possible. 
We will always assume that $G$ is strongly connected to simplify proofs.
We set $n = |V|$ and $m = |E|$ unless stated otherwise. We refer to the vertices of $G$ or a subgraph $H$ by $V(G)$ and $V(H)$, respectively.
A \emph{walk} is an alternating sequence $v_0, e_1, v_1, e_2, \ldots, v_k$ of vertices and directed edges such that for $1 \leq i \leq k$, the edge $e_i$ goes from $v_{i-1}$ to $v_i$. 
A \emph{(simple) cycle} is a walk such that each node has at most one in-going and at most one out-going edge and $v_0 = v_k$. 
For convenience of notation, we often regard a cycle as just a sequence of edges or a sequence of vertices.
For each subset $S$ of the edges of $G$ we denote its characteristic vector by $\chi(S) \in \{0,1\}^E$. 
Given a cycle $C$, we denote by $\chi(C)$ the characteristic vector of the set of edges of $C$.
Let $\Cs=\{C_1,C_2,\ldots, C_k \}$ be the set of all cycles in $G$. 
A \emph{cycle pattern} is a function $\psi \colon \Cs \to \{+,0,-\}$. 
We assume $G$ is \emph{weighted}, i.e., there is a function $w \colon E \to \RR$. We will often interpret the weight function as a vector $w \in \RR^E$. One can define the \emph{weight} of a cycle $C$ as the sum of the weights of the edges of the cycle: $w(C)=\sum_{e\in C\cap E}w(e)$. By only considering if the weight of a cycle is positive, zero or negative, the weight function gives rise to a cycle pattern $\psi_w$. 
In this case, we say that the cycle pattern is \emph{induced} by $w$. We also say that such a cycle pattern is \emph{realizable} with \emph{realization} $w$. 
A realization is \emph{integral} if its values are only integers.

We can describe the realizations of a cycle patterns as follows:

\begin{definition} \label{def:realization+cone}
    Given a realizable cycle pattern $\psi$, the set of its realizations forms the \emph{realization cone} in $\RR^{E}$ defined by the following inequalities:
\begin{alignat}{2}
    &\sum_{e\in C} w(e) > 0 &\quad& \forall C\in \Cs: \psi(C)=+\nonumber \, ,\\
    &\sum_{e\in C} w(e) < 0 && \forall C\in \Cs: \psi(C)=-\label{eq:cycle_cone} \, ,\\
    &\sum_{e\in C} w(e) = 0 && \forall C\in \Cs: \psi(C)=0 \, .\nonumber
\end{alignat}
Note that this cone is a set of the form $\{w \in \RR^{E} \colon Aw > 0, Bw = 0\}$, where $A,B$ are matrices with $|E|$ columns. 
The rows of $A$ are of the form $\chi(C)$ for all $C \in \Cs$ such that $\psi(C) = +$ and of the form $-\chi(C)$ for all $C \in \Cs$ such that $\psi(C) = -$.
The rows of $B$ are of the form $\chi(C)$ for all $C \in \Cs$ such that $\psi(C) = 0$.
\end{definition}

Finally, we also distinguish another class called \emph{parity-realizable} cycle patterns, which we will later relate to parity games. 
Whenever considering parity-realizability, to avoid confusion with normal cycle patterns, we refer to the function $w$ as \emph{priorities} of the edges, instead of weights. 
This follows the usual convention for parity games.
Given a priority function $w:E\to \mathbb{Z}$, we get the cycle pattern $\psi^w$, with $\psi^w(C)=+$ if the largest priority of $C$ is even, and $\psi^w(C)=-$ if its largest priority is odd.  
We say in that case that $\psi^w$ is \emph{parity-induced} by $w$, and that the cycle patterns we can obtain in this way are \emph{parity-realizable}. 
Every parity-realizable cycle pattern is also realizable: if $w$ is a priority function that parity-realizes $\psi$, then the weight function $w'$ given by $w'(e)=(-|V|)^{w(e)}$ gives a realization of $\psi$.

\subsubsection{The cycle space}\label{sec:cyclespace}

As a technical tool, we include a statement about cycle spaces. 
Denote the set of incoming and outgoing edges of a vertex $v$ by $v_{in}$ and $v_{out}$, respectively. We consider the transpose of the incidence matrix $M \in \RR^{E \times V}$ given by 
    \begin{equation}\label{eq:potential_matrix}
        M_{e,v} = \begin{cases}
        1 &\text{if $e \in v_{out}$},\\
        -1 &\text{if $e \in v_{in}$},\\
        0 &\text{otherwise}.
        \end{cases}
    \end{equation}
Note that the matrix $M$ has two nonzero entries per row, one equal to $1$ and one equal to $-1$. In literature, there exists a more relaxed notion of directed cycles which, among other things, allows for using directed edges in the `wrong' direction, at the cost of reversing its sign in the characteristic vector (see e.g. \cite[Ch. 14.2]{godsil_algebraic_2001}, \cite{KAVITHA2009199}). It is well-known that for that definition of cycle, the space spanned by the (characteristic vectors of) cycles of the graph is given by $\ker(M^T)$. If the graph is connected, then the dimension of $\ker(M^T)$ is $m-n+1$. This space is called \emph{flow space} or \emph{cycle space} of the graph . Since we only allow for cycles that follow the edge directions, our cycles will not always span the usual cycle space for arbitrary digraphs. This can happen for example when an edge is contained in an undirected cycle, but not in any directed cycles. However, in the following we show that, in the case of strongly connected graphs, the span of the cycles of $\Cs$ coincides with the common notion of cycle space.

\begin{lemma}\label{lem:dicyclebasis}
We have $\operatorname{span}\setOf{\chi(C)}{C\in \Cs} = \ker(M^T)$. In particular, \\ $\dim(\operatorname{span}\setOf{\chi(C)}{C\in \Cs})=m-n+1$.
\end{lemma}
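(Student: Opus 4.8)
The plan is to prove the two inclusions $\operatorname{span}\setOf{\chi(C)}{C\in \Cs} \subseteq \ker(M^T)$ and $\ker(M^T) \subseteq \operatorname{span}\setOf{\chi(C)}{C\in \Cs}$ separately; the dimension statement then follows immediately from the well-known fact (quoted in the text) that $\dim \ker(M^T) = m-n+1$ for a connected graph.

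For the easy inclusion, I would observe that for any directed cycle $C$ and any vertex $v$, the number of edges of $C$ entering $v$ equals the number leaving $v$ (both are $1$ if $v\in V(C)$ and $0$ otherwise). Reading off the definition of $M$ in \eqref{eq:potential_matrix}, this says exactly that $(M^T\chi(C))_v = \sum_{e\in v_{out}}\chi(C)_e - \sum_{e\in v_{in}}\chi(C)_e = 0$, so $\chi(C)\in\ker(M^T)$. Since $\ker(M^T)$ is a linear subspace, it contains the span of all such $\chi(C)$.

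The main work, and the place I expect the real obstacle, is the reverse inclusion: showing every vector in $\ker(M^T)$ is a linear combination of characteristic vectors of \emph{directed} cycles (following edge orientations), which is where strong connectivity must be used. I would argue as follows. Let $x\in\ker(M^T)$; by clearing denominators and scaling it suffices to treat $x$ with rational (indeed integer) entries, since the span is a linear space and a spanning set over $\RR$ need only span a set whose $\RR$-span is all of $\ker(M^T)$ — more carefully, I would show the integer vectors in $\ker(M^T)$ lie in the span, which is enough because they contain a basis. The condition $M^Tx=0$ says $x$ is a circulation: at every vertex the inflow equals the outflow. The standard flow-decomposition argument then applies: if $x\neq 0$, pick an edge $e_1$ with $x_{e_1}\neq 0$; first handle the case $x\ge 0$ (a nonnegative circulation), where from the head of $e_1$ one can always leave along another edge with positive $x$-value by conservation, and repeating this in a finite graph produces a directed closed walk, hence a directed cycle $C$ on which all $x$-values are positive. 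Subtract $\big(\min_{e\in C} x_e\big)\chi(C)$ from $x$; this keeps $x$ a nonnegative circulation, keeps entries integral if we started integral, and strictly decreases the support or the $\ell_1$-norm, so after finitely many steps we reach $0$ and have written $x$ as a nonnegative integer combination of $\chi(C)$'s for directed cycles $C\in\Cs$.

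It remains to reduce the general ($x$ of mixed sign) case to the nonnegative one, and this is exactly where strong connectivity enters. Given an arbitrary integer circulation $x$, for each edge $e$ with $x_e<0$ I would use strong connectivity to choose a directed path $P_e$ from the head of $e$ back to the tail of $e$; then $e$ together with $P_e$ forms a directed cycle $D_e\in\Cs$, and adding $|x_e|\,\chi(D_e)$ to $x$ makes the coordinate $e$ nonnegative without creating any new negative coordinates (it only increases other coordinates). Doing this for all negative edges yields $x' = x + \sum_{e : x_e<0} |x_e|\chi(D_e)$ with $x'\ge 0$, and $x'$ is still a circulation (each $\chi(D_e)\in\ker(M^T)$ by the easy inclusion). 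Applying the nonnegative case to $x'$ writes it as a nonnegative combination of directed-cycle characteristic vectors, and therefore $x = x' - \sum_{e:x_e<0}|x_e|\chi(D_e)$ is an integer linear combination of elements of $\setOf{\chi(C)}{C\in\Cs}$. Hence $\ker(M^T)\subseteq\operatorname{span}\setOf{\chi(C)}{C\in\Cs}$, completing the proof; the dimension claim is then inherited from $\dim\ker(M^T)=m-n+1$.
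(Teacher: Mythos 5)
Your proof is correct and follows essentially the same route as the paper's: the easy inclusion via flow conservation at each vertex, then using strong connectivity to close each negative-weight edge into a directed cycle so as to reduce to a nonnegative circulation, and finally the standard flow-decomposition argument peeling off directed cycles until the zero vector is reached. The only difference is your preliminary reduction to integer vectors, which is valid (the integer points of $\ker(M^T)$ contain a basis) but unnecessary, since the termination argument — each subtraction zeroes out at least one more coordinate — works just as well over $\RR$, as in the paper.
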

\begin{proof}
We want to show that $\operatorname{span}\setOf{\chi(C)}{C\in \Cs}$ is precisely the set of all $y \in \RR^{E}$ such that $M^Ty=0$, which means they satisfy the equalities
    \[
    \forall v \in V, \ \sum_{e\in v_{out}}y_e = \sum_{e'\in v_{in}}y_{e'} \, .
    \]
    If $C \in \Cs$, then the vector $\chi(C)$ satisfies these equalities (both sides are equal to $0$ if $v$ does not belong to $C$ and to $1$ if $v$ belongs to $C$), hence $\operatorname{span}\setOf{\chi(C)}{C\in \Cs} \subseteq \ker(M^T)$.  Conversely, suppose that $y \in \RR^{E}$ satisfies these equalities. We claim that we can get the zero vector by adding and subtracting multiples of the vectors of the form $\chi(C)$ from $y$. We do so by the following procedure: first, add multiples of the vectors $\chi(C)$ to make sure that all elements of $y$ are nonnegative (we can do so since every edge is contained in some cycle, because the graph is strongly connected). Next, we repeat the following: pick any positive element of $y$, say with index $(i,j)$. Since the totals of $y$ for the incoming and outgoing edges of vertex $j$ are the same, there must be an outgoing edge $(j,k)$ with $y_{(j,k)}>0$. In the same way, $k$ must have a positive outgoing edge, so continuing this we find a cycle $C$ on which $y$ is positive. Now we subtract $\left(\min \setOf{y_e}{e\in C}\right)\chi(C)$ from $y$ to get a nonnegative vector with at least one more element equal to zero. So the procedure must terminate with the zero vector, and we conclude that $y\in \operatorname{span}\setOf{\chi(C)}{C\in \Cs}$. We conclude that indeed $\operatorname{span}\setOf{\chi(C)}{C\in \Cs}= \ker(M^T)$. Therefore its dimension equals that of the usual cycle space/flow space, which for a connected graph is $m-n+1$.
\end{proof}

\subsection{Realizability of cycle patterns}\label{sec:realizability}
We discuss some necessary and sufficient conditions for a cycle pattern to be realizable. 
It may not be surprising that not every cycle pattern is realizable as the following example demonstrates. 
   
   \begin{figure}[htb]
        \centering
        \includegraphics[width=0.3\linewidth]{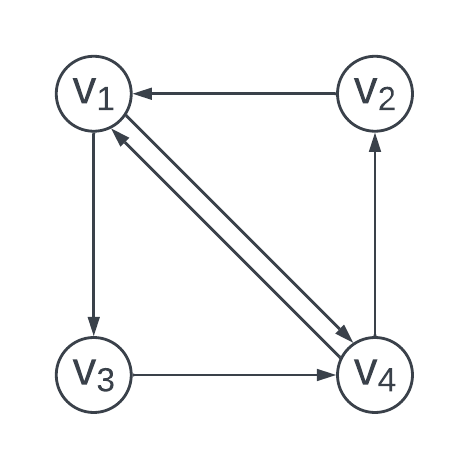}
        \caption{Digraph with a cycle pattern that is not realizable.}
        \label{fig:MPG_impossible}
    \end{figure}

\begin{example}
Suppose that we have some cycle pattern with $\psi(v_1,v_4)=\psi(v_1,v_3,v_4,v_2)=+$ and $\psi(v_1, v_4, v_2)=\psi(v_1, v_3, v_4)=-$ on the graph depicted in \cref{fig:MPG_impossible}. 
It is not hard to see that it is impossible to find a realization for this cycle pattern: 
we have two positive and two negative cycles, and both pairs of cycles contain exactly all the edges of the graph. 
So in a realization, the total sum of all edge weights would need to be both positive and negative at the same time, a contradiction.    
\end{example}

The construction of this example already highlights the crucial obstructions for realizability. We show that we can characterize realizability by existence of certain sequences of positive and negative cycles that contain the same edges. We now make this more formal.

 \begin{definition}
     Given a cycle pattern $\psi$, we call two nonempty sequences of cycles $C_1,\dots,C_p$ and $D_1,\dots,D_q$
 an \emph{opposing pair} if 
 \begin{equation}\label{eq:signed-circuit-sequences}
        P = \{\psi(C_1),\dots,\psi(C_p)\}\subseteq \{0,+\}, \ Q = \{\psi(D_1),\dots,\psi(D_q)\}\subseteq \{0,-\}, \ P \cup Q \neq \{0\} \,.    
\end{equation}
 \end{definition}
Now we are ready to characterize realizability with the following theorem:

\begin{theorem}\label{thm:char1}
    The following statements are equivalent:
    \begin{enumerate}
        \item The cycle pattern $\psi$ is realizable. 
        \item The cycle pattern $\psi$ has an integral realization $w$ with 
        $\max_{e\in E}|w(e)|\leq (n+1)^{(m - n + 1)/2}$.
        \item There does not exist an opposing pair $C_1,\dots,C_p$ and $D_1,\dots,D_q$ with
        \begin{equation}\label{eq:equal_edges}
        \sum_{i=1}^{p} \chi(C_i) = \sum_{j=1}^{q} \chi(D_j) \enspace .
        \end{equation}     
    \end{enumerate}
\end{theorem}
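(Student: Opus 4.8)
The plan is to prove the cycle of implications $(2) \Rightarrow (1) \Rightarrow (3) \Rightarrow (2)$, where the first is trivial (an integral realization is a realization), the second is an easy contrapositive argument, and the third — producing a small integral realization from the absence of an opposing pair — is the real work and will rely on a Farkas-type alternative together with a bound on the size of vertices of a rational polyhedron.

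\medskip

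\textbf{Step 1: $(1) \Rightarrow (3)$.} Suppose $w$ is a realization and, for contradiction, that an opposing pair $C_1,\dots,C_p$, $D_1,\dots,D_q$ exists with $\sum_i \chi(C_i) = \sum_j \chi(D_j)$. Summing edge weights, $\sum_i w(C_i) = \langle w, \sum_i \chi(C_i)\rangle = \langle w, \sum_j \chi(D_j)\rangle = \sum_j w(D_j)$. But since $P \subseteq \{0,+\}$ we have $\sum_i w(C_i) \geq 0$, with equality only if every $C_i$ has $\psi(C_i) = 0$; symmetrically $\sum_j w(D_j) \leq 0$ with equality only if every $D_j$ has $\psi(D_j)=0$. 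The condition $P \cup Q \neq \{0\}$ forces at least one of these inequalities to be strict, so $0 \le \sum_i w(C_i) = \sum_j w(D_j) \le 0$ with at least one strict inequality, a contradiction.

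\medskip

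\textbf{Step 2: $(3) \Rightarrow (2)$, the main part.} The realization cone from \cref{def:realization+cone} is $\{w : Aw > 0,\ Bw = 0\}$. I want to show condition (3) implies this system is feasible, and then extract a small integral point. First reduce to a pure feasibility question: the strict system $Aw>0, Bw=0$ is feasible iff the system $A w \ge \mathbf{1}$, $Bw = 0$ is feasible (scale up a strict solution), which is a rational polyhedron $R$. Assume $R = \emptyset$; by a Farkas lemma / Motzkin transposition argument there exist $y \ge 0$ (indexed by the $+/-$ cycles) and $z$ (indexed by the $0$-cycles), not with $y$ alone zero, such that $y^\top A + z^\top B = 0$ and $y^\top \mathbf 1 > 0$. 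Writing this back in terms of cycles: we get a nonnegative combination of the vectors $\{+\chi(C) : \psi(C)=+\}\cup\{-\chi(C):\psi(C)=-\}\cup\{\pm\chi(C):\psi(C)=0\}$ equal to zero, with the coefficient mass on the non-zero-sign cycles being positive. Clearing denominators makes all coefficients integral; an integer coefficient $k$ on a cycle $C$ is interpreted as listing $C$ with multiplicity $k$. Moving the negative-sign and "minus" zero-sign terms to the other side yields exactly two sequences of cycles $C_1,\dots,C_p$ (the ones with a $+\chi$ contribution, i.e.\ $\psi = +$ or $\psi=0$) and $D_1,\dots,D_q$ (those with a $-\chi$ contribution, i.e.\ $\psi=-$ or $\psi=0$) with $\sum_i\chi(C_i) = \sum_j \chi(D_j)$, and the positivity of the total $y$-mass guarantees $P\cup Q \neq \{0\}$ — an opposing pair, contradicting (3). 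Hence $R \neq \emptyset$, so $\psi$ is realizable.

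\medskip

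\textbf{Step 3: the size bound.} Finally I need the integral realization to satisfy $\max_e |w(e)| \le (n+1)^{(m-n+1)/2}$. Since $R$ is a nonempty rational polyhedron, it has a vertex or, if it contains a line, a point of small norm on a minimal face; the key observation is that by \cref{lem:dicyclebasis} all the defining vectors $\chi(C)$ lie in the cycle space $\ker(M^\top)$ of dimension $m-n+1$, so the realization cone effectively lives in a $(m-n+1)$-dimensional subspace and its relevant face is cut out by at most $m-n+1$ linearly independent constraint rows, each a $0/\pm1$ vector. By Cramer's rule / Hadamard's inequality applied to such a subsystem (an $(m-n+1)\times(m-n+1)$ matrix with entries in $\{-1,0,1\}$ on the left and a right-hand side that is $\mathbf 1$ or $\mathbf 0$), one gets an integral solution whose entries are bounded by the corresponding subdeterminants, and Hadamard gives $|{\det}| \le (m-n+1)^{(m-n+1)/2}$; working directly in the cycle-space coordinates and being slightly careful one improves the base to $n+1$ (using that each cycle has at most $n$ edges, so each column of the relevant matrix has at most $n$ nonzeros, sharpening Hadamard). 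Taking such a vertex and clearing to integers delivers the claimed bound.

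\medskip

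\textbf{Main obstacle.} The conceptual heart is Step 2 — correctly translating the Farkas certificate of infeasibility into precisely the combinatorial object "opposing pair", in particular handling the zero-sign cycles (which may legitimately appear on both sides) and making sure the condition $P\cup Q \neq \{0\}$ comes out exactly from $y^\top\mathbf 1 > 0$. The size bound in Step 3 is technically the most calculation-heavy, and squeezing the base down to $n+1$ (rather than the cruder $m-n+1$ or $m$) is the delicate quantitative point; it hinges on exploiting sparsity of the cycle-characteristic vectors inside the cycle space rather than naive Hadamard on all of $\RR^E$.
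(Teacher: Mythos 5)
Your proposal is correct and follows essentially the same route as the paper's proof: the same scalar-product contradiction for $(1)\Rightarrow(3)$, the same Farkas/transposition argument turning an infeasibility certificate $y\ge 0$, $y\neq 0$, $y^\top A+z^\top B=0$ into an opposing pair, and the same Cramer--Hadamard bound on a minimal face using \cref{lem:dicyclebasis} to cap the rank at $m-n+1$ and the sparsity of cycle vectors to get base $n+1$. The only difference is cosmetic (you arrange the implications as a single cycle $(2)\Rightarrow(1)\Rightarrow(3)\Rightarrow(2)$ rather than proving $(3)\Rightarrow(1)$ and $(1)\Rightarrow(2)$ separately).
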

For example, in Figure \ref{fig:MPG_impossible}, the third condition would be falsified by taking the two positive cycles and the two negative cycles as an opposing pair $C_1,C_2,D_1,D_2$. We call such an opposing pair a \emph{non-realizability witness}. Now we prove the theorem:

\begin{proof}
    ($1 \implies 3$) Suppose for contradiction that $\psi$ is realizable with realization $w$, and that we have an opposing pair as in the third statement. 
    Since $P\cup Q\neq \{0\}$, there must be a \break $+$-cycle in $P$ or a $-$-cycle in $Q$. 
    Assume without loss of generality that $P$ has a $+$-cycle. Since  the sequence $C_1, \ldots, C_p$ only consists of $+$-cycles and $0$-cycles, we have $w(C_i) = \sum_{e\in C_i}w(e)\geq 0$ for every cycle~$C_i$. 
    Hence $\sum_{i=1}^{p} w(C_i) >0$. Similarly, since $D_1, \ldots, D_q$ consists only of $0$-cycles and $-$-cycles, we get $\sum_{j=1}^{q} w(D_j) \leq 0$.
    Finally, taking the scalar product of both sides of \cref{eq:equal_edges} with the weight vector $w$ gives
    \[
    \sum_{i=1}^{p} w(C_i) = \sum_{j=1}^{q} w(D_j) \, ,
    \]
    which yields a contradiction.  

    \smallskip
    
    ($3\implies 1$) To get a contradiction, we suppose there is no realization of $\psi$. 
    We will derive that there must exist an opposing pair $C_1, \ldots, C_p$ and $D_1,\ldots, D_q$ satisfying \cref{eq:equal_edges}. 
    
    If there is no realization of $\psi$, that means that the realization cone $\{w \in \RR^{E} \colon Aw > 0, Bw = 0\}$ defined in \cref{eq:cycle_cone} is empty. 
    Since the cone is empty, the smaller cone composed of only rational vectors $\{w \in \QQ^{E} \colon Aw > 0, Bw = 0\}$ is also empty. Furthermore, note that the matrix $A$ has at least one row because the zero vector is not in the realization cone, and note that the matrices $A,B$ have integer coefficients.
    Therefore, a version of Farkas' lemma with strict inequalities (sometimes called a ``transposition theorem'') given, e.g., in \cite[Section~1.6]{StoerWitzgall:1970} states that the cone \break $\{w \in \QQ^{E} \colon Aw > 0, Bw = 0\}$ is empty if and only if there exist rational vectors $y,z$ such that $y \ge 0$, $y \neq 0$, and $y^TA + z^TB = 0$. Recall that the rows of $A$ and $B$ each correspond to a cycle of $\Cs$, so we can likewise say that $(y,z)$ belongs to $\QQ^{\Cs}$. We can then index the entries of $y,z$ by the corresponding cycles. Furthermore, by scaling, we can suppose that the entries of the vectors $y,z$ are integers. We now construct two sequences of cycles $C_1,C_2, \ldots, C_p$ and $D_1, D_2, \ldots, D_q$ as follows:
\begin{itemize}
    \item Every cycle $C\in \psi^{-1}(+)$ occurs $y_C$ times in the sequence $C_1,\ldots, C_p$.
    \item Every cycle $C\in \psi^{-1}(-)$ occurs $y_C$ times in the sequence $D_1,\ldots, D_p$.
    \item Every cycle $C\in \psi^{-1}(0)$ occurs $z_C$ times in the sequence $C_1,\ldots, C_p$ if $z_C > 0$ and $|z_C|$ times in the sequence $D_1,\ldots, D_q$ if $z_C < 0$.
\end{itemize}
Let $P = \{\psi(C_1),\dots,\psi(C_p)\}\subseteq \{0,+\}$ and $Q = \{\psi(D_1),\dots,\psi(D_q)\}\subseteq \{0,-\}$. Then, the fact that $y \neq 0$ implies $P \cup Q \neq \{0\}$. Furthermore, the equality $y^TA + z^TB = 0$ gives \break $\sum_{i=1}^{p} \chi(C_i) = \sum_{j=1}^{q} \chi(D_j)$, and thus we have our non-realizability witness.

    \smallskip

    ($1\implies 2$) 
    We consider again the realization cone. 
    If this cone is nonempty, then, by suitable scaling, the polyhedron 
    \begin{equation} \label{eq:realization-polyhedron}
        \mathcal{P} = \{w \in \RR^{E} \colon Aw \ge \bm{1}, Bw = 0\}    
    \end{equation}
     is also nonempty, where we denote $\bm{1} = (1,1,\dots,1)$. 
     Following the proof of \cite[Thm 10.1]{Schrijver:1987}, we show that this cone has an integer point that satisfies the bound of our theorem. 
     Take any minimal face $F$ of $\mathcal{P}$. 
     This face is defined by $A'x=b'$, where $A'$ consists of a number of linearly independent rows of $\begin{bmatrix}
        A\\
        B
    \end{bmatrix}$ and $b'$ consists of the related 1's and 0's on the right side of the (in)equalities of $\mathcal{P}$. Since $\operatorname{im}((A')^T)\subseteq \operatorname{span}\setOf{\chi(C)}{C\in \Cs}$, it follows from Lemma \ref{lem:dicyclebasis} that $\operatorname{rank}(A')\leq m-n+1$. Since $A'$ has full rank, we may rearrange its columns so that $A'=\begin{bmatrix}
        A'' & A'''
    \end{bmatrix}$, where $A''$ is invertible. Then the point $w^*=\begin{bmatrix}
        (A'')^{-1}b'\\
        0
    \end{bmatrix}$ is in $F$. Cramer's rule tells us that for the $e$-row of $A''$, we get $w^*_e=\det(A''_e)/\det(A'')$, where $A''_e$ is obtained from $A''$ by replacing the $e$-column by $b'$.

    Since $w^* \in F \subseteq \mathcal{P}$, the scaled vector $\hat{w} = \det(A'')w^*$ also belongs to the cone of all realizations $\{w \in \RR^{E} \colon Aw > 0, Bw = 0\}$ and its coordinates are integers given by $\hat{w}_e = \det(A''_e)$ for edges that have an associated column of $A''$, and $\hat{w}_e=0$ for all other edges. We can bound these by Hadamard's inequality, we have $|\hat{w}_e| \le \prod_{C}\|u_C\|_2$, where $u_C$ is the row of $A''_e$ corresponding to cycle $C$. We have $\|u_C\|_2 \le \sqrt{n+1}$ for any of the (at most) $m - n + 1$ rows of $A''_e$: that is because any cycle has at most $n$ edges, so there are at most $n+1$ entries of $u_C$ equal to $-1$ or $1$. Hence, $|\hat{w}_e| \le (n+1)^{(m-n+1)/2}$ for every $e$.\\
    ($2\implies 1$) Trivial.
    \end{proof} 

    The proof of the theorem tells us that solving the feasibility problem for the linear inequality system in \eqref{eq:realization-polyhedron} actually allows one to compute a realization of a cycle pattern, if it exists. However, the linear program may be very large.

\subsection{About the size of non-realizability witnesses}
\label{sec:witnesssize}

    We can derive some bounds on the size of a non-realizability witness with similar techniques to the proof of Theorem \ref{thm:char1}.

\begin{proposition}
    Given a non-realizable cycle pattern $\psi$, there is a witness with at most $m-n+2$ distinct cycles. 
\end{proposition}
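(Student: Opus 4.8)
The plan is to revisit the Farkas-certificate argument from the proof of Theorem~\ref{thm:char1} ($3 \implies 1$), but to extract a \emph{sparse} certificate rather than an arbitrary one. Recall that non-realizability gives us rational vectors $y \geq 0$, $y \neq 0$, $z$ with $y^TA + z^TB = 0$; the multiset of cycles appearing in the witness is exactly the support of $(y,z)$ (scaled to be integral), so bounding the number of distinct cycles amounts to bounding $|\operatorname{supp}(y,z)|$. The key observation is that the set of valid certificates is itself a polyhedral cone, so we may pass to a vertex (extreme ray) of a suitable normalization of that cone, and a standard rank argument then caps the support size.

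Concretely, I would argue as follows. Consider the cone $\mathcal{K} = \{(y,z) : y \geq 0,\ y^TA + z^TB = 0\}$ in $\RR^{\Cs}$, which is nonempty and contains a nonzero point by the transposition theorem. Split each $z_C$ into $z_C^+ - z_C^-$ with $z_C^\pm \geq 0$ so that $\mathcal{K}$ becomes a pointed cone in the nonnegative orthant, cut it with the normalizing hyperplane $\sum_C y_C + \sum_C (z_C^+ + z_C^-) = 1$, and take a vertex $v$ of the resulting polytope. At a vertex, the number of tight inequalities among the nonnegativity constraints that are \emph{not} tight must be at most the number of equality constraints that are linearly independent there; equivalently, the support of $v$ has size at most (number of independent equality rows) $+ 1$. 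The equality constraints are the $m$ rows of $y^TA + z^TB = 0$ (one per edge) together with the normalization, but $A$ and $B$ have rows that are characteristic vectors of cycles, so the row space of $\begin{bmatrix} A \\ B \end{bmatrix}$ lies in $\operatorname{span}\{\chi(C) : C \in \Cs\}$, which by Lemma~\ref{lem:dicyclebasis} has dimension $m - n + 1$. Hence the rank of the full equality system is at most $(m-n+1) + 1 = m - n + 2$, so the vertex $v$ has at most $m-n+2$ nonzero coordinates. Scaling $v$ to integers and reading off the cycles in its support (as in the original proof) yields an opposing pair that is a non-realizability witness with at most $m-n+2$ distinct cycles.

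The one point that needs care — and the main obstacle — is making sure that after splitting $z = z^+ - z^-$ and passing to a vertex, the support still yields a \emph{legitimate} opposing pair: we need $P \cup Q \neq \{0\}$, i.e. the witness must genuinely involve a $+$-cycle or a $-$-cycle, not just $0$-cycles. This is exactly the condition $y \neq 0$, so I must ensure the chosen vertex has some $y_C > 0$. If the normalization above admitted a vertex supported entirely on the $z^\pm$ coordinates, that would correspond to $y = 0$, contradicting that such $(0,z)$ with $z^TB = 0$ forces (after recombining) $z^+ - z^- $ in the kernel of $B^T$-transpose and does not certify infeasibility; so I would instead normalize only on the $y$-coordinates, $\sum_C y_C = 1$, which is feasible precisely because a valid certificate with $y \neq 0$ exists, and re-run the vertex argument on that slice. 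A secondary subtlety is that a vertex of $\{(y,z^+,z^-) : y \geq 0, z^\pm \geq 0, \text{equalities}\}$ might have $z_C^+$ and $z_C^-$ both positive for some $C$; but then subtracting the common part keeps us in the cone and only decreases support, so we may assume $z_C^+ z_C^- = 0$ for all $C$, and each such $C$ contributes just one distinct cycle to the witness. Putting these together gives the bound $m - n + 2$.
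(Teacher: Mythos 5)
Your proposal is correct and follows essentially the same route as the paper: both pass to a minimal face/vertex of the polyhedron of Farkas certificates (normalized so that $y\neq 0$ is enforced) and bound the support of a basic solution by the rank of the equality system, which is at most $(m-n+1)+1$ by \cref{lem:dicyclebasis}. The splitting of $z$ into $z^{+}-z^{-}$ versus keeping $z$ free, and the equality versus inequality normalization of $y$, are only cosmetic differences.
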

\begin{proof}
    Looking carefully at the proof step $(3\implies 1)$, we can replace the condition $y\neq 0$ by $\bm{1}^{T}y\geq 1$, since we want $y$ to be integer. 
    Given that cyle pattern $\psi$ is not realizable, the inequalities $y\geq 0, \bm{1}^Ty\geq 1,y^TA+z^TB=0$ define a nonempty polyhedron. Take any facet of this polyhedron, it is defined by some subset of the rows of the following equation:
    \begin{equation}\label{eq:witnesscone}
        \begin{bmatrix}
        A^T & B^T\\
        I & 0\\
        \mathbf{1}^T & 0\\
    \end{bmatrix}\begin{bmatrix}
        y\\z
    \end{bmatrix}=\begin{bmatrix}
        0\\
        0\\
        1
    \end{bmatrix}
    \end{equation}
    Say the facet is defined by $A'\begin{bmatrix}
        y\\z
    \end{bmatrix}=b'$. Then like before we can say $A'=\begin{bmatrix}
        A'' & A'''
    \end{bmatrix}$, with $A''$ invertible, yielding the rational point $(y,z)=((A'')^{-1}b', 0)$ in the polyhedron (note: the right hand side is denoted in this way for convenience of notation, actual order of the elements may be different). However, looking at the matrix from \cref{eq:witnesscone}, at most $m-n+2$ linearly independent rows are not unit vectors. It follows that at most $m-n+2$ rows of $A''$ are not unit vectors. Any row of $A''$ that is a unit vector dictates that one element of $(A'')^{-1}b'$ is $0$, and therefore $(A'')^{-1}b'$ can have at most $m-n+2$ nonzero elements. If we then scale $(A'')^{-1}b'$ to an integer vector, this gives us a non-realizability witness similar to the proof of Theorem \ref{thm:char1}, with at most $m-n+2$ distinct cycles. That concludes the proof. 
\end{proof}
    
    However, this does not imply that there always exist non-realizability witnesses with \break $p+q\leq m-n+2$. It is not even clear that its size can be bounded by a polynomial in $m$ and $n$. 
    We give an exemplary sketch that it may actually be exponential in $m$ and $n$.

    \begin{figure}[H]
        \centering
        \includegraphics[width=\linewidth]{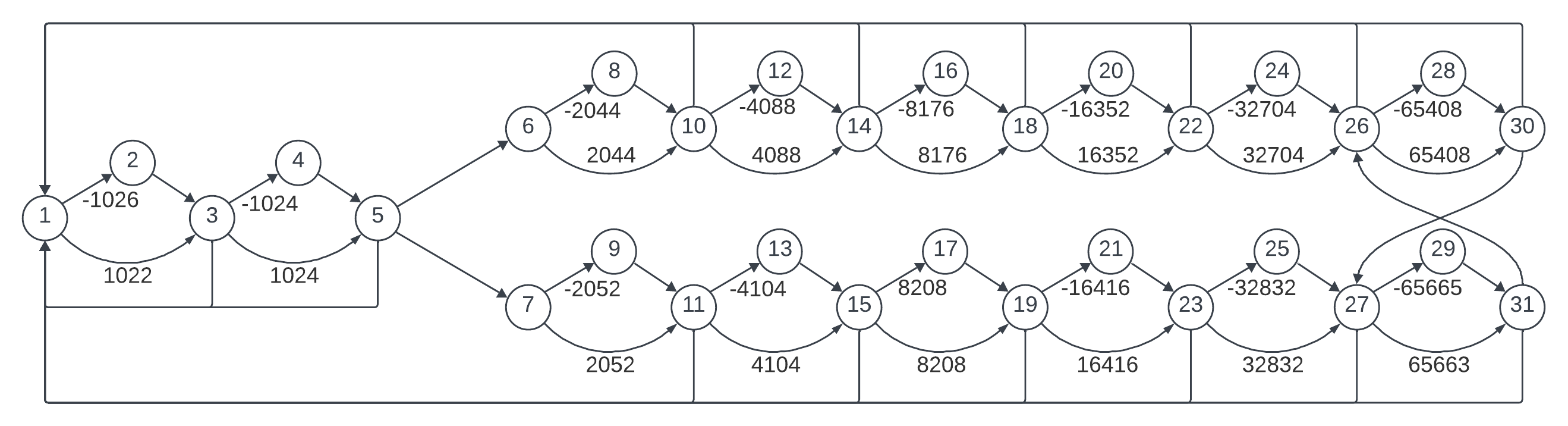}
        \caption{Graph that has a cycle pattern for which only large non-realizability witnesses exist. Edge weights are shown on all the edges that have nonzero weight.}
        \label{fig:farkas_limit}
    \end{figure}
    Take the graph in \cref{fig:farkas_limit}. Let $\psi_w$ be the cycle pattern induced by its weights. There are no 0-cycles w.r.t. $\psi_w$. Let $\psi$ be the cycle pattern that is the same as $\psi_w$ for almost all cycles, except that the sign is reversed for the following cycles:
    \begin{table}[H]
        \centering
        \begin{tabular}{c|c|c}
           $C$  &  $\psi_w(C)$ & $\psi(C)$\\ \hline
           $(27,31,26,28,30)$  & $+$ & $-$\\
            $(27,29,31,26,30)$ & $-$ & $+$\\
            $(1,3,5,7,11,15,19,23,25,27,31,26,28,30)$ & $+$ & $-$\\
            $(1,2,3,4,5,7,9,11,13,15,17,19,21,23,27,29,31,26,30)$ & $-$ & $+$\\
            $(1,3,5,6,10,14,18,22,24,26,30,27, 29,31)$ & $-$ & $+$\\
            $(1,2,3,4,5,6,8,10,12,14,16,18,20,22,26,28,30,27,31)$ & $+$ & $-$
        \end{tabular}
    \end{table}
    There are no 0-cycles in this cycle pattern. Therefore we can simply use an integer linear program to find a non-realizability witness with $p+q$ minimized. Solving this tells us that $\psi$ is not realizable, and the smallest witness has size $p+q=64$ \footnote{The graph shown in Figure \ref{fig:farkas_limit} is part of a family of graphs $(G_n)_{n\in\mathbb{N}}$. We claim that the smallest non-realizability witness for $G_n$ has $p+q=2^n$ for $n\geq 2$. This is confirmed experimentally for $n\leq 10$, see \url{https://github.com/MatthewMaat/Cycle_Patterns}.}, contrary to $m-n+2=60-31+2=31$.

\subsection{Parity-realizability}

As mentioned before, every cycle pattern that is parity-realizable is also realizable. 
However, the converse is not true, as shown in the following construction.

\begin{theorem} \label{thm:non-parity-realizable}
    There are realizable cycle patterns that are not parity-realizable.
\end{theorem}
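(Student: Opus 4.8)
The plan is to exhibit a concrete realizable cycle pattern and then show it admits no priority function. I would look for the smallest possible obstruction, since parity-induced patterns satisfy strong combinatorial constraints that generic realizable patterns do not. The key structural fact is that in a parity-realization, the sign of a cycle $C$ is determined \emph{solely by the single highest-priority edge on $C$}: if two cycles $C, C'$ share an edge $e$, and $e$ happens to be the maximum-priority edge of \emph{both}, then $C$ and $C'$ must have the same sign. More generally, parity-induced patterns are insensitive to which edges accompany the top edge. The strategy is to build a strongly connected graph with a family of cycles whose signs, although realizable by weights, force two cycles to simultaneously have their maximum on a common edge yet carry opposite signs — contradicting any priority assignment.

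Concretely, I would take a small graph (a handful of vertices with a few extra parallel or chord edges to keep it strongly connected) carrying cycles $C_1, \dots, C_k$. First I would write down the desired pattern $\psi$ by hand and verify realizability using \cref{thm:char1}: check that no opposing pair of cycles with equal edge-sums exists, or more simply just exhibit an explicit weight vector $w$ with $\psi = \psi_w$ (this immediately gives realizability and sidesteps the characterization). Second, I would argue non-parity-realizability directly: suppose a priority function $p$ exists. For each cycle, identify its highest-priority edge; a counting/pigeonhole argument on which edges can be ``tops'' combined with the sign constraints of $\psi$ yields a contradiction. The cleanest version of this is to arrange three cycles $C_1, C_2, C_3$ such that $\psi(C_1) = \psi(C_3) = +$ and $\psi(C_2) = -$, where $C_2$'s edge set is ``squeezed'' between those of $C_1$ and $C_3$ in a way that forces the top edge of $C_2$ to coincide with the top edge of $C_1$ or $C_3$ — for instance if $C_2 \subseteq C_1 \cup C_3$ and any edge can be the unique maximum.

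The main obstacle is the design of the graph and cycle family: I need the pattern to be realizable by real weights (an easy, almost unconstrained linear feasibility condition) while being ruled out by the rigid ``max edge determines sign'' rule of parity. These two requirements pull in opposite directions, so the construction must be chosen carefully — probably by starting from the parity obstruction I want (two cycles sharing their top edge but with opposite required signs, which is impossible for priorities) and then checking that when we drop to plain weights this obstruction disappears, i.e.\ that the edges \emph{other} than the shared one can absorb the sign difference. Verifying realizability is then the routine part (exhibit $w$ explicitly), and verifying non-parity-realizability reduces to a short case analysis over which edge on each relevant cycle is assigned the largest priority.

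Once the graph is fixed, the write-up would be: (i) present the graph and the pattern $\psi$ in a table of cycles with their signs; (ii) give an explicit integral weight vector realizing $\psi$; (iii) assume a priority function $p$, let $e_i$ be the (a) maximum-priority edge of $C_i$, and derive that some pair of oppositely-signed cycles must share their maximum edge, contradicting the definition of $\psi^p$. I expect step (i)--(ii) to require a bit of trial and error but no deep argument, and step (iii) to be a clean finite check.
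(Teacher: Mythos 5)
Your high-level identification of the obstruction is right: in a parity-realization, the sign of a cycle is governed by its single highest-priority edge, so the task is to design a realizable pattern that cannot respect this rigidity. However, as written, the proposal has two genuine gaps. First, the actual construction — the graph, the pattern, and the realizing weight vector — is never produced; you explicitly defer it to ``trial and error,'' but that is where the entire content of the theorem lives. Second, and more substantively, your proposed local mechanism does not force a contradiction. You suggest three cycles with $\psi(C_1)=\psi(C_3)=+$, $\psi(C_2)=-$ and $C_2 \subseteq C_1 \cup C_3$, hoping this forces the top edge of $C_2$ to coincide with that of $C_1$ or $C_3$. It does not: the maximum-priority edge of $C_2$ certainly lies on $C_1$ or $C_3$, but nothing prevents $C_1$ (resp.\ $C_3$) from having a strictly higher-priority edge elsewhere, so the signs need not collide. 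A priority function could, for example, make some edge of $C_1 \setminus C_2$ the global top. The containment alone is not an obstruction.

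The paper sidesteps this by arguing \emph{globally}: in any parity-realization there is an edge $e$ of globally maximum priority, and $e$ is automatically the top edge of \emph{every} cycle containing it, so all such cycles must share the sign of $e$'s parity. Hence it suffices to exhibit a realizable pattern in which \emph{every} edge lies on at least one $+$-cycle and at least one $-$-cycle. The paper does this on the complete digraph $K_4$ with weights $3$ on edges between two designated pairs and $-2$ on the rest, then checks by symmetry that both edge types lie on cycles of both signs. Your plan could be repaired by replacing the three-cycle squeeze with this global ``no edge can be the top'' criterion (which is precisely the $S = E$ instance of \cref{thm:charPG}), and then doing the actual work of finding a graph and weights where the criterion fails; that part is not routine and is the substance of the proof.
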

\begin{proof}
    Consider the complete directed graph on $4$ vertices, with weights given as in \cref{fig:MPGpattern}. In this example, the edges between two pairs of vertices get weight $3$, and the other edges get weight $-2$. We argue that every edge is contained in a cycle with positive weight and in a cycle with negative weight. Because of symmetry, there are only two types of edges we need to check this for.
    Firstly, the edge $(v_1,v_2)$ is part of the cycle $v_1,v_2, v_4$ with weight $-1$ and of the cycle $v_1, v_2, v_3, v_4$ with weight $2$. 
    Secondly, $(v_1, v_4)$ is in cycle $v_1, v_4, v_2$ with weight $-1$ and in cycle $v_1, v_4, v_2, v_3$ with weight $2$. So all edges are in a positive and negative cycle. Now we argue that this implies that $\psi_w$ is not parity-realizable. Suppose there is a parity realization $w'$ of $\psi_w$, and suppose $w'(e)$ is the highest priority of $w'$. If $w'(e)$ is even, then every cycle $C$ containing $e$ must have $\psi_w(C)=+$, which is false. Likewise, if $w'(e)$ is odd, then then every cycle $C$ containing $e$ must have $\psi_w(C)=-$, which is also not true. We conclude that such a priority function $w'$ cannot exist.
\end{proof}

\begin{figure}[htb]
        \centering
        \includegraphics[width=0.4\linewidth]{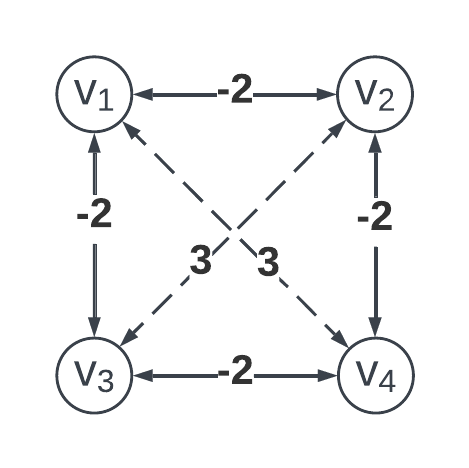}
        \caption{Complete graph on 4 vertices showing the example for \cref{thm:non-parity-realizable}}
        \label{fig:MPGpattern}
    \end{figure}

We see that, in a parity-realizable cycle pattern, there must always be an edge such that all cycles containing it have the same sign. Even more, if some set $S$ of edges is deleted from the graph, there will still be an edge with the highest priority in the parity-realization. This edge then must still have the same property that all cycles containing it have the same sign. This idea, in fact, precisely characterizes which cycle patterns are parity-realizable. This is formalized in the following theorem.

\begin{theorem}\label{thm:charPG}
    A cycle pattern $\psi$ is parity-realizable if and only if the following two conditions hold:
    \begin{enumerate}[label=(\roman*)]
        \item $\psi$ has no $0$-cycles.
        \item For all nonempty $S\subseteq E$, there exists an edge $e\in S$ such that all cycles $C$ with $e \in C$ and $C \subseteq S$ have the same sign. 
    \end{enumerate}
\end{theorem}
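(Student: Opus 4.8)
The plan is to prove both directions of the equivalence separately, with the forward direction being essentially the observation already sketched in the text before the theorem, and the backward direction requiring an inductive construction of a priority function.

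\textbf{Forward direction (parity-realizable $\Rightarrow$ (i) and (ii)).} Suppose $\psi = \psi^{w}$ for some priority function $w \colon E \to \mathbb{Z}$. Condition (i) is immediate: the sign of a cycle $C$ under $\psi^{w}$ is determined by the parity of $\max_{e \in C} w(e)$, which is always even or odd, never ``zero''. For condition (ii), let $\emptyset \ne S \subseteq E$. Consider the edge $e \in S$ with the highest priority $w(e)$ among edges of $S$ (break ties arbitrarily). Then any cycle $C$ with $e \in C$ and $C \subseteq S$ has $\max_{e' \in C} w(e') = w(e)$, so $\psi(C)$ is determined solely by the parity of $w(e)$ and is therefore the same for all such $C$. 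This is the easy half.

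\textbf{Backward direction ((i) and (ii) $\Rightarrow$ parity-realizable).} This is the substantive part. The idea is to peel off edges one at a time in order of \emph{decreasing} priority, using condition (ii) repeatedly, and then assign the priorities in the reverse (increasing) order so that the peeling order becomes consistent with a genuine parity valuation. Concretely, I would build a sequence $E = S_m \supsetneq S_{m-1} \supsetneq \dots \supsetneq S_1 \supsetneq S_0 = \emptyset$ together with an enumeration $e_m, e_{m-1}, \dots, e_1$ of the edges, where at step $k$ (starting from $k = m$) we apply condition (ii) to the current edge set $S_k$ to obtain an edge $e_k \in S_k$ such that all cycles through $e_k$ lying entirely inside $S_k$ share a common sign $\sigma_k \in \{+,-\}$; we then set $S_{k-1} = S_k \setminus \{e_k\}$ and continue. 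Now I would assign priorities in increasing order of ``removal time'': edge $e_1$ (removed last) gets the lowest priority, $e_m$ (removed first) the highest. More precisely, process $k$ from $1$ up to $m$ and give $e_k$ a priority that is larger than all previously assigned priorities and has parity even if $\sigma_k = +$ and odd if $\sigma_k = -$ (always possible — just jump up by one or two as needed). Call the resulting function $w$.

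\textbf{Verifying $\psi^{w} = \psi$.} Take any cycle $C$. Let $k$ be the largest index such that $e_k \in C$; equivalently, $e_k$ is the edge of $C$ with the highest priority under $w$, since priorities are increasing in $k$. By maximality of $k$, every edge of $C$ has index $\le k$, i.e.\ $C \subseteq S_k$, and $e_k \in C$. Hence, by the defining property of $e_k$, we have $\psi(C) = \sigma_k$. On the other hand, $\psi^{w}(C)$ is determined by the parity of $w(e_k)$, which by construction equals $\sigma_k$. Therefore $\psi^{w}(C) = \psi(C)$. Since condition (i) guarantees $\psi$ has no $0$-cycles, this covers every cycle and shows $w$ parity-realizes $\psi$.

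\textbf{Main obstacle.} The only place that needs care is ensuring the construction is well-defined: at each step $S_k$ is nonempty (true since we stop at $S_0 = \emptyset$ after exactly $m$ steps) so condition (ii) genuinely applies, and the sign $\sigma_k$ is well-defined even when $S_k$ contains \emph{no} cycle through $e_k$ — in that case the common-sign condition is vacuous and we may pick $\sigma_k$ arbitrarily, say $\sigma_k = +$; this causes no harm because such an $e_k$ will never be the top-priority edge of any cycle (a cycle through $e_k$ inside $S_k$ would have to exist for that). One should also double check the edge case where $C$ is a loop or uses parallel edges — but the argument above only uses that $C$ has a well-defined edge set and a highest-priority edge, so it goes through unchanged. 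I expect writing out this well-definedness bookkeeping cleanly to be the main (minor) friction point; the conceptual content is entirely in the peeling argument.
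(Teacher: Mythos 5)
Your proposal is correct and follows essentially the same route as the paper: the forward direction via the highest-priority edge of $S$, and the backward direction via iteratively peeling off the edge guaranteed by condition (ii) and assigning priorities in decreasing order of removal time, with parity matching the common sign. The only difference is cosmetic indexing, and your explicit handling of the vacuous case (no cycle through $e_k$ inside $S_k$) matches the paper's parenthetical remark.
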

\begin{proof}
    For the first direction, suppose we have a cycle pattern $\psi^w$ parity-induced by priority function $w$. Clearly it has no zero cycles. Let $S$ be a nonempty subset of edges. If there are no cycles in $S$ then the second statement trivially holds. So suppose there are cycles $C\in S$. Let $\mathcal{C}_S$ be the union of these cycles, and let $e$ be an edge with the highest priority in $S$. Obviously, if $w(e)$ is even, every cycle in $\mathcal{C}_S$ containing $e$ has sign $+$, and they have sign $-$ if the priority of $e$ is odd. So in all cases the second condition holds, and this completes the first direction of the proof.

    On the other hand, suppose we have a cycle pattern $\psi$ that satisfies the condition. We then want to construct a parity realization of $\psi$. First, take $S=E$, we know that there is an edge $e\in S$ such that every cycle containing $E$ has the same sign. We call this edge $e_1$, and assign it the highest priority of the parity game: we make sure that $w(e_1)$ is even if all cycles containing it are $+$, and otherwise $w(e_1)$ will be odd (if no cycles contain $e_1$, we may pick its parity arbitrarily). Independent of what other priorities we choose, as long they are smaller than $w(e_1)$, we have $\psi^w(C)=\psi(C)$ for all cycles containing $e_1$.
    
    Now pick $S=E\backslash\{e_1\}$, and let $e_2$ be an edge such that all cycles in $S$ containing $e_2$ have the same sign. We now make sure that $w(e_2)$ is the second highest priority, with again a similar choice for its parity. With the same argument, $\psi(C)=\psi^{w}(C)$ in every cycle containing $e_1$ or $e_2$. We then remove $e_2$ from $S$ and repeat this procedure, until we eventually fixed $w$ for all edges. We then have $\psi^w(C)=\psi(C)$ for all cycles $C$. We conclude that indeed $\psi$ is parity-realizable.
\end{proof}

The previous statement shows that parity-realizable cycle patterns have a simpler structure. 
We show in \cref{sec:extremal-cycle-patterns} that nevertheless there are parity-realizable cycle patterns which still need big weights in any representation. 

We conclude this section with a question asking to quantify how rare parity-realizable cycle patterns are among all realizable cycle patterns. 

\begin{question}
    Which fraction of realizable cycle patterns is parity realizable? 
\end{question}


\subsection{Lower bounds on cycle patterns}
\label{sec:extremal-cycle-patterns}

We complement the upper bound on the largest size of the weights of a realization from \cref{thm:char1} by a lower bound construction. 

\begin{theorem} \label{thm:expweights}
    There are families of graphs with realizable cycle patterns for which every integer realization has an edge $e$ with $|w(e)|= 2^{\Omega(m)}$. Similar results hold for special cases:
    \begin{itemize}
        \item For parity-realizable cycle patterns, every realization has an edge with $|w(e)|=2^{\Omega(m)}$.
        \item For simple graphs there is an edge with $|w(e)|=2^{\Omega(n^2)}$.
        \item For simple graphs with bounded feedback arc set number\footnote{The feedback arc set number is the smallest possible cardinality of a set $S\subseteq E$ such that the digraph $(V,E\backslash S)$ is acyclic.}, there is an edge with $|w(e)|=2^{\Omega(n)}$.
    \end{itemize}
\end{theorem}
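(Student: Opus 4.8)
The plan is to build, for each of the four claims, an explicit family of weighted digraphs whose induced cycle pattern forces doubly-exponential weight ratios, and then argue that no alternative realization does better. The natural engine is a chain of ``doubling gadgets'': small subgraphs, each contributing two parallel routes whose weights differ by a factor of roughly two, so that a cycle threading $k$ gadgets has weight on the order of $2^k$. Concretely, I would take a sequence of vertices $u_0, u_1, \dots, u_k$ where between $u_{i-1}$ and $u_i$ there are two directed paths, a ``cheap'' one and an ``expensive'' one, together with a return path closing everything into cycles; choosing the base weights so that the cheap route from $u_0$ to $u_i$ has weight exactly matching the expensive route from $u_0$ to $u_{i-1}$ creates a cascade of equalities $\psi(\cdot)=0$ (or tight sign constraints) that any realization must satisfy simultaneously. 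Solving this triangular system of equalities and inequalities bottom-up shows the weight on the last gadget's edges is $2^{\Omega(k)}$ times the first. Since each gadget uses $O(1)$ edges and vertices, $k = \Omega(m)$ in the general case, giving $|w(e)| = 2^{\Omega(m)}$; the lower-bound direction (that no cheaper integral realization exists) follows because the equality constraints $\psi(C)=0$ pin down ratios of weights exactly, and a strict-inequality cycle forces a strict increase, so an easy induction on gadget index yields the bound for every realization, not just the constructed one.

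For the parity-realizable case, the subtlety is that $\psi$ must additionally satisfy the combinatorial characterization of \cref{thm:charPG} — no $0$-cycles, and the iterated ``highest-priority edge'' condition. So instead of equalities I would use a cascade of strict sign alternations: design the gadgets so that in gadget $i$ the expensive route is, say, positive and the cheap route negative, with the magnitudes forced to interleave ($|w(\text{cheap}_i)| > |w(\text{expensive}_{i-1})|$ is forced by some positive cycle, and similarly in the other direction). Because priorities in a parity realization translate to weights of the form $(-n)^{\text{priority}}$, distinct priorities already give a geometric scale; the point is to show the *cycle pattern itself* forces $\Omega(m)$ distinct priority levels, each nested inside the previous, which then forces any integral (weight) realization to have exponential range by the same triangular argument. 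I would verify the charPG conditions hold for the constructed $\psi$ by exhibiting, for each $S \subseteq E$, the edge of ``highest level'' surviving in $S$ and checking its incident cycles in $S$ are monochromatic — this is where the gadget's acyclic-within-a-layer structure helps.

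The two simple-graph refinements come from controlling how many edges/vertices a gadget costs once parallel edges and loops are banned. Replacing a parallel pair by a pair of internally-disjoint length-two paths costs $O(1)$ extra vertices per gadget, so with $n$ vertices one can still fit $\Omega(n)$ gadgets, but to reach the $2^{\Omega(n^2)}$ bound I would instead pack the gadgets ``in parallel layers'': arrange $\Theta(n)$ doubling chains each of length $\Theta(n)$ sharing vertices but not edges, so that $m = \Theta(n^2)$ and a single cycle can be routed through $\Theta(n^2)$ gadget-edges — essentially reusing the general $2^{\Omega(m)}$ bound with $m = \Theta(n^2)$. For the bounded feedback-arc-set case, only $O(1)$ arcs may lie on cycles ``against the topological order,'' so the doubling must happen along the one acyclic backbone reused by $O(1)$ feedback arcs; this caps the chain length at $\Theta(n)$, giving $2^{\Omega(n)}$, and one checks the feedback arc set stays bounded because all added structure is a DAG plus a fixed number of back-edges.

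The main obstacle I expect is the *lower-bound* direction in full generality: showing that the constructed cycle pattern admits *no* integral realization with smaller weights, not merely that the obvious one is large. The equality constraints handle the ratios cleanly, but where I use strict inequalities (unavoidably so in the parity case, by \cref{thm:charPG}(i)) there is slack, and one must argue that the *combinatorial* pattern — which cycles are $+$ versus $-$ — already forces a chain of strict inequalities $a_1 < a_2 < \dots < a_k$ among integer-valued linear functionals of $w$, hence $a_k \ge a_1 + (k-1)$, and then bootstrap this additive growth into multiplicative growth by also having inequalities of the form $a_{i+1} \le 2a_i$ or $a_{i+1} \ge 2 a_i$ forced by other cycles in the gadget. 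Getting a self-contained set of cycles that pins both the upper and lower recursion on each gadget, while keeping the graph simple / bounded-feedback as required, is the delicate bookkeeping; I would isolate it as a single ``doubling lemma'' about one gadget and then compose.
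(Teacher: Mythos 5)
Your overall architecture (a chain of gadgets forcing a sequence of integer-valued differences to grow geometrically, with integrality supplying the initial unit gap) matches the paper's, but there is a genuine gap exactly where you flag ``the delicate bookkeeping'': the mechanism that converts additive unit gaps into multiplicative growth. A cycle constraint is a $\{0,1\}$-combination of edge weights --- a simple cycle uses each edge at most once --- so no cycle, nor any pair of cycles, can force a relation of the form $a_{i+1}\ge 2a_i$ between linear functionals of $w$; a coefficient of $2$ is not expressible. Your first paragraph's alternative, a cascade of $\psi(C)=0$ equalities ``pinning down ratios,'' also cannot carry the argument: the system is homogeneous, so equalities alone are scale-invariant and never force large integer solutions, and $0$-cycles are outlawed entirely in the parity case by \cref{thm:charPG}. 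The paper circumvents this with a Fibonacci recurrence rather than literal doubling: on a $4$-vertex cycle with $2i$ parallel edges per hop, the negative cycle $(e_{2j-6},e_{2j-4},e_{2j-2},e_{2j-1})$ and the positive cycle $(e_{2j-6},e_{2j-5},e_{2j-3},e_{2j})$ subtract to give $w_{2j}-w_{2j-1}>(w_{2j-2}-w_{2j-3})+(w_{2j-4}-w_{2j-5})$, i.e.\ $d_j>d_{j-1}+d_{j-2}$ with $d_4,d_5\ge 1$ from integrality --- which is $2^{\Omega(j)}$ without ever needing an edge to appear twice in a cycle. Your ``doubling lemma'' would have to take this telescoping or Fibonacci form (e.g.\ $d_{i+1}>\sum_{j\le i}d_j$ via a cycle threading one route of every earlier gadget), and since no such gadget is exhibited, the core of the lower bound is missing.

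Two secondary points. First, the constructed cycle pattern must be shown realizable at all; the paper gets this for free by parity-inducing it from the priorities $\hat w(e_j)=j$, which simultaneously disposes of the parity-realizable bullet, whereas your separate sign-alternation construction for the parity case still owes a verification of both conditions of \cref{thm:charPG}. Second, your ``parallel layers'' packing for the $2^{\Omega(n^2)}$ bound is the right idea but needs care to keep the graph simple --- the paper must introduce subdivision vertices indexed mod $3$ precisely to avoid recreating parallel edges --- though this is a repairable detail once the doubling lemma is in place.
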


\begin{proof}
Consider the graph $G_i$ with priorities $\hat{w}$ on the left of \cref{fig:MPG_minweight}. 
There are $8i$ edges in this graph, with $\hat{w}(e_j)=j$ for $j=1,2,\ldots, 8i$. 
There are four vertices $v_1,v_2,v_3$ and $v_4$, where for each~$k$, the vertex $v_k$ has the outgoing edges $e_{8j+2k-9}$ and $e_{8j+2k-8}$ for $j=1,2,\ldots, i$. 
The priority function $\hat{w}$ parity-induces the cycle pattern $\psi^{\hat{w}}$ on this graph. 
For example, cycle $C=(e_1,e_4,e_{13},e_8)$ has priorities $1,4,13,8$, of which the largest is odd, so $\psi^{\hat{w}}(C)=-$. 

Now let $w \in \ZZ^E$ be an arbitrary integer realization of $\psi^{\hat{w}}$ as shown on the right of \cref{fig:MPG_minweight}. 

    \begin{figure}[htb]
        \centering
        \includegraphics[width=\linewidth]{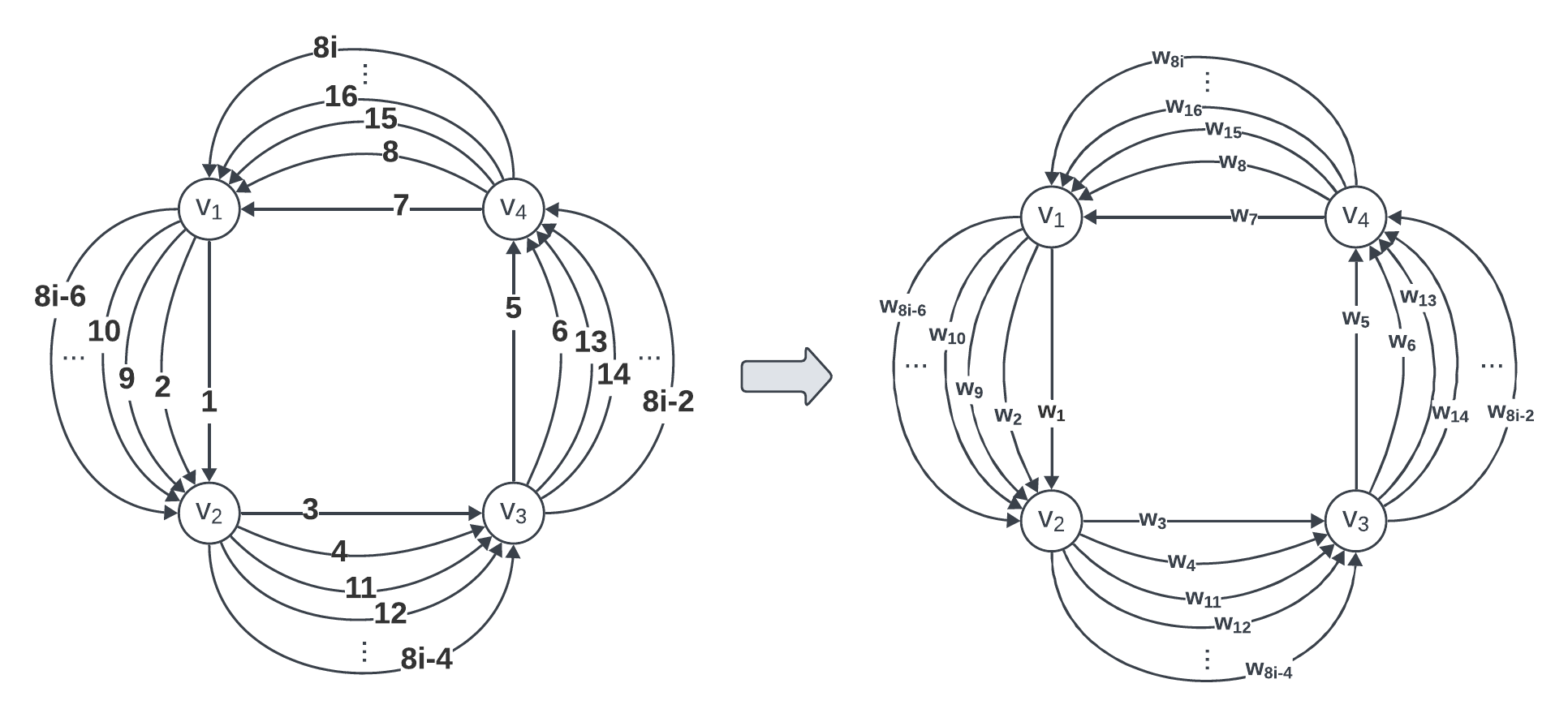}
        \caption{Left: graph $G_i$ with priorities $\hat{w}$ on the edges. Right: if we use weights $w$ on $G_i$, the induced cycle pattern needs to equal $\psi^{\hat{w}}$.}
        \label{fig:MPG_minweight}
    \end{figure}
    
We prove by induction on $j$ that $w_{2j}-w_{2j-1}> F_{j-3}$ for all $j\geq 4$, where $F_p$ is the $p$-th Fibonacci number. 
This implies that either $|w_{8i}|$ or $|w_{8i-1}|$ has weight exponential in $m=8i$.

For the induction basis, note that $\psi^{\hat{w}}(e_1,e_3,e_5,e_7)=-$ and $\psi^{\hat{w}}(e_1,e_3,e_5,e_8)=+$, implying $w_1+w_3+w_5+w_7<0$ and $w_1+w_3+w_5+w_8>0$. 
As we have an integer realization, this yields $w_1+w_3+w_5+w_8\geq 1$. 
Combining both equations gives $w_8-w_7>1=F_1$. 
Analogously, one sees $w_{10}-w_9>1=F_2$.

Now assume $w_{2j-4}-w_{2j-5}>F_{j-5}$ and $w_{2j-2}-w_{2j-3}>F_{j-4}$ for some $j\geq 6$. 
By construction, we have $\psi^{\hat{w}}(e_{2j-6},e_{2j-4},e_{2j-2},e_{2j-1})=-$, resulting in $w_{2j-6}+w_{2j-4}+w_{2j-2}+w_{2j-1}<0$, and likewise $w_{2j-6}+w_{2j-5}+w_{2j-3}+w_{2j}>0$. 
Combining these gives
\begin{align*}
    0&<(w_{2j-6}+w_{2j-5}+w_{2j-3}+w_{2j})-(w_{2j-6}+w_{2j-4}+w_{2j-2}+w_{2j-1}) \\
    &\stackrel{*}{<} (w_{2j}-w_{2j-1})-F_{j-4}-F_{j-5} \, ,
    \end{align*}
where $*$ follows from the induction hypothesis. 
We get $w_{2j}-w_{2j-1}>F_{j-4}+F_{j-5}=F_{j-3}$ completing the proof for general graphs.

Next, we address the three special cases from the theorem. 
First of all, the construction from \cref{fig:MPG_minweight} gives a parity-realizable cycle pattern. To get a simple graph, we need a slight modification of the graphs $G_i$.
Instead of a cycle with four vertices, we make one with $k$ vertices $v_1,v_2,\ldots, v_k$, where the vertex $v_p$ has outgoing edges of priorities $2jk+2p-1$ and $2jk+2p$ for $j=0,1,\ldots, i-1$. 
To make the graph simple, we suppose further that $k$ is divisible by $3$. 
We add vertices $v_{a,b}$ for $a=1,2,\ldots, 2i$ and $b=1,2,3$ to the graph. 
Then, for any $p$ we split the outgoing edges of $v_p$ as follows. 
Suppose that the edge $(v_p, v_q)$ has the $j$-th smallest priority among the outgoing edges of $v_p$. 
We replace the edge $(v_p, v_q)$ by edges $(v_p, v_{j,b})$ and $(v_{j,b},v_q)$, such that $b = p\mod{3}$. 
We assign the same priority as the original edge to the new edges. To finish the construction, we add the edges $(v_1,v_2),\dots,(v_{k-1},v_k),(v_k,v_1)$ to the graph, all with priority $0$. 
We denote this graph by $\widehat{G} = (\widehat{V},\widehat{E})$. 

The graph $\widehat{G}$ is simple. 
Indeed, every outgoing edge of $v_p$ goes to a different vertex by definition. 
Furthermore, every outgoing edge of $v_{j,b}$ goes to a different vertex because an edge of the form $(v_{j,b},v_q)$ arises in a unique way, namely by taking $v_p$ to be the vertex that precedes $v_q$ in the original cycle and splitting the edge $(v_p,v_q)$ of the $j$-th smallest priority. 
Let $w \in \ZZ^{\widehat{E}}$ be an integer realization of the cycle pattern on $\widehat{G}$. 
For every $\ell \in \{1,\dots,2ki\}$, let $u_{\ell} = w_e + w_e'$ be the sum of weights of the two edges with priority $\ell$. 
The proof follows by analyzing subgraphs of the form depicted in \cref{fig:MPG_minweight_simple}. 
In this subgraph, the cycle going through edges with priorities $13,15,18$ has positive weight, while the cycle going through edges with priorities $14,16,17$ has negative weight. 
This implies that $u_{13} + u_{15}+ u_{18} > u_{14} + u_{16} + u_{17}$, or equivalently $u_{18} - u_{17} > (u_{16} - u_{15}) + (u_{14} - u_{13})$. 
Since such a subgraph exists in $\widehat{G}$ for any sequence of six consecutive priorities that starts with an odd priority, we get $u_{2\ell+4} - u_{2\ell+3} > (u_{2\ell+2} - u_{2\ell+1}) + (u_{2\ell} - u_{2\ell-1})$ for all $\ell \in \{1,\dots,ki-2\}$. 
Moreover, we have $u_2 - u_1 \ge 1$ by analyzing the cycles that go only through the edges with priorities $0,1,2$. 
Analogously, $u_4 - u_3 \ge 1$. 
Hence $u_{2\ell} - u_{2\ell-1} \ge F_{\ell}$ for all $\ell$ and, in particular, $u_{2ki} - u_{2ki-2} \ge F_{ki}$. 
As a consequence, at least one of the four edges with priorities $2ki, 2ki-1$ satisfies $|w_e| \ge 2^{cki}$ for some absolute constant $c > 0$ that does not depend on $k,i$. 
By taking $i = k$ we get a graph $\widehat{G}$ with $k + 6i = 7k$ vertices. 
It follows then that there is an edge with $|w(e)|=2^{\Omega(n^2)}$.

\begin{figure}[htb]
    \centering
    \includegraphics[width=0.8\linewidth]{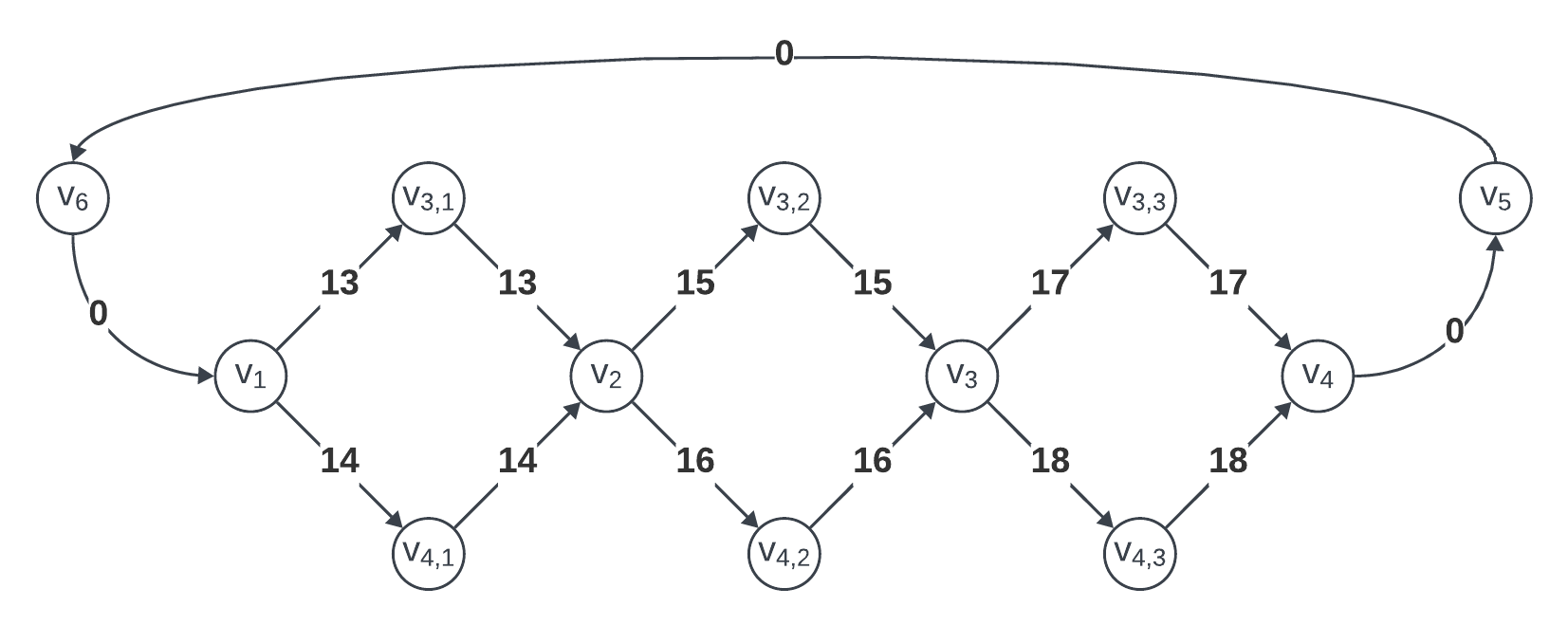}
    \caption{A subgraph of $\widehat{G}$ in the case $k=6,i=3$.}
    \label{fig:MPG_minweight_simple}
\end{figure}

In this construction, there are many intertwined cycles, so this graph will generally have a large feedback arc set number. 
Hence we turn the graph from \cref{fig:MPG_minweight} into a simple graph with feedback arc set number $1$ with a different method: first, we make it simple by adding an internal node to every edge, splitting the edge into two edges with the same priority. 
The new graph has a linear number of edges in terms of $n$. 
Then, we split $v_1$ into nodes $v_1'$ and $v_1''$. 
For every former edge of the form $(x,v_1)$ we add the edge $(x,v_1')$ with the same priority. 
Likewise, for every former edge $(v_1,x)$ we add the edge $(v_1'',x)$. 
We also add the edge $(v_1',v_1'')$ with priority 0. 
Now removing the last mentioned edge would turn the graph into an acyclic graph, hence the new graph has feedback arc set number 1. 
Moreover, this modification has no effect on the proof that the realization has large weights, hence for every realization $w$ there must be an edge $e$ with $|w(e)|=2^{\Omega(n)}$, and this completes the proof.
\end{proof}

The argument extends to various other graph complexity measures like DAG-width, Kelly-width or directed treewidth, as long as they are at most 2. 
This is because these measures are all bounded by the feedback arc set number $+1$ (see for example \cite[p. 62]{rehs2022comparing}).

\begin{remark}
Recall from \cref{thm:char1} the upper bound of $(n+1)^{(m-n+1)/2}$ for the maximal weight that occurs in a smallest realization. 
Note that this behaves asymptotically like $2^{\Theta(m\log(n))}$, while \cref{thm:expweights} shows a lower bound of $2^{\Omega(m)}$. 
This is not tight and we conjecture a lower bound of $2^{\Omega(m\log(n))}$. 

Recall that we derived the upper bound by solving a system of equations deduced from~\cref{eq:realization-polyhedron}. 
Specifically, $\hat{w}=\det (A'')\begin{bmatrix}
        (A'')^{-1}b'\\
        0
    \end{bmatrix}$ as defined in the proof is an integer point.
In particular, since the vector $b'$ of this system contains only $-1,0$ or $1$, if we want `big' solutions to the system, this requires either a `big' inverse of $A''$, or a big determinant of $A''$, and enough asymmetry in $(A'')^{-1}$ to prevent a smaller solution from existing. 

Since $A''$ can be written as a $0$-$1$-matrix, this is closely related to the fundamental question of determining the maximum possible absolute value of an entry of the inverse of an invertible $n\times n$-matrix with 0, 1 entries. 
Denoting by $\chi_1(n)$ this maximum, it was shown in \cite{alon1997anti} that $\chi_1(n) = n^{(\frac{1}{2}+o(1))n}$. 
This bound had several direct implications, e.g., for the degrees of regular multi-hypergraphs and for the flatness of simplices. 
The matrices arising in the construction from \cite{alon1997anti} are rather dense in the sense that they have many non-zero entries. 
The matrices in the proof of \cref{thm:char1} are \emph{sparse} in the sense that that they are incidence matrices of cycles, so they have $O(n)$ non-zero entries in each row, while the rows are of length $m-n+1$.
Thus, an improvement of our lower bound to $2^{\Omega(m\log(n))}$ could possibly result in finding a \emph{sparse} class of matrices with a high $\chi_1(n)$. 
 
\end{remark}

\section{Complexity questions for cycle patterns}
\label{sec:complexity}

We analyze the algorithmic complexity of realizability questions for cycle patterns. 
This entails realizability by integer weights, parity-realizability and the comparison of cycle patterns. 
These problems turn out to be very difficult in general.

To argue in a meaningful way about computational complexity, we assume that $\psi$ can be evaluated in time polynomial in the size of the input (cycle pattern function and graph).
 To be precise, we assume that $\psi$ is given as a Boolean circuit of size polynomial in the number of edges and vertices of the graph. 
The input of the Boolean circuit is given by the (Boolean) incidence vector of the cycle and the output should be able to encode the four possibilities: the input was not a valid cycle, or $\psi(C)$ is $-, 0 ,+$, respectively. This can be represented by two output gates.

 A well-known fact about Boolean circuits is this: 
 if a Turing machine can decide a language in time $O(t(n))$, then for each $n$ there is a circuit with $g(n)$ gates deciding the language for all inputs of size $n$, and where $g(n)=O(t^2(n))$ (see for example the proof of \cite[Thm 9.30]{sipser1996introduction}). 
 Furthermore, this construction is algorithmic, in the sense that there is an algorithm working in $\mathrm{poly}\bigl(t(n)\bigr)$ complexity that, given a number $n$ in unary, constructs the corresponding circuit.
 Therefore, every time we construct a cycle pattern $\psi$, we will only need to argue that it is computable in polynomial time (on a Turing machine). 
 Finally, we remind the reader of the definitions of some complexity classes which we use in the following sections.
 \begin{itemize}
     \item $RP$: The class of languages $L$ for which there exists a probabilistic Turing machine such that:
     \begin{itemize}
         \item The largest possible runtime is bounded by a polynomial of the input size. 
         \item If the input $x$ is in $L$, the machine accepts $x$ with probability $p\geq \frac{1}{2}$.
         \item If $x\notin L$, the machine rejects $x$ with probability 1.
     \end{itemize}
     \item $\Sigma_{2}^{P} := NP^{NP}$: The class of problems solvable in polynomial time on a nondeterministic Turing machine, which has access to an oracle for some $NP$-complete problem.
 \end{itemize}

\subsection{Realizability}

First, we address one of the most natural problems, which is to check realizability. 

\begin{definition}[Realizability problem] \label{def:realizability-problem}
    Given a digraph $G$ and a cycle pattern $\psi$ represented by a Boolean circuit, is there a weight function $w:E(G)\to\mathbb{Z}$, such that $\psi_w = \psi$?
\end{definition}

We do this using our characterization results from Section \ref{sec:realizability}. 

\begin{theorem}\label{thm:realizablecomplexity}
    Deciding whether a cycle pattern $\psi$ given by a Boolean circuit is realizable for a digraph $G$ is a coNP-complete problem.
\end{theorem}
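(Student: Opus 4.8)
The plan is to establish both membership in coNP and coNP-hardness.

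\textbf{Membership in coNP.} By \cref{thm:char1}, a cycle pattern $\psi$ is \emph{not} realizable if and only if there exists a non-realizability witness: an opposing pair of cycle sequences $C_1,\dots,C_p$ and $D_1,\dots,D_q$ satisfying $\sum_i \chi(C_i) = \sum_j \chi(D_j)$. Naively this witness could be exponentially long, but the $(3 \implies 1)$ direction of the proof actually produces the witness from an \emph{integer} solution $(y,z)$ of the system $y \ge 0$, $y \neq 0$, $y^TA + z^TB = 0$, and a minimal-face argument (as in the proposition on witness size) shows one can take this solution to be supported on at most $m-n+2$ distinct cycles. The issue is that the \emph{multiplicities} $y_C, z_C$ can still be large; however, since $A,B$ are $0/1$ (up to sign) matrices with polynomially many columns, Cramer's rule / Hadamard bounds show there is such a solution with entries bounded by $(n+1)^{O(m)}$, i.e. of polynomial bit-size. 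So the certificate for non-realizability is: a list of at most $m-n+2$ cycles (each of length $\le n$, hence polynomial size) together with polynomially-many-bit multiplicities. A verifier checks in polynomial time that (a) each listed $C$ is a genuine cycle of $G$ and the circuit for $\psi$ evaluates it to the claimed sign in $\{0,+\}$ or $\{0,-\}$ as appropriate, (b) at least one sign is strictly $+$ on the $C$-side or strictly $-$ on the $D$-side, and (c) the weighted characteristic-vector equation holds. This places non-realizability in NP, hence realizability in coNP.

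\textbf{coNP-hardness.} I would reduce from an NP-complete problem — the natural choice is (a circuit-encoded variant of) a problem whose ``no'' instances correspond to realizable patterns, i.e. reduce an NP language to non-realizability. Concretely, one designs, from an instance of e.g. \textsc{Subset-Sum} or a tautology-type problem, a digraph $G$ together with a polynomial-time computable cycle pattern $\psi$ — encoded as a Boolean circuit using the standard Turing-machine-to-circuit construction mentioned in the excerpt — such that $\psi$ is non-realizable exactly when the instance is a ``yes'' instance. The engine for this is the opposing-pair criterion: one builds gadgets where certain cycles are forced by $\psi$ to be positive and others negative, and a satisfying assignment / valid subset selects a combination of these cycles covering the same multiset of edges on both sides, producing a non-realizability witness; conversely, if no solution exists, the weight function obtained by plugging the combinatorial data into actual edge weights realizes $\psi$. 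One must be careful that $\psi$ is defined consistently (i.e. really is a function on \emph{all} cycles) and remains polynomial-time computable — this is where encoding $\psi$ by a circuit that simulates a polynomial-time decision procedure, rather than specifying it cycle-by-cycle, is essential.

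\textbf{Main obstacle.} The delicate part is the hardness reduction: constructing a graph whose cycle structure is rich enough to encode an NP-complete problem while keeping $\psi$ both well-defined on \emph{every} cycle and efficiently computable, and then proving the two-way correspondence between witnesses (opposing pairs with equal edge-sums) and solutions of the source instance. Getting the ``only if'' direction — that absence of a combinatorial solution yields an \emph{actual integral realization} — requires exhibiting explicit weights, which is the more intricate half; here \cref{thm:char1} is used only as a certificate-existence guarantee, so one genuinely needs the combinatorial-to-geometric translation to go through cleanly. The coNP-membership side, by contrast, is essentially bookkeeping once the polynomial bound on witness size and multiplicities is extracted from the proof of \cref{thm:char1}.
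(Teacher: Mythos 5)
Your coNP-membership argument is sound and essentially matches the paper's, up to a cosmetic difference in the certificate: the paper certifies non-realizability with just the list of at most $m-n+2$ cycles and then \emph{solves a linear program} to verify that suitable multiplicities $(y,z)$ with $y\ge 0$, $\bm{1}^Ty\ge 1$, $y^TA'+z^TB'=0$ exist, whereas you include explicit polynomially-bounded multiplicities in the certificate and check the equation directly. Both are valid; your version needs the Cramer/Hadamard bound on the multiplicities, which you correctly invoke.

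The hardness half, however, is a genuine gap: you describe a \emph{plan} for a reduction (``one designs, from an instance of e.g.\ \textsc{Subset-Sum}\dots gadgets where certain cycles are forced to be positive and others negative\dots'') but never produce a source problem, a graph, a definition of $\psi$, or a proof of either direction of the correspondence. You correctly identify the hard part — in particular that the ``no solution $\Rightarrow$ realizable'' direction requires exhibiting actual weights — but you do not resolve it. The paper's construction dissolves exactly this difficulty by a careful choice of reduction: it reduces from non-Hamiltonicity of an undirected graph $U$, takes $G$ to be the bidirected version of $U$, and sets $\psi(C)=+$ iff $C$ is Hamiltonian and $-$ otherwise. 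Then a Hamiltonian cycle of $U$ yields an opposing pair (its two directed orientations, both $+$, use exactly the same edge multiset as the $k$ negative digons on its edges), so $\psi$ is not realizable; and if $U$ has no Hamiltonian cycle, then \emph{every} cycle is a $-$-cycle, so the constant weight $-1$ realizes $\psi$ trivially — no intricate combinatorial-to-geometric translation is needed. Without a concrete construction of this kind, your hardness argument is incomplete.
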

\begin{proof}  
    First, we show that the problem is in coNP. 
    From Section \ref{sec:witnesssize} we know that if $\psi$ is not realizable, then we can find a non-realizability witness with at most $m-n+2$ distinct cycles. 
    In fact, we can simply give this set of cycles as a certificate: using this certificate, we can verify in polynomial time that the cycle pattern is not realizable. 
    For this, we consider the realization cone from \cref{def:realization+cone}. 
    Let $A'$ and $B'$ be the matrices formed by those rows of $A$ and $B$ corresponding to the cycles comprising the non-realizability witness (i.e. positive and negative versions of the characteristic vectors of the cycles). 
    One can verify in polynomial time that there are $y,z$ (of appropriate length) such that $y^TA'+z^TB'=0, y\geq 0, \bm{1}^Ty\geq 1$ by solving a linear program.
    That shows then that there is a non-realizability witness, so by \cref{thm:char1}, $\psi$ is not realizable. Hence our problem is in coNP.

    To show coNP-hardness, we consider the (coNP-hard) problem of checking if some undirected graph $U$ contains no Hamiltonian cycle. For the reduction, let $G$ be the directed version of $U$ obtained by replacing each undirected edge with two oppositely directed edges. 
    Let $\psi$ be defined by
    \[
    \psi(C)=\begin{cases}
        + & C\text{ is a Hamiltonian cycle}\\
        - & C\text{ is not a Hamiltonian cycle}
    \end{cases} \enspace .
    \]
    This function $\psi$ can be computed by a Boolean circuit of polynomial size as we can easily construct a Turing machine that computes $\psi(C)$ in polynomial time, given as input the vector~$\chi(C)$.
    
    If $U$ is Hamiltonian, say $C$ is a Hamiltonian cycle, then let $C_1,C_2$ be the two directed versions of $C$ in $G$. 
    Let $D_1, D_2, \ldots, D_k$ be all the directed 2-cycles related to each individual edge of $C$ (assuming there are more than 2 vertices in the graph, these are different from $C_1,C_2$). 
    Now these sequences of cycles are an opposing pair, fulfilling the conditions from the third statement in \cref{thm:char1}. 
    Hence, $\psi$ is not realizable.

    Conversely, if $U$ is not Hamiltonian, then $\psi$ is realizable, since it only contains $-$-cycles (for instance, we can give every edge weight $-1$). This completes the proof of coNP-completeness.
\end{proof}

The attentive reader may have noticed that the previous result also holds for parity-realizability: 
a non-Hamiltonian graph gives a graph with only $-$-cycles, which is parity-realizable, and a Hamiltonian graph yields a non-realizable pattern, which is therefore also non-parity-realizable. 
However, we can get a stronger result for parity-realizability. Specifically, we show that the following problem is coNP-hard.

\begin{definition}[Parity-realizability problem]
    Given a digraph $G$ and a weight function $w:E(G)\to\mathbb{Z}$, is the induced cycle pattern $\psi_w$ parity-realizable?
\end{definition}

In contrast to \cref{def:realizability-problem}, we get a realization for the cycle pattern, since the weight also determines a Boolean circuit for $\psi_{w}$. 
From \cref{thm:realizablecomplexity} we know that it was coNP-complete even to decide the existence of this, so we have much more information. 
\begin{theorem}\label{thm:parityrealizabilitycomplexity}
    The parity-realizability problem is coNP-complete.
\end{theorem}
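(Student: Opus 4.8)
The plan is to establish both membership in coNP and coNP-hardness.

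For membership in coNP, I would use the structural characterization of parity-realizability from \cref{thm:charPG}. A cycle pattern $\psi_w$ fails to be parity-realizable exactly when condition (i) or (ii) of that theorem is violated. Condition (i) is easy: since $w$ is given, detecting a $0$-cycle is straightforward (a short certificate is the $0$-cycle itself, which can be checked in polynomial time). For a violation of condition (ii), the natural certificate is a nonempty edge set $S \subseteq E$ such that for \emph{every} edge $e \in S$ there exist two cycles $C^+, C^-$ with $e \in C^+ \subseteq S$, $e \in C^- \subseteq S$, $\psi_w(C^+) = +$, and $\psi_w(C^-) = -$. The obstacle here is that this certificate, as stated, requires up to $2|S|$ cycles and it is not a priori obvious that a \emph{short} certificate exists; however, for each fixed $e \in S$, the existence of a positive cycle through $e$ inside $S$ and of a negative cycle through $e$ inside $S$ can each be verified or refuted in polynomial time directly from $w$ (one only needs, for each $e$, to exhibit the two witnessing cycles, or to argue that the checker can find them). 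So the full certificate is the set $S$ together with, for each $e \in S$, the two cycles; this has polynomial size and is polynomially checkable, placing the complement of the problem in NP, i.e. the problem is in coNP.

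For coNP-hardness I would again reduce from the problem of deciding whether an undirected graph $U$ has \emph{no} Hamiltonian cycle, reusing the gadget from the proof of \cref{thm:realizablecomplexity}, but now supplying an explicit integer weight function rather than an abstract Boolean circuit. Let $G$ be the bidirected version of $U$ and choose weights so that Hamiltonian cycles receive weight $+$ and all other cycles receive weight $-$; concretely, one can try to realize this directly with integer weights, exploiting that the only cycles that must be positive are the (few, structurally rigid) Hamiltonian cycles while everything else is negative. If $U$ is non-Hamiltonian, the induced pattern has only $-$-cycles, which is trivially parity-realizable (give every edge priority $1$). If $U$ is Hamiltonian, I would show $\psi_w$ violates condition (ii) of \cref{thm:charPG}: taking $S = E$, every edge lies in some $2$-cycle (weight $-$) and, because $U$ is Hamiltonian, one can route a positive (Hamiltonian) cycle through any prescribed edge as well — so no edge of $S$ has all its cycles of one sign, and $\psi_w$ is not parity-realizable.

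The main obstacle I anticipate is the hardness reduction: one must produce genuine integer weights whose induced cycle pattern has the desired shape (Hamiltonian cycles positive, all else negative), and this is not automatic because, as the earlier examples show, such a pattern need not be realizable at all. So the reduction may instead need a more careful gadget — perhaps adding auxiliary vertices/edges so that every edge of the original graph is forced to lie both in a short negative cycle and (when $U$ is Hamiltonian) in a long positive cycle — while keeping the weights polynomially bounded and ensuring that non-Hamiltonicity still yields an all-negative, hence parity-realizable, pattern. Getting this gadget right, and verifying that it exactly tracks the Hamiltonicity of $U$, is the delicate part; the coNP-membership argument via \cref{thm:charPG} is comparatively routine.
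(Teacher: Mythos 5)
Your coNP-membership argument is correct and is exactly the paper's: by \cref{thm:charPG}, non-parity-realizability is witnessed either by a single $0$-weight cycle or by a nonempty set $S\subseteq E$ together with, for each $e\in S$, one $+$-cycle and one $-$-cycle through $e$ inside $S$; this certificate has polynomial size and is checkable in polynomial time since $w$ is given explicitly.

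The hardness direction, however, has a genuine gap, and you have in fact put your finger on it yourself without resolving it. Your first plan — integer weights on the bidirected version of $U$ making Hamiltonian cycles positive and everything else negative — is not merely "not automatic": it is impossible whenever $U$ is Hamiltonian, because that is precisely the non-realizable pattern from the proof of \cref{thm:realizablecomplexity} (the two directed copies of a Hamiltonian cycle and the $2$-cycles on its edges form an opposing pair). So no weight function $w$ induces that pattern, and the reduction cannot be run. Your fallback ("add auxiliary vertices so every edge lies in a short negative cycle and, when $U$ is Hamiltonian, in a long positive cycle") is left entirely unspecified, and it is also oriented the wrong way around relative to what actually works. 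The paper's construction builds a graph $G'$ from a \emph{directed} $G$ by attaching a gadget of eight new vertices through which any Hamiltonian cycle of $G'$ is forced to pass, and then fixes the weights once and for all: one distinguished gadget edge $(t,s)$ gets weight $2n+13$ and every other edge gets weight $-2$. The induced pattern is then: short cycles through $(t,s)$ are positive, all other cycles negative — which is parity-realizable (priority $2$ on $(t,s)$, priority $1$ elsewhere) exactly when $G'$ has no Hamiltonian cycle. When a Hamiltonian cycle exists, it has length $n+8$ and hence weight $-1$: it is a \emph{long negative} cycle passing through $(t,s)$ and the other gadget edges, and this is what supplies the missing $-$-cycles through those edges so that condition (ii) of \cref{thm:charPG} fails with $S=E$. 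Without this (or an equivalent) concrete gadget and weight assignment, together with the verification that every edge of $G'$ lies in both a $+$- and a $-$-cycle in the Hamiltonian case, the reduction — and hence the hardness claim — is not established.
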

\begin{proof}
To show that the problem is in coNP, let $w$ be a weight function whose cycle pattern is not parity-realizable. 
By \cref{thm:charPG}, either there is a $0$-weight cycle, or there is a nonempty set $S\subseteq E$ such that for each edge $e\in S$, there are cycles $C_e$ and $D_e$ containing $e$ with $\psi(C_e)=+$ and $\psi(D_e)=-$. 
Hence, we have a polynomial-time checkable certificate, with either the cycle of weight $0$, or the subset $S$ and a pair of cycles for each edge $e \in S$. 
Therefore, our problem is in coNP. 

To show that the problem is coNP-hard, we give a reduction from the coNP-hard problem of checking if a given simple digraph contains no Hamiltonian cycle. 
Suppose we have a simple digraph $G$ with $n$ vertices. 
We may assume that the graph is strongly connected (otherwise we could trivially conclude that there is no Hamiltonian cycle). 
Now we construct a graph $G'$ from $G$ as follows: first choose an arbitrary vertex $v\in V(G)$, and add a copy $v'$ of $v$ with the same neighborhood as $v$. 
We also add seven new vertices: $s,s',s'',t,t', t''$ and $u$, see \cref{fig:PGrealizable}. 
We add edges $(v,t''), (t'',t'),(t',t), (t',u), (u,t), (u,s'),(t,s), (s,s'), (s,u), (s',s'')$ and $(s'',v')$. 
For every vertex $x\in V(G)\cup\{v'\}$, we also add edges $(x,t')$ and $(s',x)$.
    \begin{figure}[htb]
        \centering
        \includegraphics[width=0.7\linewidth]{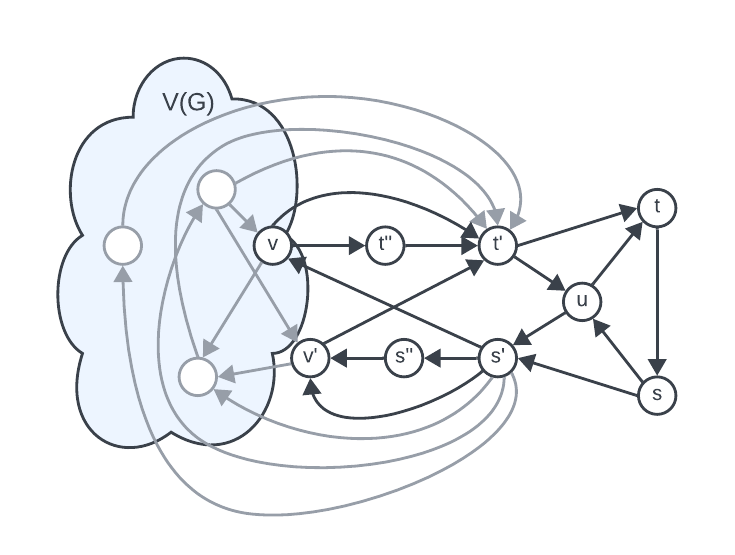}
        \caption{The graph $G'$ constructed from $G$. In black we have vertex $v$ and the eight new vertices, and in grey some examples of vertices of $G$ and their related edges in $G'$.}
        \label{fig:PGrealizable}
    \end{figure}
Now we argue that $G'$ has a Hamiltonian cycle if and only if $G$ has a Hamiltonian cycle. 

\smallskip

Suppose $G$ has a Hamiltonian cycle and let $(v,v'')$ be an edge of this cycle. 
Then, replacing $(v,v'')$ by the path $v,t'',t',u,t,s,s',s'',v',v''$ yields a Hamiltonian cycle for $G'$. 
On the other hand, suppose that $G'$ has a Hamiltonian cycle, and suppose it has edge $(v',v'')$. 
The vertices $t''$ and $s''$ have only one outgoing and incoming edge, and $t$ only has one outgoing edge and two incoming edges. 
This implies that the Hamiltonian cycle must either contain the path $v,t'',t',u,t,s,s',s'',v', v''$ or $v,t'',t',t,s,u,s',s'',v', v''$. 
In both cases we can replace this path by the edge $(v,v'')$\footnote{Note that $v''\neq t'$ since $t''$ has to be the predecessor of $t'$ in a Hamiltonian cycle.}, to obtain a Hamiltonian cycle for $G$.

Now let $w \colon E(G')\to\mathbb{Z}$ be given by $w(t,s)=2n+13$ and by $w(e)=-2$ if $e\neq (t,s)$. 
We argue that the induced cycle pattern $\psi_w$ is parity-realizable if and only if $G'$ contains no Hamiltonian cycle. Suppose that $G'$ has no Hamiltonian cycle. Since $G'$ has $n+8$ vertices, this means that any cycle has at most $n+7$ edges, and therefore any cycle containing $(t,s)$ has weight at least $(2n+13)-2(n+6)=1 > 0$. Any cycle without $(t,s)$ has negative weight. Moreover, if we make a parity game on $G'$ by giving edge $(t,s)$ priority $2$ and every other edge priority $1$, every cycle will have an even highest priority if and only if it contains the edge $(t,s)$. So the cycle pattern $\psi_w$ is parity-realizable if $G'$ has no Hamiltonian cycle.

\smallskip
    
On the other hand, suppose that $G'$ contains a Hamiltonian cycle. We argue that then every edge is contained in both a $+$-cycle and a $-$-cycle. 
To start, consider any edge $(r,r')$ with \break $r,r'\in V(G)\cup \{v'\}$. 
Then, the cycle $r',t',u,s',r$ has negative weight, while $r',t',t,s,s',r$ has positive weight. 
Likewise, if $r$ is a vertex in $V(G)\cup\{v'\}$, then the cycle $r,t',u,s'$ has negative weight, while $r,t',t,s,s'$ has positive weight, so the claim is true for the edges $(r,t')$ and $(s',r)$. 
It remains to check eleven edges involving the nine vertices $v,v',s,s',s'',t,t',t'',u$. 
Among these edges, there are six which do not involve the vertices $t$ and $s$. 
For each of these six edges, it is easy to check that they belong to a $+$-cycle and to a $-$-cycle. 
For instance, the edge $(t',u)$ belongs to the cycle $t',u,s',v,t''$ which has negative weight and to $t',u,t,s,s',v,t''$ which has positive weight. 
The remaining five edges are $(t',t),(u,t),(s,u),(s,s')$, and $(t,s)$. 
Again, it is easy to see that they all belong to a $+$-cycle; for instance, the edge $(u,t)$ belongs to the cycle $u,t,s$ and $(t',t)$ belongs to $t',t,s,s',v,t''$. 
Therefore, the only remaining part is to show that these five edges belong to some $-$-cycle. 
To see that this is the case, note that any Hamiltonian cycle of $G'$ has weight $-1$. 
Moreover, it contains either the path $t',t,s,u,s'$ or the path $t',u,t,s,s'$. 
Even more, if a Hamiltonian cycle in $G'$ contains one of these two paths, then there exists a Hamiltonian cycle in $G'$ that contains the other path (we can just replace one path by the other in the cycle). 
These two Hamiltonian cycles have weight $-1$ and contain the five edges. 
Hence, every edge is contained in both a $+$-cycle and a $-$-cycle, which implies that $\psi_w$ is not parity-realizable by \cref{thm:charPG} (taking $S=E$). 
Since the construction of $G'$ can be done in polynomial time, we conclude that we have a valid reduction, hence the parity-realizability problem is coNP-complete.
\end{proof}

\subsection{Distinguishing cycle patterns}

Reducing the size of a realization is interesting from an algorithmic perspective: suppose we have some procedure to find smaller weights in the graph while maintaining the crucial structure. 
Combining this with a pseudopolynomial algorithm for some problem in the graph provides a method to solve it more efficiently. 
This motivates the following.  

\begin{definition}[Bounded realization problem]
    Given a weighted digraph $(G,w)$ and an integer $k$, decide if there exist integer weights $w'$ bounded by $k$ in absolute value, such that the graphs $(G,w)$ and $(G,w')$ have the same induced cycle pattern.
\end{definition}

We have already seen in \cref{thm:expweights} that there are cycle patterns for which $k$ has to be exponentially big. 
We extend this idea further in \cref{sec:geometric}. 
Here, we consider the much simpler problem of checking if such a weight reduction even gives a correct answer. 
First, we introduce the related basic problem of finding cycles of weight zero as a building block. 

\begin{definition}[Zero-weight cycle problem]
    Given a digraph $G=(V,E)$ and a weight function $w:E\to \mathbb{Z}$, is there a cycle $C\subseteq E$ such that $w(C):=\sum_{e\in C}w_e=0$?    
\end{definition}

One can show NP-completeness of this similar to \cite{kawase2015finding}, where in their introduction they show NP-completeness of finding paths of weight zero. 
We include a proof for completeness. 

\begin{lemma}\label{lem:zeroweightcycle}
    The zero-weight cycle problem is NP-complete.
\end{lemma}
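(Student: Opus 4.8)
The plan is to prove NP-completeness in the two usual steps: membership in NP, and NP-hardness via a reduction from a known NP-complete problem.

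\textbf{Membership in NP.} Given a digraph $G=(V,E)$ and weights $w \colon E \to \ZZ$, a simple cycle $C$ is itself a polynomial-size certificate: one verifies in linear time that $C$ is a cycle of $G$ and that $\sum_{e\in C} w_e = 0$. So the zero-weight cycle problem is clearly in NP.

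\textbf{NP-hardness.} Following the strategy of \cite{kawase2015finding} for zero-weight \emph{paths}, I would reduce from \textsc{Subset Sum}: given positive integers $a_1,\dots,a_k$ and a target $t$, decide whether some subset sums to exactly $t$. The idea is to build a graph in which a simple cycle is forced to make, for each $i$, a binary choice between ``include $a_i$'' and ``exclude $a_i$'', and then to arrange the weights so that the total is zero precisely when the chosen subset sums to $t$. Concretely, I would create vertices $v_0,v_1,\dots,v_k$ together with, for each $i \in \{1,\dots,k\}$, two parallel arcs from $v_{i-1}$ to $v_i$: a ``yes'' arc of weight $a_i$ and a ``no'' arc of weight $0$ (parallel edges are explicitly allowed by the terminology in \cref{sec:terminology}). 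Finally add a single return arc from $v_k$ to $v_0$ of weight $-t$. Every simple cycle of this graph must use the return arc once and then traverse $v_0 \to v_1 \to \dots \to v_k$ choosing one of the two arcs at each step, so it corresponds exactly to a subset $I \subseteq \{1,\dots,k\}$, and its weight is $\sum_{i\in I} a_i - t$. Hence a zero-weight cycle exists if and only if the \textsc{Subset Sum} instance is a yes-instance. The construction is plainly polynomial time, and the numbers used ($a_i$ and $t$) are exactly those of the input, so there is no blow-up in encoding size. If one insists on no parallel edges, subdivide each arc with a fresh internal vertex; if one insists on a strongly connected graph (as assumed throughout the paper), one can add a cheap gadget of edges of sufficiently large weight that can never participate in a zero-weight cycle — or simply note that, as in \cref{thm:expweights}, the reductions here tolerate such cosmetic modifications.

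\textbf{Main obstacle.} The reduction itself is routine; the only point that needs care is ruling out ``spurious'' zero-weight cycles that do not correspond to a clean subset choice — in particular cycles that might wind through the gadget in an unintended way, or cycles created by whatever preprocessing is used to enforce simplicity or strong connectivity. The layered DAG-plus-one-back-arc structure makes this easy in the basic construction, since the only way to close a cycle is via the unique back arc $v_k \to v_0$, forcing exactly one traversal of each ``layer''. When adding auxiliary edges to obtain strong connectivity, the key is to choose their weights large enough (e.g. larger in absolute value than $\sum_i a_i + t$ and of a fixed sign) so that any cycle using them cannot have weight zero; I would make this quantitative bound explicit. That bookkeeping is the one place where the argument, though elementary, must be written down carefully.
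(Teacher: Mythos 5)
Your proof is correct, but it takes a genuinely different route from the paper. The paper reduces from Hamiltonian $s,t$-path: given $G$ with designated $s$ and $t$, it adds the edge $(t,s)$ with weight $|V(G)|-1$ and gives every other edge weight $-1$, so that a zero-weight cycle is exactly the back edge plus a path of $|V(G)|-1$ edges from $s$ to $t$, i.e.\ a Hamiltonian $s,t$-path. You instead reduce from \textsc{Subset Sum} via a layered gadget with one back arc of weight $-t$. Both arguments are sound (your layered graph is in fact already strongly connected, since every vertex lies on a cycle through the back arc, so the gadgetry you hedge about in your last paragraph is unnecessary; and parallel arcs are explicitly permitted by the paper's conventions). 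The substantive difference is where the hardness lives: the paper's reduction uses weights bounded by $|V(G)|$, so it establishes \emph{strong} NP-hardness (hardness even with unary-encoded weights), whereas your reduction packs all the hardness into the binary encoding of the $a_i$ and $t$, yielding only weak NP-hardness. The lemma as stated only claims NP-completeness, and the downstream use (the induced cycle pattern distinction problem, with integer weights in binary) is served equally well by either version, so your proof is a valid drop-in replacement; the paper's version is simply a slightly stronger statement obtained with a comparably simple construction.
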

\begin{proof} 
    We reduce from the Hamiltonian path problem with fixed start vertex $s$ and end vertex $t$ (this NP-complete problem asks whether there exists an $s,t$ path visiting all vertices in a digraph $G$). 
    If we have a digraph $G$ with specified vertices $s$ and $t$, we can add edge $(t,s)$ with weight $|V(G)|-1$ and give all other edges weight $-1$. 
    Clearly a Hamiltonian $s,t$ path in $G$ leads to a zero weight cycle in $G'$ and vice versa. 
    So we reduced Hamiltonian path to the zero-weight cycle problem. 
    Moreover, the latter problem is clearly in NP as we could just give the cycle as certificate.
\end{proof}

To check if a solution to the bounded realization problem is correct, one can compare the sign of each cycle with the induced sign. 

\begin{definition}[Induced cycle pattern distinction problem]
    Suppose we are given a digraph $G=(V,E)$ and two weight functions $w_1:E\to \mathbb{Z}$ and $w_2:E\to \mathbb{Z}$. 
    Is there a cycle $C\subseteq E$ such that $\psi_{w_1}(C)\neq \psi_{w_2}(C)$? 
\end{definition}

We show that the induced cycle pattern distinction problem is NP-complete using the zero-weight cycle problem.

\begin{theorem}
    The induced cycle pattern distinction problem is NP-complete.
\end{theorem}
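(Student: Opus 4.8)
The plan is to show both membership in NP and NP-hardness. For membership, given a candidate cycle $C$, we can check in polynomial time that $C$ is a valid cycle and compute $w_1(C)$ and $w_2(C)$, hence verify that $\psi_{w_1}(C) \neq \psi_{w_2}(C)$; so the cycle $C$ itself serves as a polynomial-size certificate and the problem is in NP. The substantive direction is NP-hardness, and the plan is to reduce from the zero-weight cycle problem, which is NP-complete by \cref{lem:zeroweightcycle}.

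For the reduction, suppose we are given an instance $(G, w)$ of the zero-weight cycle problem, i.e.\ a digraph $G = (V,E)$ and a weight function $w : E \to \mathbb{Z}$, and we want to decide whether there is a cycle $C$ with $w(C) = 0$. The idea is to build two weight functions on $G$ (possibly after a harmless modification of $G$ to keep it strongly connected, or we may simply use $G$ directly since the definition of the distinction problem does not insist on strong connectivity for the input) whose induced cycle patterns differ exactly on the cycles of $w$-weight zero. The natural choice is to take $w_1 = w$ and $w_2 = w + \epsilon\cdot\bm{1}$ for a suitably small perturbation — but since we are restricted to integer weights, the clean way is to scale: set $w_1(e) = N\cdot w(e)$ and $w_2(e) = N\cdot w(e) + 1$ for every edge $e$, where $N$ is chosen large enough (e.g.\ $N > n$, so that $N \cdot w(C)$ for a cycle $C$ is either $0$ or has absolute value at least $N > n \geq |C|$). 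Then for any cycle $C$ we have $w_2(C) = N\cdot w(C) + |C|$, and one checks: if $w(C) = 0$ then $w_1(C) = 0$ (so $\psi_{w_1}(C) = 0$) while $w_2(C) = |C| > 0$ (so $\psi_{w_2}(C) = +$), giving $\psi_{w_1}(C) \neq \psi_{w_2}(C)$; whereas if $w(C) \neq 0$, then $N\cdot w(C)$ dominates the additive $|C| \in \{1,\dots,n\}$ term in sign, so $\operatorname{sgn}(w_1(C)) = \operatorname{sgn}(w_2(C)) = \operatorname{sgn}(w(C)) \neq 0$, hence $\psi_{w_1}(C) = \psi_{w_2}(C)$. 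Therefore the distinction instance $(G, w_1, w_2)$ is a \textbf{yes} instance if and only if $(G,w)$ has a zero-weight cycle. Since $w_1, w_2$ are computable in polynomial time (the entries grow by only a factor of $N = O(n)$ plus an additive constant), this is a valid polynomial-time reduction, establishing NP-hardness and completing the proof.

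The main obstacle I anticipate is getting the arithmetic of the perturbation exactly right so that the sign of a nonzero-weight cycle is genuinely preserved — this hinges on the bound $|C| \leq n$ for a simple cycle, so that adding at most $n$ to $N\cdot w(C)$ cannot flip or zero out the sign when $|N\cdot w(C)| \geq N > n$. A secondary subtlety is whether the problem statement's implicit strong-connectivity assumption on $G$ (from \cref{sec:terminology}) needs to be respected; if so, one can add a fresh vertex and a pair of high-$w$-weight arcs forming no useful new cycle, or more simply observe that the zero-weight cycle problem remains NP-hard on strongly connected digraphs (the Hamiltonian-path reduction in \cref{lem:zeroweightcycle} already produces such a graph after adding the back edge $(t,s)$). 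Either way, this is a routine fix and does not affect the core argument.
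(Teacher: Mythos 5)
Your proof is correct, and it takes a genuinely simpler route than the paper's. Both arguments reduce from the zero-weight cycle problem (\cref{lem:zeroweightcycle}), but the paper perturbs a single distinguished edge $d$ (after rescaling all weights by $|V|+1$ and adding $1$), so its two patterns differ only on zero-weight cycles passing through $d$; this forces a disjoint union of $|E|$ copies of $G$, one per choice of $d$. Your uniform shift $w_1 = N\cdot w$, $w_2 = N\cdot w + \bm{1}$ with $N > n$ detects any zero-weight cycle in one shot, and the arithmetic is sound: for $w(C)\neq 0$ the additive term $|C|\le n < N \le |N\cdot w(C)|$ can neither flip nor zero out the sign, while for $w(C)=0$ one gets $\psi_{w_1}(C)=0$ but $\psi_{w_2}(C)={+}$. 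What the paper's heavier construction buys is that neither of its weight functions has any zero-weight cycle, so the detected discrepancy is a strict ${+}$-versus-${-}$ reversal; that makes the hardness robust to variants that would only count sign flips or that restrict to patterns without $0$-cycles. Your reduction instead relies on a $0$-versus-${+}$ discrepancy counting as a distinction, which the problem definition does permit. Two small caveats: adding the back edge $(t,s)$ in \cref{lem:zeroweightcycle} does not by itself make the graph strongly connected, so your parenthetical claim there is slightly off, but the alternative fix you sketch (a fresh vertex with very heavy arcs creating no new zero-weight or sign-ambiguous cycles) is routine and no less formal than the paper's own remark on this point. The NP-membership argument is the same in both proofs.
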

\begin{proof}
Suppose we have a digraph $G=(V,E)$ with weight function $w:E\to\mathbb{Z}$. 
Fix an edge $d$, and consider the weight functions $w_1':E\to\mathbb{Z}$ and $w_2^d:E\to\mathbb{Z}$ defined by 
\begin{equation*}
    w_1'(e)=(|V+1|)w(e)+1, \quad w_2^d(e)=\begin{cases}
        (|V|+1)w(e)+1 & e\neq d\\
        (|V|+1)(w(e)-1)+1 & e=d
    \end{cases} \qquad \text{ for } e\in E \enspace .
\end{equation*}

Note that for both $w_1'$ and $w_2^d$, there are no cycles of weight exactly $0$. 
This follows since all edge weights are congruent 1 modulo $|V|+1$, and since the cycle lengths are at most $|V|$ this means that its weight cannot be divisible by $|V|+1$. 
Furthermore, we have $w_1'(C)>0\Leftrightarrow w(C)\geq 0$ and $w_1'(C)<0\Leftrightarrow w(C)<0$. 
Moreover, we have $w_2^d(C)> 0 \Leftrightarrow (w(C)>0)\vee(w(C)=0\wedge d\notin C)$ and $w_2^d(C)< 0 \Leftrightarrow (w(C)<0)\vee(w(C)=0\wedge d\in C)$. 
If there is a cycle with a different sign for $w_1'$ and $w_2^d$, this means that there is a cycle $C$ with $w_1^d(C)>0$ and $w_2^d(C)<0$, and this can only happen if $w(C)=0$ and $d\in C$. 
If the induced cycle pattern distinction problem on $w_1'$ and $w_2^d$ is a ``No''-instance, we know that there is no cycle $C$ containing $d$ with $w(C)=0$.
    
Now we describe the reduction. 
Let $G'$ be the disjoint union of $|E|$ copies of $G$. 
Let $w_1$ be a weight function for $G'$ with the same value as $w_1'$ for each of the copies. 
Let $w_2$ be the weight function equal to $w_2^d$ on the $d$-th copy of $G$. 
If we solve the induced cycle pattern distinction problem on $G'$, this is the same as solving it on $G$ with $w_1'$ and $w_2^d$ for all $d$. 
Hence, if we have a ``Yes''-instance for the problem on $G'$, this implies that there was a zero cycle in $G$ for $w$.
On the other hand, if it is a ``No''-instance, this implies that for each edge, there is no zero-cycle of $G$ containing it.
Therefore, there was no zero-cycle at all. 
That completes our reduction. 
Note that, while the graph $G'$ as described is not strongly connected, one can modify the construction slightly without changing the argument, namely we may add edges with large weights between the disconnected components of $G'$. 

Finally, since there is a trivial certificate for ``Yes''-instances of the cycle pattern distinction problem (just calculate the weights of the cycle), we conclude that the induced cycle pattern distinction problem is NP-complete. 
\end{proof}

This immediately leads to the following result:
\begin{lemma}
    The bounded realization problem is in $\Sigma_2^{P}$.
\end{lemma}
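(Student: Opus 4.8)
The plan is to exhibit the canonical $\exists\forall$ characterization witnessing membership in $\Sigma_2^P = NP^{NP}$: an existential guess produces a candidate weight function, and a universal check — answerable by a single call to a coNP (equivalently, NP) oracle — verifies that the candidate induces the prescribed cycle pattern. Concretely, I would argue that $(G,w,k)$ is a yes-instance exactly when there exists an integer weight function $w'\colon E\to\ZZ$ with $|w'(e)|\le k$ for all $e\in E$ and $\psi_{w'}=\psi_w$.

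First I would check that the guessed object is short enough. A candidate $w'$ consists of $m$ integers, each of absolute value at most $k$, and is therefore described by $O(m\log k)$ bits, which is polynomial in the size of the input $(G,w,k)$ no matter whether $k$ is encoded in unary or in binary. Hence a nondeterministic polynomial-time machine can guess such a $w'$ within its time budget. Next, given $G$, $w$ and the guessed $w'$, the machine must decide the predicate $\psi_{w'}=\psi_w$. By definition this predicate fails exactly when there is a cycle $C\subseteq E$ with $\psi_{w'}(C)\neq\psi_w(C)$, i.e.\ exactly when $(G,w',w)$ is a yes-instance of the induced cycle pattern distinction problem. Since that problem is NP-complete by the preceding theorem, its complement — the predicate $\psi_{w'}=\psi_w$ — lies in coNP and can be decided with one query to an NP oracle. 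Composing the guess of $w'$ with this oracle call gives an $NP^{NP}$ algorithm, so the bounded realization problem lies in $\Sigma_2^P$.

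I do not expect a genuine obstacle here, since the hard work — the NP-completeness of induced cycle pattern distinction — is already done; the only points needing (routine) care are confirming that the guessed $w'$ has polynomially bounded bit-length, which follows because the bound $k$ is part of the input rather than something that could blow up, and correctly identifying the verification step as the \emph{complement} of an NP-complete problem (hence coNP) rather than a potentially harder task. One could also phrase the same argument as: the problem is in $\exists\cdot\Pi_1^P$, and $\exists\cdot\Pi_1^P=\Sigma_2^P$.
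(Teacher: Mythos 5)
Your proposal is correct and follows the same approach as the paper's (one-line) proof: nondeterministically guess the bounded weights $w'$ and use an NP oracle for the induced cycle pattern distinction problem to verify $\psi_{w'}=\psi_w$. You simply spell out the routine details (bit-length of the guess, that verification is the complement of the NP-complete distinction problem) that the paper leaves implicit.
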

\begin{proof}
If the answer is ``Yes'', then a nondeterministic machine can just guess the correct weights and ask the oracle if the two graphs have the same cycle pattern.   
\end{proof}

In the bounded realization problem, we try to optimize a problem for which checking validity of the answer is already $coNP$-complete. There are numerous problems of this kind that turn out to be $\Sigma_2^{P}$-complete, see for example the optimization problems in \cite{schaefer2002completeness}. This motivates the following conjecture.

\begin{conjecture}
    The bounded realization problem is $\Sigma_2^{P}$-complete.
\end{conjecture}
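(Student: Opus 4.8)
The plan is to prove $\Sigma_2^P$-hardness by reducing from the canonical $\Sigma_2^P$-complete problem $\exists\forall$-DNF-SAT: given a $3$-DNF formula $\phi(X,Y)=T_1\vee\dots\vee T_m$ over disjoint variable blocks $X=(x_1,\dots,x_a)$ and $Y=(y_1,\dots,y_b)$, decide whether $\exists X\,\forall Y\,\phi(X,Y)$. Writing $\phi=\neg\phi'$ with $\phi'$ a $3$-CNF, this is the same as asking for an $X$ such that $\phi'(X,\cdot)$ is unsatisfiable. Given such an instance I would build, in polynomial time, a strongly connected weighted digraph $(G,w)$ with large (exponential) integer weights, together with a polynomially bounded integer $k$, such that $\psi_w$ admits an integer realization bounded by $k$ in absolute value if and only if $\exists X\,\forall Y\,\phi(X,Y)$. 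Since the problem already lies in $\Sigma_2^P$, this would finish the proof.

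The digraph $G$ would be built from three interacting parts. First, a \emph{rigid backbone}: edges whose intended weights grow geometrically, together with auxiliary cycles that, via the cycle constraints of the realization cone (\cref{def:realization+cone}), force these weights to their prescribed values in every realization bounded by $k$; this is the same forcing mechanism as in the proof of \cref{thm:expweights}, and it leaves bounded realizations with no continuous freedom. Second, an \emph{$X$-selection block}: for each existential variable $x_j$ a pair of edges $a_j,b_j$ constrained so that $w'(a_j)+w'(b_j)$ is pinned to a fixed small value and both coordinates are bounded below; for the bound $k$ these constraints leave exactly two integer solutions, so every bounded realization $w'$ is forced to encode some Boolean assignment $X\in\{0,1\}^a$ and nothing beyond it. Third, a \emph{clause-checking layer}: a family of simple cycles that first pass through a ``$Y$-assignment phase'' committing to some $\beta\in\{0,1\}^b$, and then visit one small gadget per clause of $\phi'$, where the cycle selects a literal of that clause and routes through the matching selection edge (for an $X$-literal) or through a consistency gadget tied to the $Y$-phase (for a $Y$-literal); the weights are chosen so that the contribution of a clause-gadget is minimal exactly when the selected literal is true under $(X,\beta)$. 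Setting $\psi_w=+$ on every such cycle and adding a fixed offset, one gets that under any bounded $w'$ encoding $X$ such a cycle has weight $0$ precisely when $\beta$ and the literal choices witness that $\phi'(X,\cdot)$ is satisfiable, and strictly positive otherwise. Hence a bounded $w'$ encoding $X$ agrees with $\psi_w$ on the whole clause layer iff $\phi'(X,\cdot)$ is unsatisfiable.

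Correctness would then follow in the expected two directions. If $\exists X\,\forall Y\,\phi$, fix a witnessing $X$, set the selection edges accordingly and the backbone to its forced values; no clause-checking cycle then creates a mismatch, and a congruence bookkeeping in the style of the induced cycle pattern distinction proof --- making all edge weights congruent to $1$ modulo $|V|+1$ --- rules out the remaining (``parasitic'') cycles, so $\psi_{w'}=\psi_w$ with all weights at most $k$. Conversely, any realization bounded by $k$ is pinned down on the backbone and the selection block, hence encodes some $X$; since it agrees with $\psi_w$ on every clause-checking cycle, $\phi'(X,\cdot)$ is unsatisfiable, i.e.\ $\forall Y\,\phi(X,Y)$.

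The hard part will be the rigidity of the $X$-encoding: bounded integer points of a realization cone can form a large set, and one must prevent a realization from exploiting simultaneous slack across several cycle inequalities to escape the intended discrete structure. Geometrically growing backbone weights and tight cycle constraints everywhere (as in \cref{thm:expweights}) should force this, but combining it with the selection block and the clause layer while keeping every cycle simple and every parasitic cycle automatically matching is delicate. A secondary difficulty, already visible in \cref{lem:zeroweightcycle} and the distinction-problem reduction, is making a single simple clause-checking cycle commit to a \emph{consistent} assignment $\beta$ across all clauses; this needs the standard ``assignment phase then clause phase'' gadgetry together with the modular trick above to kill the parasitic cycles that would otherwise let the $\forall$-adversary cheat.
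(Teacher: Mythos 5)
First, note that the paper does not prove this statement at all: it is stated as a conjecture, and the authors only establish membership in $\Sigma_2^P$ (via the preceding lemma). So there is no ``paper proof'' to compare against; the question is whether your sketch closes the open hardness direction. It does not. What you have written is a plan with the right target (a reduction from $\exists\forall$\,SAT) and a correct identification of where the difficulties lie, but none of the three gadgets is actually constructed, and the correctness claims for them are asserted rather than proved. The acknowledged ``hard part'' is in fact the entire content of the conjecture.

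The most concrete gap is the rigid backbone, and it is not merely unfinished but internally in tension. You want a family of edges whose weights ``grow geometrically'' and are ``forced to their prescribed values in every realization bounded by $k$,'' with $k$ polynomially bounded. But the forcing mechanism of \cref{thm:expweights} produces rigidity precisely by making every integer realization have an exponentially large entry --- i.e.\ by making bounded realizations \emph{not exist}. If the backbone genuinely requires geometric weights exceeding $k$, then every instance of your reduction is a NO-instance regardless of $\phi$; if the backbone weights stay below $k$, you have not explained how strict cycle-sign inequalities (which define an open cone, always leaving integer slack of at least one unit in each constraint) pin an edge weight or a sum $w'(a_j)+w'(b_j)$ to an \emph{exact} value, leaving ``exactly two integer solutions.'' A mechanism that sandwiches integer quantities to within a unit using opposing $+$ and $-$ cycles is plausible, but it has to be exhibited and verified against \emph{all} cycles of the resulting graph, not just the intended ones. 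That is the second gap: in the cycle-pattern setting the adversary is the set of all directed cycles of $G$, and the assignment-phase-then-clause-phase gadgetry generates many parasitic cycles; the modular trick from the distinction-problem proof only rules out accidental zero-weight cycles, not cycles whose sign under a candidate $w'$ differs from its sign under $w$. Until the backbone, the selection block, and the parasitic-cycle analysis are written out and checked, the conjecture remains open.
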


\section{Games on graphs}
\label{sec:games}

We turn to the application of cycle patterns and the insights gathered in the previous sections. 
We start by introducing three classes of games: mean payoff games, energy games, and parity games.

\subsection{Basics}

For a more in-depth treatment of games on graphs we refer to \cite{fijalkow2023gamesgraphs}. 
Each of these games is played on an \emph{arena} 
$G=(\Vmax \sqcup \Vmin,E)$, 
a directed graph with vertices divided into two disjoint subsets: 
the set $\Vmax$ of vertices controlled by player Max and the set $\Vmin$ of vertices controlled by player Min. 
We assume that every vertex of the graph $G$ has at least one outgoing edge. 
A \emph{weighted arena} is a pair of an arena $G$ and weight function $w : E \to \RR$ on the edges of the underlying graph; for algorithmic purposes, we usually assume that $w$ only attains integral values. As in the previous sections, we may also interpret the weight function as a vector in $\RR^E$ instead.
Fixing a start point, this defines a (mean payoff) game.   
The game is played as follows: 
a pebble is placed on some initial vertex $v_0$, and the player who controls $v_0$ moves the pebble along one outgoing edge $(v_0,v_1)$. 
Next, the controller of $v_1$ picks an outgoing edge $(v_1,v_2)$, and so on, continuing indefinitely. 
This procedure yields an infinite sequence of edges $\pi = (e_1,e_2,e_3,\ldots)$. 
The \emph{outcome} of the game is based on this sequence, but is defined differently for different games. 
For mean payoff games (MPGs) the outcome is the average weight of the edges encountered:
\[
O(\pi)=\limsup_{t\to\infty}\frac{1}{t}\sum_{i=1}^tw(e_i) \, .
\]
Player Max wants to maximize the outcome, while player Min wants to minimize it. 
For energy games (EGs), the outcome is the smallest cumulative sum of weights
\[
O(\pi)=\inf_{t \ge 1} \sum_{i=1}^tw(e_i) \, .
\]
In this case, the outcome may be equal to $-\infty$. 
For parity games (PGs), the outcome is
\[
O(\pi)=\begin{cases}
    1 & \text{if }\limsup_{i\to\infty}(w(e_i))=0 \mod{2},\\
    -1 & \text{if }\limsup_{i\to\infty}(w(e_i))=1 \mod{2} \, .
\end{cases}
\]
In other words, player Max wins a parity game if the largest edge weight that occurs infinitely often is even, and player Min wins if the largest edge weight that occurs infinitely often is odd. 
In the context of parity games, we refer to the weights as \emph{priorities}. 
In the literature on parity games, the priorities are usually on the nodes, but this does not make a fundamental difference: 
a parity game with priorities on the nodes can easily be transformed into an equivalent one with priorities on edges, and vice versa.

A \emph{(positional) strategy} of
player Max is a function $\sigma \colon \Vmax \to V$ that associates a vertex reachable in a single move to every vertex controlled by Max, i.e., it is a function that satisfies $(i,\sigma(i)) \in E$ for all $i \in \Vmax$. 
We define a strategy $\tau \colon \Vmin \to V$ of player Min analogously. 
We say that player Max plays according to $\sigma$ if they move the pebble to $\sigma(i)$ whenever it lands on $i$. 
If we fix an initial vertex $v_0$ and suppose that players play according to strategies $(\sigma,\tau)$, then the entire movement of the pebble $\pi = \pi(\sigma,\tau,v_0) = (e_1,e_2,e_3,\ldots)$ is determined. 
More precisely, the pebble goes to some cycle of the graph $G$ and stays there forever. 
It is known that all three classes of games are \emph{positionally determined}. 
That means that there exists a function $\val \colon V\to \QQ \cup\{-\infty\}$ and a pair of strategies $(\sigma^*,\tau^*)$ that for any vertex $v_0$ and every pair $(\sigma,\tau)$ satisfy
\begin{equation}\label{eq:optimal_policies}
O\bigl(\pi(\sigma,\tau^*,v_0)\bigr) \le \val(v_0) \le O\bigl(\pi(\sigma^*,\tau,v_0)\bigr) \, .
\end{equation}
In other words, by playing according to $\sigma^*$ player Max guarantees that the outcome of the game is at least $\val(v_0)$ and by playing according to $\tau^*$ player Min guarantees that the outcome of the game is at most $\val(v_0)$. 
Such strategies $\sigma^*,\tau^*$ are called \emph{optimal} and the function $\val$ is called the \emph{value} of the game. We note that the inequality from \cref{eq:optimal_policies} is true even for non-positional strategies $(\sigma,\tau)$, but, for the sake of simplicity, we only consider positional strategies in this work.

We define the problem of solving an MPG as the problem of finding the set of vertices with nonnegative value. Following \cite{DBLP:journals/dam/BjorklundV07}, we call the resulting partition of vertices the \emph{zero-mean partition} of the arena.
Likewise, solving an EG means finding the set of vertices with finite value, and solving a PG means finding the set of vertices that have value equal to $1$.\footnote{More generally, one can consider the problem of finding the value and a pair of optimal strategies in each of these games. 
However, these problems are polynomial-time (Turing) equivalent to the problems that we consider in this work, see, e.g., \cite{DBLP:journals/dam/BjorklundV07} for the details in the case of MPGs.} 
The following definition and a well-known lemma relate the problems of solving games to cycle patterns. 
We use the following notation: 
if $\sigma$ is a strategy of player Max, we denote $E_\sigma = \{(i,j) \in E, i \in \Vmin\} \cup \{\bigl(i,\sigma(i)\bigr) \colon i \in \Vmax\}$.

\begin{definition}[Zero-mean partition problem]
    Given an arena $G=(\Vmax\cup \Vmin, E)$ and a (parity-)realizable cycle pattern $\psi$, determine all vertices $v$ with the following property:
    \begin{itemize}
        \item[] There exists a strategy $\sigma$ of player Max, such that every cycle $C$ reachable from $v$ in the graph $(V,E_{\sigma})$ has $\psi(C)\in \{0,+\}$.
    \end{itemize}
We denote the set of all such vertices by $\Win(\psi)$, called the \emph{winning region}. 
\end{definition}

\begin{lemma}\label{lem:solvinggames}
If $w$ is a weight vector that realizes $\psi$, then the set $\Win(\psi)$ coincides with the set of vertices that have nonnegative value in the MPG played on $(V,E,w)$ and with the set of vertices that have finite value in the EG played on $(V,E,w)$. 
Moreover, if $w$ is a vector of priorities that parity-realizes $\psi$, then $\Win(\psi)$ coincides with the set of vertices that have value $1$ in the PG played on $(V,E,w)$.
\end{lemma}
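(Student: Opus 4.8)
The plan is to prove the three claimed coincidences—nonnegative MPG value, finite EG value, and PG value $1$—by relating each game's winning condition directly to the cycle pattern $\psi$ via the standard positional determinacy machinery already recalled in the excerpt. The unifying observation is that once both players fix positional strategies $(\sigma,\tau)$, the pebble from any start vertex $v_0$ eventually enters and loops forever on a single simple cycle $C$ of the graph $(V, E_\sigma \cap E_\tau)$, and the outcome of the play depends only on $C$: in the MPG it is $w(C)/|C|$, in the EG it is finite if and only if $w(C) \ge 0$ (a negative cycle drives the running sum to $-\infty$, a nonnegative one keeps it bounded below), and in the PG it is $1$ iff the largest priority on $C$ is even, i.e., iff $\psi^w(C) = +$. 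So the sign of $w(C)$ (resp.\ the parity realization value) is exactly $\psi(C)$.

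First I would handle the MPG (and simultaneously the EG). Fix $v_0 \in \Win(\psi)$ with witnessing Max strategy $\sigma$. I claim $\sigma$ guarantees outcome $\ge 0$. Indeed, take any Min strategy $\tau$; the play $\pi(\sigma,\tau,v_0)$ settles on a cycle $C$ in $(V, E_\sigma)$ reachable from $v_0$, so by definition of $\Win(\psi)$ we have $\psi(C) \in \{0,+\}$, hence $w(C) \ge 0$, hence $O(\pi) = w(C)/|C| \ge 0$; likewise $O(\pi) = \inf_t \sum_{i=1}^t w(e_i) > -\infty$ for the EG since eventually the increments sum over copies of the nonnegative cycle $C$ plus a bounded transient. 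By the left inequality in~\eqref{eq:optimal_policies} applied with the optimal $\tau^*$, or more directly by taking $\tau$ to be Min's optimal response, this shows $\val(v_0) \ge 0$ in the MPG and $\val(v_0) > -\infty$ in the EG. Conversely, if $v_0 \notin \Win(\psi)$, then for \emph{every} Max strategy $\sigma$ there is a cycle $C$ reachable from $v_0$ in $(V,E_\sigma)$ with $\psi(C) = -$, i.e.\ $w(C) < 0$. Using positional determinacy of the game for Min, one fixes Min's optimal strategy $\tau^*$; then against $\sigma^*$ (Max's optimal strategy) the resulting play settles on a cycle $C$, and this cycle lies in $(V, E_{\sigma^*})$ and is reachable from $v_0$, so $\psi(C) = -$, giving $\val(v_0) = O(\pi(\sigma^*,\tau^*,v_0)) = w(C)/|C| < 0$, and similarly $O \to -\infty$ for the EG. Hence $v_0 \notin \Win(\psi)$ forces negative (resp.\ infinite-in-magnitude) value, which is the contrapositive of what we want.

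For the PG statement the argument is identical after substituting the parity outcome: with $w$ a priority function parity-realizing $\psi$, a play that settles on $C$ has outcome $1$ iff the largest priority occurring infinitely often—which is exactly the largest priority on $C$—is even, i.e.\ iff $\psi^w(C) = \psi(C) = +$. So $v_0 \in \Win(\psi)$ (a Max strategy $\sigma$ forcing all reachable cycles to have $\psi \in \{0,+\} = \{+\}$, since condition~(i) of \cref{thm:charPG} forbids $0$-cycles in a parity-realizable pattern) is precisely the condition that Max has a strategy forcing outcome $1$ from $v_0$, which by positional determinacy is precisely $\val(v_0) = 1$.

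The only real subtlety is the direction $v_0 \notin \Win(\psi) \Rightarrow$ bad value: one must argue that Max cannot do better than his optimal positional strategy, and that $\Win(\psi)$ as defined (a purely combinatorial "Max can force all reachable cycles nonnegative" condition) matches "Max can force the real-valued outcome to be $\ge 0$." This is where positional determinacy does the work: both the optimal Max and Min strategies may be taken positional, the induced play is eventually periodic on a single cycle, and the outcome is a monotone function of that cycle's weight-sign; so the combinatorial and quantitative versions of "winning" coincide. I would state this cycle-settling fact as a short preliminary observation (it is folklore, essentially "in $(V, E_\sigma \cap E_\tau)$ every vertex has out-degree one, so every infinite walk is eventually a simple cycle") and then the rest is bookkeeping over the three outcome definitions. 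I expect no genuine obstacle beyond being careful that $E_\sigma$ (only Max's choices fixed) still allows Min to realize any cycle that is reachable and Min-navigable, which is exactly how $\Win(\psi)$ quantifies over cycles in $(V, E_\sigma)$ rather than in $(V, E_\sigma \cap E_\tau)$.
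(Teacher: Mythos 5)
Your proposal is correct and follows essentially the same route as the paper: one direction uses the witnessing strategy $\sigma$ together with the left inequality of \eqref{eq:optimal_policies}, and the converse uses Max's optimal positional strategy $\sigma^*$ as the witness, with the outcome of a play under positional strategies determined by the sign (or top priority) of the eventually-reached cycle. The only imprecision is the sentence claiming the cycle reached by $\pi(\sigma^*,\tau^*,v_0)$ is itself negative; as you note in your closing paragraph, one should first let Min steer to the reachable negative cycle in $(V,E_{\sigma^*})$ to get $\val(v_0)\le O(\pi(\sigma^*,\tau,v_0))<0$, which is exactly the contrapositive of the paper's converse step.
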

\begin{proof}
To prove the first claim, consider the MPG played on $(V,E,w)$. 
Let $v \in \Win(\psi)$ and let $\sigma$ be such that every cycle $C$ reachable from $v$ in the graph $(V,E_{\sigma})$ has $\psi(C)\in \{0,+\}$. 
Then, $\val(v) \ge O\bigl(\pi(\sigma,\tau^*,v)\bigr) \ge 0$, because the cycle reached by the sequence $\pi(\sigma,\tau^*,v)$ satisfies $\psi(C)\in \{0,+\}$, hence this cycle has nonnegative weight. 
Conversely, if $\val(v) \ge 0$, then the strategy $\sigma^*$ satisfies $O\bigl(\pi(\sigma^*,\tau,v)\bigr) \ge 0$ for all $\tau$, so any cycle reachable from $v$ in $(V,E_{\sigma^*})$ must satisfy $\psi(C)\in \{0,+\}$. 
The remaining claims follow analogously.
\end{proof}

Recall from Section \ref{sec:terminology} that one can easily construct a realization for a parity-realizable cycle pattern from a parity-realization. Therefore, \cref{lem:solvinggames} immediately gives us a (well-known) reduction from solving a PG to solving an MPG or an EG.

Finally, in general, games can be played on graphs which are not strongly connected, while our analysis of cycle patterns is restricted to strongly connected graphs. 
However, restricting attention to only strongly connected arenas does not make a significant difference for solving games. 
Indeed, if an arena is not strongly connected, then one can find the zero-mean partition by repeatedly finding the strongly connected components of the arena, solving the zero-mean partition on one of the final components, propagating it using attractor computations, and simplifying the graph. 
The details of this procedure are discussed in \cite[Section~3.1]{Friedmann2009ParityPractice} in the context of parity games and are valid for the problem of finding the zero-mean partition. 
Hence, from now on we assume that the games are played on strongly connected arenas.

\subsection{Solving games using cycle patterns} \label{sec:solving-games-via-cycle-patterns}

As we have seen in \cref{lem:solvinggames}, the cycle pattern already suffices to determine the zero-mean partition. 
However, it has strictly less information than the weight function. 
We examine the hardness of computing the zero-mean partition from access to the cycle pattern. 

First, we consider the most general setting, where we have an arena $G=(\Vmax\cup \Vmin)$ and a cycle pattern $\psi$, and we ignore realizability for now. 
Of course, we want to find the nodes with nonnegative value. 
However, without the realizability assumption, the game might not have a well-defined value. 
Therefore, we use an asymmetric definition of the zero-mean partition problem:

\begin{definition}[General zero-mean partition problem]
     Let $\psi$ be a (not necessarily realizable) cycle pattern of the digraph $G=(\Vmin\cup \Vmax,E)$ given by a Boolean circuit. 
     Let $v_0$ be an initial node, is there a positional Maximizer strategy $\sigma$ such that, if the Maximizer plays according to $\sigma$, every cycle reachable from $v_0$ is nonnegative?
\end{definition}

To show our hardness result, we need the following well-known $\Sigma_{2}^P$-complete problem (see \cite[Cor. 6]{WRATHALL197623}).

\begin{definition}[$\exists\forall$ SAT]
    Given a Boolean formula 
    $\phi(x,y)$ (where $x$ and $y$ are Boolean vectors), is it true that $(\exists x)(\forall y) \phi(x,y)$? 
\end{definition}

 This allows us to deduce the following theorem.
 
\begin{theorem} \label{thm:general-zero-mean-partition-problem}
    The general zero-mean partition problem is $\Sigma_{2}^{P}$-complete.
\end{theorem}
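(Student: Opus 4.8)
The plan is to show membership in $\Sigma_2^P$ and then $\Sigma_2^P$-hardness by reducing from $\exists\forall$ SAT.

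For membership, I would observe that a ``Yes'' instance has the form: there exists a positional Maximizer strategy $\sigma$ (a polynomial-size object, since it just picks one outgoing edge per Maximizer vertex) such that \emph{for all} cycles $C$ reachable from $v_0$ in $(V,E_\sigma)$, we have $\psi(C)\in\{0,+\}$. The inner universal condition is in coNP: a counterexample is a single cycle $C$, which can be checked in polynomial time by evaluating the Boolean circuit for $\psi$ and verifying reachability from $v_0$ in $(V,E_\sigma)$ and that $C$ is a genuine simple cycle using only edges of $E_\sigma$. So the whole statement is $\exists$ (poly certificate) followed by a coNP check, i.e.\ in $\Sigma_2^P = NP^{NP}$.

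For hardness, given a formula $\phi(x,y)$ with $x\in\{0,1\}^a$, $y\in\{0,1\}^b$, I would build an arena in which the Maximizer first commits to an assignment of $x$ via a ``gadget'' path: for each variable $x_i$, a Maximizer-controlled vertex with two outgoing edges encoding $x_i=0$ or $x_i=1$, chained together. After the $x$-choices, control passes to the Minimizer, who analogously picks an assignment of $y$ through a chain of Minimizer-controlled vertices. At the end of the $y$-chain, the play returns to $v_0$, closing a cycle. The cycle pattern $\psi$ is then defined by a Boolean circuit that reads off the encoded $(x,y)$ from the edge set of the cycle and outputs $+$ if $\phi(x,y)$ holds and $-$ otherwise; cycles not of this canonical form (e.g.\ short cycles inside a gadget, if any exist) are assigned $+$ so they cannot hurt Maximizer, or the gadget is designed so the only cycles through $v_0$ are the canonical ones. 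Then a Maximizer strategy $\sigma$ corresponds exactly to a choice of $x$, the cycles reachable from $v_0$ in $(V,E_\sigma)$ correspond exactly to the possible Minimizer responses $y$, and ``every reachable cycle is nonnegative'' translates to $(\forall y)\,\phi(x,y)$. Hence $v_0\in\Win$ in the general zero-mean sense iff $(\exists x)(\forall y)\,\phi(x,y)$. The circuit for $\psi$ is polynomial-size since evaluating $\phi$ and decoding are polynomial-time, and we invoke the stated fact that polynomial-time Turing machines convert into polynomial-size circuits constructible in polynomial time.

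The main obstacle I anticipate is controlling \emph{which} cycles are reachable: I need the gadget graph to be strongly connected (as assumed throughout) yet ensure that the only cycles through $v_0$, after Maximizer fixes $\sigma$, are the ``canonical'' ones that encode a full consistent pair $(x,y)$ — otherwise spurious short cycles could be assigned the wrong sign and break the equivalence. A clean way around this is to make the decoding circuit robust: define $\psi(C)=+$ unless $C$ is exactly a canonical cycle encoding some $(x,y)$ with $\phi(x,y)$ false, in which case $\psi(C)=-$. Then spurious cycles are harmless by construction, and strong connectivity can be arranged with auxiliary zero-relevance edges without affecting the analysis. A secondary point to be careful about is that the problem is stated for positional strategies only; since the inner condition quantifies over \emph{cycles} rather than plays, and reachable cycles under a positional $\sigma$ are exactly determined by $E_\sigma$, positionality causes no loss here.
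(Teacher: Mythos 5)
Your proof proposal is correct and essentially follows the paper's proof: membership via guessing a positional Max strategy followed by a coNP check, and hardness via a chain-ring arena of Max vertices (encoding $x$) followed by Min vertices (encoding $y$), with parallel edges for each variable and $\psi$ reading off $\phi(x,y)$. Your worry about spurious cycles is moot in this construction — the ring graph is already strongly connected and every cycle must traverse the full chain, so all cycles are canonical and no auxiliary edges are needed.
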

\begin{proof}
    First of all, the general zero-mean partition problem is in $\Sigma_{2}^{P}$. If we have a ``Yes''-instance, then we can give as a certificate a winning positional strategy $\sigma$ for the maximizer. We can then construct the graph $G_{\sigma}=(V,E_{\sigma})$ (recall that $E_\sigma = \{(i,j) \in E, i \in \Vmin\} \cup \{\bigl(i,\sigma(i)\bigr) \colon i \in \Vmax\}$). Then we check that this is indeed a winning strategy by checking that there are no negative cycles for $\psi$ in $G_{\sigma}$; the latter problem is obviously in coNP.
    
    Next, we show that the general zero-mean partition problem is $\Sigma_{2}^P$-hard. We do so by reducing from $\exists\forall$ SAT in polynomial time. Suppose we have a Boolean formula $\phi(x,y)$, with $x\in \{0,1\}^k$ and $y\in \{0,1\}^l$. We create a digraph $G$ as follows: we have $\Vmax=\{v_1,v_2, \ldots, v_k\}$ and $\Vmin=\{v_1',v_2', \ldots, v_l'\}$. For $i=1,2,\ldots, k-1$, we add two edges from $v_i$ to $v_{i+1}$, $e_i$ and $\hat{e}_i$, which correspond to $x_i$ being true and false, respectively. We add edges $e_k$ and $\hat{e}_k$ from $v_k$ to $v'_1$. Likewise, we add edges $e_j'$ and $\hat{e}_j'$ from $v_j'$ to either $v_{j+1}'$ if $j<l$ or to $v_1$ if $j=l$. Similarly, we associate $v'_i$ being true with edge $e_i'$ and $v'_i$ being false with edge $\hat{e}_i'$. See \cref{fig:sigma2game} for an illustration.

    \begin{figure}[H]
        \centering
        \includegraphics[width=\linewidth]{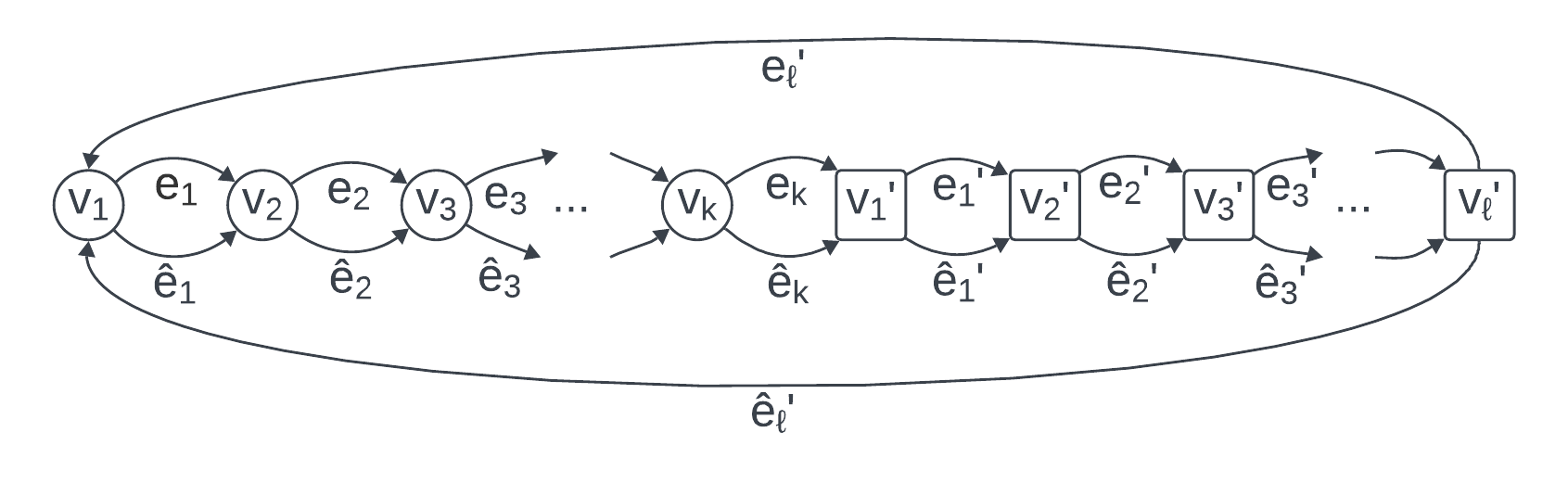}
        \caption{A visualization of the arena of the game for the proof of \cref{thm:general-zero-mean-partition-problem}. $\Vmax$ is depicted by circles, and $\Vmin$ by squares.}
        \label{fig:sigma2game}
    \end{figure}
    Note that there is a one-to-one correspondence between the cycles in this graph and the truth assignments for $(x,y)$. We now define $\psi$ as
    \[
    \psi(C)=\begin{cases}
        + & C\text{ corresponds to }(x,y)\text{ and }\phi(x,y)=1\\
        - & C\text{ corresponds to }(x,y)\text{ and }\phi(x,y)=0 \, .
    \end{cases}
    \]
    The circuit of $\psi$ can easily be constructed in polynomial time. 
    If the original $\exists\forall$ SAT problem is a ``yes''-instance, then there is an $x$ that makes $\phi$ true for all $y$. 
    This implies that the related maximizer strategy does not allow for any $-$-cycles, hence it yields a nonnegative value for all starting nodes in the graph. 
    In the other direction, if there is a positional strategy that gives nonnegative payoff for some starting node $v_0$, then this gives a valid $x$ for the $\exists\forall$ SAT problem. 
    So we have indeed given a reduction, and that completes the proof.
\end{proof}

It is maybe not too surprising that the general problem is difficult, considering the lack of structure. 
However, it turns out that restricting to realizable patterns does not make finding the winner easy. 
It turns out that even the one-player case becomes difficult to solve if we only consider the cycle pattern. 
To make the one-player case formal, we consider the following problem.

\begin{definition}[One-player zero-mean partition]
    Suppose we are given a strongly connected arena $G=(\Vmax\cup \Vmin,E)$ with $\Vmin=\emptyset$, a starting node $v_0$, and a cycle pattern $\psi$ given by a Boolean circuit, with the promise that $\psi$ is realizable. 
    Does there exist a strategy for which the Maximizer can guarantee reaching a $0$-cycle or a $+$-cycle?
\end{definition}

\begin{theorem}\label{thm:1pZMP}
    There is no randomized polynomial-time algorithm to solve the one-player zero-mean partition problem, unless $NP=RP$.
\end{theorem}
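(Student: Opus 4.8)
The plan is to give a randomized polynomial-time reduction from $\mathrm{SAT}$ to the one-player zero-mean partition problem; if the latter had a randomized polynomial-time algorithm, this would place $\mathrm{SAT}$ in $RP$, and since $RP\subseteq NP$ always, we would get $NP=RP$. The first step is a harmless simplification of the target problem. Since $\Vmin=\emptyset$ and $G$ is strongly connected, a positional Maximizer strategy is just a choice of one outgoing edge at each vertex; from $v_0$ the (deterministic) play settles into a simple cycle $C$, and Maximizer ``wins'' exactly when this $C$ has $\psi(C)\in\{0,+\}$. As every simple cycle is reachable from $v_0$ by strong connectivity, the answer to the problem is simply: \emph{does $\psi$ assign a value in $\{0,+\}$ to at least one cycle of $G$?}

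Next comes the gadget. Given a Boolean formula $\phi$ on variables $x_1,\dots,x_n$, let $G_\phi$ be the directed cycle $v_0\to v_1\to\dots\to v_{n-1}\to v_0$ in which every edge $v_i\to v_{(i+1)\bmod n}$ is doubled, the two parallel copies $e_i^{0},e_i^{1}$ standing for $x_{i+1}$ being false or true. Then $G_\phi$ is strongly connected and its simple cycles are precisely the $2^n$ cycles $C_{\bar x}$ obtained by choosing one copy in each position, in bijection with assignments $\bar x\in\{0,1\}^n$ (there are no shorter cycles and no loops). Define $\psi_\phi(C_{\bar x})=+$ if $\bar x\models\phi$ and $\psi_\phi(C_{\bar x})=-$ otherwise. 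This function is computable in polynomial time from $\phi$ given $\chi(C)$, so by the discussion at the start of \cref{sec:complexity} it is represented by a polynomial-size Boolean circuit constructible in polynomial time, and by construction $\psi_\phi$ has a $0$- or $+$-cycle iff $\phi$ is satisfiable.

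The crux is realizability, which must hold so that the reduction produces legal inputs. For this I would insert an isolation step (Valiant--Vazirani): from $\phi$ one randomly produces a formula $\phi'$ (on the same variables, with a few extra $GF(2)$-parity constraints) such that $\phi$ unsatisfiable implies $\phi'$ unsatisfiable with probability $1$, while $\phi$ satisfiable implies that $\phi'$ has exactly one satisfying assignment with probability at least $1/\mathrm{poly}(n)$. If $\phi'$ is unsatisfiable, $\psi_{\phi'}\equiv-$ is realized by the all-$(-1)$ weight function. If $\phi'$ has a unique satisfying assignment $\bar x^{*}$, set $w(e_i^{b})=1$ when $b=\bar x^{*}_{i+1}$ and $w(e_i^{b})=-(n+1)$ otherwise; then $w(C_{\bar x^{*}})=n>0$, while any other $C_{\bar x}$ differs from $\bar x^{*}$ in $d\ge 1$ positions and has $w(C_{\bar x})=(n-d)-d(n+1)\le -2<0$, so there are no $0$-cycles and $\psi_w=\psi_{\phi'}$. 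Hence in both relevant cases $\psi_{\phi'}$ is realizable and the promise of the one-player zero-mean partition problem is satisfied.

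Finally I would assemble the algorithm: on input $\phi$, run Valiant--Vazirani to get $\phi'$, form $(G_{\phi'},v_0,\psi_{\phi'})$, feed it to the hypothetical randomized algorithm for the one-player zero-mean partition problem, repeat $O(\mathrm{poly}(n))$ times, and output ``satisfiable'' iff some run says ``yes''. If $\phi$ is unsatisfiable, every $\phi'$ is unsatisfiable, so each instance is a realizable ``no''-instance, the algorithm rejects with probability $1$, and we reject with probability $1$. If $\phi$ is satisfiable, then with probability $\ge 1/\mathrm{poly}(n)$ the formula $\phi'$ is uniquely satisfiable, giving a realizable ``yes''-instance on which the algorithm accepts with probability $\ge 1/2$; amplification pushes the overall acceptance probability above $1/2$ (the remaining runs, where $\psi_{\phi'}$ may fail to be realizable, only cost us those runs and never cause false acceptance in the unsatisfiable case). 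This is an $RP$ algorithm for $\mathrm{SAT}$, so $NP\subseteq RP$ and therefore $NP=RP$. I expect the realizability requirement to be the main obstacle: a direct reduction from $\mathrm{SAT}$ marking all satisfying assignments as $+$ generally yields a non-realizable pattern (two satisfying assignments can already force a linearly inseparable, XOR-type pattern, which has a non-realizability witness by \cref{thm:char1}), which is exactly why the isolation step is needed and why the conclusion is $NP=RP$ rather than $P=NP$.
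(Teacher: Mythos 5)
Your proof is correct and follows essentially the same approach as the paper: both rely on Valiant--Vazirani isolation so that the constructed cycle pattern has at most one $+$-cycle (which is what guarantees realizability), encode truth assignments as cycles of a gadget graph, and observe that the one-player problem reduces to detecting a nonnegative cycle. The only differences are cosmetic --- the paper reduces from USAT as a black box and uses the complete digraph with a canonical vertex ordering, whereas you fold the isolation step into the reduction and use a doubled directed cycle as the gadget.
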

Here, by a ``randomized polynomial-time algorithm'' we mean an algorithm that stops in polynomial time and has the following two properties:
\begin{itemize}
\item it outputs ``No'' with probability $1$ if $\psi$ is realizable and player Max does not have a strategy that guarantees reaching a nonnegative cycle;
\item it outputs ``Yes'' with probability at least $1/\mathrm{poly}(n)$ if $\psi$ is realizable and player Max has a strategy that guarantees reaching a nonnegative cycle.
\end{itemize}
Note that we do not demand anything when $\psi$ is not realizable, other than the fact that the algorithm stops on such instances in polynomial time.
This type of randomized algorithm will allow us to make the connection between (U)SAT and RP in the following proof. 

\begin{proof}
    We show this by performing a polynomial time reduction from USAT (unambiguous SAT). The USAT problem asks, given a SAT formula with at most one satisfying assignment, whether the SAT formula has a satisfying assignment. 
    It is known that, if there is a randomized polynomial time algorithm for USAT, then $RP=NP$ \cite{ValiantVazirani:1986}.  
    Suppose we have a SAT formula $\phi(x)$ with $n$ variables, and with at most one satisfying assignment. 
    Let $G$ be the complete directed graph on $n$ vertices. 
    Assume that the vertices of $G$ are labeled $1,2,\ldots, n$. 
    For all cycles $C$ of $G$, let $S(C)=(v_1, v_2, \ldots, v_k)$ be an ordering of the vertices of $C$ in such a way that $v_1$ is the vertex 
    with the smallest label of $C$, and for $i\geq 2$, $v_i$ comes after $v_{i-1}$ in the cycle, that is, $(v_{i-1},v_i)$ is an edge in the cycle. 
    While there can be many cycles in $G$ on the same set of nodes, there is always only one cycle $C$ on these nodes where the labels in $S(C)$ form an increasing sequence.

    Let $x_C$ be the Boolean vector of length $n$ with $x_i=1$ if and only if the vertex $i$ is in $C$. 
    Now we define a cycle pattern $\psi$ for $G$:
    \[
    \psi(C)=\begin{cases}
        + & \text{the sequence $S(C)$ is increasing and } \phi(x_C)=1,\\
        - & \text{otherwise}.
    \end{cases}
    \]
    The circuit of $\psi$ can be constructed in polynomial time from $\phi$. 
    Moreover, since there is at most one satisfying assignment for $\phi$, there is at most one cycle $C$ with $\psi(C)=+$. 
    This implies that $\psi$ is realizable: if there are no $+$-cycles, the pattern can be realized by giving each edge as weight $-1$; otherwise, we can give all the edges on the one $+$-cycle weight $1$, and all other edges weight $-n$, which will also be a realization.
    Since the graph is strongly connected, the zero-mean partition only depends on the existence of a $+$-cycle: 
    If such a cycle exists, then player Max can just use the edges of the cycle in their strategy, and make sure that the pebble ends up in the cycle. Now solving the zero-mean partition lets us solve USAT:
    if there is a $+$-cycle, this implies that $\phi$ must have had a satisfying assignment. Likewise, if there are no $+$-cycles, there was no satisfying assignment. 
    So indeed we found a polynomial time reduction from USAT to the one-player zero-mean partition problem. 
    Hence, if we had a randomized polynomial algorithm for the one-player zero-mean partition problem, the following algorithm would be a randomized polynomial algorithm for USAT:
    \begin{itemize}
        \item Construct the cycle pattern $\psi$ for graph $K_n$.
        \item Run the randomized polynomial algorithm for one-player zero-mean partition.
    \end{itemize}
    Since both steps are in polynomial time, and this new algorithm fulfills the requirements on the output of a randomized algorithm stated in \cite{ValiantVazirani:1986}, we conclude that this would be a randomized polynomial algorithm for USAT. 
    Hence this implies $NP=RP$. 
    This completes the proof.
\end{proof}

We leave the question about the complexity of finding the zero-mean partition of realizable patterns for two-players for further work.

\section{Geometric hardness for mean payoff games}
\label{sec:geometric}

As mentioned before, numerous different algorithms for solving mean payoff games and energy games have been proposed. 
For many of these algorithms, there is a pseudopolynomial upper bound on the running time: 
it can be bounded by a polynomial in $|V|$, $|E|$, and $W$, where the latter is an upper bound on the absolute value of the edge weights in the arena.

Having a pseudopolynomial algorithm for mean payoff games, it may be tempting to aim for a small representation of a game in the spirit of the seminal result by Frank and Tard\'os~\cite{FrankTardos:1987}. 
Their reduction takes as input a rational vector $w = (w_1, \ldots, w_m)$ and an integer $N$. 
It returns an integral vector $\overline{w} = (\overline{w}_1,\ldots,\overline{w}_m)$ such that $||\overline{w}||_{\infty} \leq 2^{4m^3}N^{m(m+2)}$ and $\sgn(w \cdot b) = \sgn(\overline{w} \cdot b)$, for all integer vectors $b$ with $||b||_1\leq N-1$. 
Note that this reduction can even be computed in polynomial time. 
One could try to apply this reduction to a mean payoff game where $w$ is the vector of weights (independent of the graph structure). 
Then the incidence vector of the cycles could be considered as the vectors $b$ giving rise to a reduction preserving the cycle pattern. 
As such, it would also preserve the winning region. 
However, the resulting upper bound on the reduced weights for a game with $m$ edges would amount to $2^{4m^3}(k+1)^{m(m+2)}$ where $k$ is the length of a longest cycle in the graph. 
Unfortunately, this is still exponential in the size of the graph. 

As this reduction is a very general tool with many remarkable applications in the realm of linear programming, see e.g.~\cite{Tardos:1986}, one could wonder if it is possible to come up with a reduction more tailored to mean payoff games. 
However, we demonstrate the limitations of such an idea. 
At first, we already saw in \cref{thm:expweights} that any reduction preserving cycle patterns would still result in exponential weights. 
We will extend this in \cref{thm:core-preserving-lower-bound} to reductions preserving the crucial structure for linear decision trees. 
Now, we start by excluding a reduction that does not care about the graph structure but only about the zero-mean partition. 

\subsection{Reductions preserving the zero-mean partition}

Assume we want a reduction that only preserves the zero-mean partition of the game. If we get both the weights and the graph structure as input, then solving the game allows to determine the winning regions and one can just use weights from $\{-1,0,1\}$, as we will see in \cref{sec:cycle-patterns-arrangements}. However, as our reduction is meant to be a preprocessing, in this section we restrict attention to reductions that only use the weights but not the graph structure. The reduction of Frank and Tard\'os has this property and results in weights of exponential size. We show that this cannot be avoided. To do so, we start with the following lemma.

\begin{lemma}\label{lem:exponential_integers}
Let $k \ge 1$ and suppose that integer vectors $a,b,c \in \ZZ^k$ satisfy the inequalities
\begin{equation}\label{eq:exponential}
\begin{aligned}
a_1 + \dots + a_{i-1} + b_i + c_i &\geq 0 \, ,\\
b_1 + \dots + b_{i-1} + a_i + c_i &< 0 \, ,\\
\end{aligned}
\end{equation}
for all $i = 1,2,\dots,k$. Then, $\max\{|a_{k}|,|b_{k}|\} \ge 2^{k-2}$.
\end{lemma}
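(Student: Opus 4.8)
The plan is to prove the bound by induction on $k$, tracking the quantity $d_i := a_i - b_i$ rather than the individual entries. First I would subtract the two inequalities in \eqref{eq:exponential} for a fixed index $i$ to eliminate $c_i$: the first gives $a_1 + \dots + a_{i-1} + b_i + c_i \ge 0$ and the second gives $b_1 + \dots + b_{i-1} + a_i + c_i \le -1$ (using integrality), so subtracting yields
\[
(a_1 - b_1) + \dots + (a_{i-1} - b_{i-1}) - (a_i - b_i) \le -1 \, ,
\]
i.e.\ $d_i \ge 1 + d_1 + \dots + d_{i-1}$ for every $i$. This is precisely the recursion that forces exponential growth: it gives $d_1 \ge 1$, and then $d_i \ge 1 + (d_1 + \dots + d_{i-1})$, and a routine induction shows $d_1 + \dots + d_i \ge 2^i - 1$, hence $d_i \ge 2^{i-1}$ for $i \ge 1$ (and more crudely $d_k \ge 2^{k-1}$).

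Next I would convert a lower bound on $d_k = a_k - b_k$ into a lower bound on $\max\{|a_k|, |b_k|\}$. Since $a_k - b_k = d_k \ge 2^{k-1}$, at least one of $a_k \ge 2^{k-2}$ or $-b_k \ge 2^{k-2}$ must hold (if both failed we would have $a_k < 2^{k-2}$ and $b_k > -2^{k-2}$, so $a_k - b_k < 2^{k-1}$, a contradiction). Either way $\max\{|a_k|, |b_k|\} \ge 2^{k-2}$, which is the claimed inequality. I should double-check the small-index base cases ($k = 1, 2$) separately, since $2^{k-2}$ is then $\tfrac12$ or $1$ and the bound is weak enough to hold trivially from $d_k \ge 1$.

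The only mildly delicate point — the "main obstacle," such as it is — is getting the indexing in the telescoped recursion exactly right, in particular being careful that the $i=1$ case of \eqref{eq:exponential} reads $b_1 + c_1 \ge 0$ and $a_1 + c_1 \le -1$, giving $d_1 = a_1 - b_1 \le -1$, so in fact the natural quantity is $-d_i = b_i - a_i$, and the recursion is $(b_i - a_i) \ge 1 + \sum_{j<i}(b_j - a_j)$. Renaming $e_i := b_i - a_i \ge 0$, one gets $e_i \ge 1 + e_1 + \dots + e_{i-1}$, hence $\sum_{j \le i} e_j \ge 2^i - 1$ and $e_i \ge 2^{i-1}$; then $b_k - a_k \ge 2^{k-1}$ forces $\max\{|a_k|,|b_k|\} \ge 2^{k-2}$ exactly as above. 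Everything else is bookkeeping: no linear algebra or Farkas-type argument is needed here, just the single telescoping step and an elementary induction.
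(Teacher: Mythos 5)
Your proposal is correct and follows essentially the same route as the paper: both eliminate $c_i$ by combining the two inequalities for index $i$, derive the recursion $b_i - a_i \ge 1 + \sum_{j<i}(b_j - a_j)$, conclude $b_k - a_k \ge 2^{k-1}$ by induction, and finish with $\max\{|a_k|,|b_k|\} \ge \tfrac12|b_k - a_k| \ge 2^{k-2}$. The initial sign slip with $d_i = a_i - b_i$ is properly corrected in your final paragraph, so the argument as stated there is complete.
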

\begin{proof}
We will show by induction that $b_i - a_i \ge 2^{i-1}$ for all $i$. Since $a,b,c$ are integer vectors, for $i = 1$ we have $b_1 \ge - c_1$ and $a_1 \le -1 - c_1$, which gives $b_1 \ge 1 + a_1$. Likewise, for $i > 2$ we get
\[
b_i \ge - c_i - (a_1 + \dots + a_{i-1}) \ge 1 + a_i + (b_1 + \dots + b_{i-1}) - (a_1 + \dots + a_{i-1}) \, .
\]
Hence, by induction, $b_i - a_i \ge 1 + (b_1 - a_1) + \dots + (b_{i-1} - a_{i-1}) \ge 1 + (1+2 + 4 + \dots + 2^{i-2}) = 2^{i-1}$. In particular, we have $\max\{|a_{k}|,|b_{k}|\} \ge \frac{1}{2}|b_k - a_k| \ge 2^{k-2}$.
\end{proof}
\begin{remark}
The vectors defined by $a_i = \lfloor -2^{i-2}\rfloor$, $b_i = \lfloor 2^{i-2}\rfloor$, $c_i = 0$ satisfy \cref{eq:exponential}, so \cref{lem:exponential_integers} is tight (for $k\neq 1$).
\end{remark}

We are now equipped to derive the following impossibility result. 

\begin{theorem}\label{th:multiple_reduction}
    For each even $m\in\mathbb{N}$, there exist weights $w_1, w_2,\ldots, w_m\in \mathbb{Z}$ such that there exist $m$ arenas with $m$ edges each (and with $w(e_i)=w_i$ for each edge $e_i$ in each of the $m$ MPGs), with the following property: if we replace $w_1,w_2, \ldots, w_m$ with integer weights $w_1', w_2', \ldots, w_m'$ such that $\max|w_i'|< 2^{m/2-2}$, this would change the zero-mean partition in at least one of the $m$ MPGs.    
\end{theorem}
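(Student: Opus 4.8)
The plan is to construct, for each even $m$, a single weight vector $w \in \ZZ^m$ together with $m$ arenas $G_1, \dots, G_m$, each having exactly $m$ edges, such that the zero-mean partitions of the MPGs $(G_s, w)$ encode enough ``sign data'' of $w$ that no small-weight replacement $w'$ can reproduce all of them. Concretely, set $k = m/2$ and split the edges into two blocks so that a cycle in $G_s$ picks up a partial sum of the first $k$ coordinates and a partial sum of the remaining ones. I would design the arenas so that the cycles appearing across the family $G_1, \dots, G_m$ force exactly the inequality system~\cref{eq:exponential} of \cref{lem:exponential_integers} on the $w_i$'s, with the roles of the vectors $a, b, c$ played by disjoint groups of weights. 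Then if $w'$ also realizes the same zero-mean partitions, its coordinates would satisfy the same system, and \cref{lem:exponential_integers} would force $\max_i |w_i'| \ge 2^{k-2} = 2^{m/2-2}$, contradicting the assumed bound.

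The key steps, in order, would be: (i) fix $k = m/2$ and introduce weight variables $a_1, \dots, a_k, b_1, \dots, b_k$ (identifying the $c_i$'s with zeros, as in the tight example in the remark after \cref{lem:exponential_integers}, or keeping them as a third group if more flexibility is needed — using the remark's witness $a_i = \lfloor -2^{i-2}\rfloor$, $b_i = \lfloor 2^{i-2}\rfloor$ guarantees the system is satisfiable by \emph{some} integer $w$); (ii) build a ``gadget'' arena in which a Maximizer (or Minimizer) choice at each of $k$ successive vertices selects between an $a$-edge and a $b$-edge, so that the cycle actually traversed has weight equal to one of the two left-hand sides of \cref{eq:exponential} for a chosen index $i$; (iii) show that the value of a distinguished start vertex $v_0$ in arena $G_i$ is nonnegative if and only if $a_1 + \dots + a_{i-1} + b_i \ge 0$ (resp.\ negative iff $b_1 + \dots + b_{i-1} + a_i < 0$), so that the zero-mean partition of $G_i$ records the $i$-th pair of inequalities; (iv) observe that any integer vector $w'$ preserving all $m$ zero-mean partitions must satisfy all $k$ pairs of inequalities, hence \cref{lem:exponential_integers} applies to $(a', b', c')$ and gives $\max\{|a'_k|, |b'_k|\} \ge 2^{k-2}$, i.e.\ some $|w_i'| \ge 2^{m/2-2}$.

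The main obstacle I anticipate is step (ii)--(iii): engineering arenas whose zero-mean partition is governed by a \emph{strict prefix sum} inequality rather than by the sign of a single cycle. A bare cycle only tells us its own weight's sign; to read off $a_1 + \dots + a_{i-1} + b_i + c_i \ge 0$ we need a game whose optimal play is forced to traverse exactly the edges contributing those terms. This likely requires threading each $G_i$ as a long cycle that visits the ``$a$-or-$b$'' choice vertices $1, \dots, i-1$ (where one player is compelled, by the structure of competing cycles, to pick the $a$-edges) and then vertex $i$ (where the $b$-edge is taken), closing up with zero-weight connector edges; one must check that no shortcut cycle with a more favorable sign is available to either player, which is where the two-block structure and careful placement of zero-weight edges matter. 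A secondary subtlety is that each $G_i$ must have exactly $m$ edges and be strongly connected — this can be arranged by padding with zero-weight edges (which do not affect any cycle sign) and duplicating edges as needed, but it must be stated carefully. Once the gadget is correct, the reduction to \cref{lem:exponential_integers} and the remark's explicit witness vector close the argument immediately.
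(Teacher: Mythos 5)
Your high-level plan coincides with the paper's: choose $w$ to be (a scaling of) the tight witness for \cref{lem:exponential_integers}, build $m$ arenas each of which certifies one of the prefix-sum inequalities of \cref{eq:exponential} through its zero-mean partition, and conclude via the lemma. However, the part you flag as the ``main obstacle'' --- the actual arena construction --- is exactly the content of the proof, and the gadget you sketch would not work as described. If a single player chooses between an $a$-edge and a $b$-edge at each of several successive vertices, the zero-mean partition is governed by the \emph{optimal} combination (a max or min over all sign patterns), not by the one specific mixed prefix sum $a_1+\dots+a_{i-1}+b_i$ that you need; under the intended weights the optimizing player would simply always pick the favorable block. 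The paper avoids choice gadgets entirely: taking $w_j=(-2)^j$, the $i$-th prefix sum is hard-wired as the weight of a single simple cycle $C$ in its own arena (e.g., the $2i$-th arena contains a Min-controlled cycle through the edges of weights $w_{2i}, w_1, w_3, \dots, w_{2i-3}$), together with a second cycle $C'$ all of whose edges have a definite sign, with ownership arranged so that one player can confine play to the two cycles $C$ and $C'$. The partition then flips precisely when the sign of $w'(C)$ differs from that of $w(C)$, which is the inequality you want.

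A second concrete problem is your padding step. The theorem requires every arena to have exactly the edges $e_1,\dots,e_m$ with $w(e_i)=w_i$, because the reduction being ruled out sees only the weight vector; so you may neither add extra zero-weight connector edges nor duplicate edges. Declaring some $w_j=0$ does not help either: the adversarial replacement $w'_j$ of such an edge need not be zero, so ``connector'' edges would contaminate the cycle weights under $w'$. The paper instead places every leftover weight as a chord or self-loop on $C'$, not emanating from the shared vertex $v_1$, so that under the confining strategy these edges cannot appear in any reachable cycle. Without resolving these two points your argument does not go through, although the target lemma and the overall reduction are the right ones.
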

\begin{proof}
    Take $w_i=(-2)^i$ for $i=1,2,\ldots, m$. We construct $m$ arenas, by doing the following for $i=1,2,\ldots, \frac{m}{2}$: 
    \begin{itemize}
        \item The $(2i-1)$-th arena has a cycle $C$ controlled by player Max. The cycle $C$ has $i$ edges with weights $w_{2i-1}$ and $w_2, w_4, \ldots, w_{2i-2}$, respectively. Moreover, there is a second cycle $C'$ that contains one vertex $v_1$ of $C$ and for the rest only has vertices controlled by player Min (see \cref{fig:MPG_minweight_multiple}). The edges of $C'$ have weights $w_1, w_3, \dots, w_{2i-3},w_{2i+1},w_{2i+3},\ldots,w_{m-1}$. In particular, all edges of $C'$ have negative weight. Finally, for every index $j$ that we have not used yet, we add some self-loop or chord in $C'$ with weight $w_j$ (their precise location is not important to the proof, we only do not allow such an edge to start at $v_1$). Every vertex in the resulting MPG has negative value: player Min can play the strategy $\tau$ that uses only the edges of $C'$. With this strategy, there are only two cycles possible in the remaining graph $G_{\tau}$, namely $C$ and $C'$, both with negative weight.
        
        Now suppose that $w'(C)\geq 0$. This would imply that player Max can use the edges of $C$ to guarantee a nonnegative outcome starting from $v_1$, but then the zero-mean partition of $w'$ would be different from that of $w$. Hence $w'(C)=w'_{2i-1}+w'_2+w'_4+\ldots + w'_{2i-2}<0$ for $i=1,2,\ldots,\frac{m}{2}$.
        \item The $2i$-th arena has a cycle $C$ controlled by player Min with $i$ edges, which have edge weights $w_{2i}$ and $w_1, w_3\ldots, w_{2i-3}$. There is a cycle $C'$ consisting of $v_1\in C$ and a number of vertices controlled by player Max. The edges of $C'$ have positive weights, namely $w_2,w_4,\dots,w_{2i-2},w_{2i+2},\dots,w_{m}$, and there is a chord or loop for every remaining $w_j$ (again not from $v_1$). In a similar way as before, player Max can use the edges of $C'$ to ensure nonnegative outcome for $w$ from every starting position, since both $C$ and $C'$ have nonnegative weight. This implies that $w'(C)=w'_{2i}+w'_1+w'_3+\ldots + w'_{2i-3}\geq 0$.
    \end{itemize} 
    \begin{figure}[h]
        \centering
        \includegraphics[width=\linewidth]{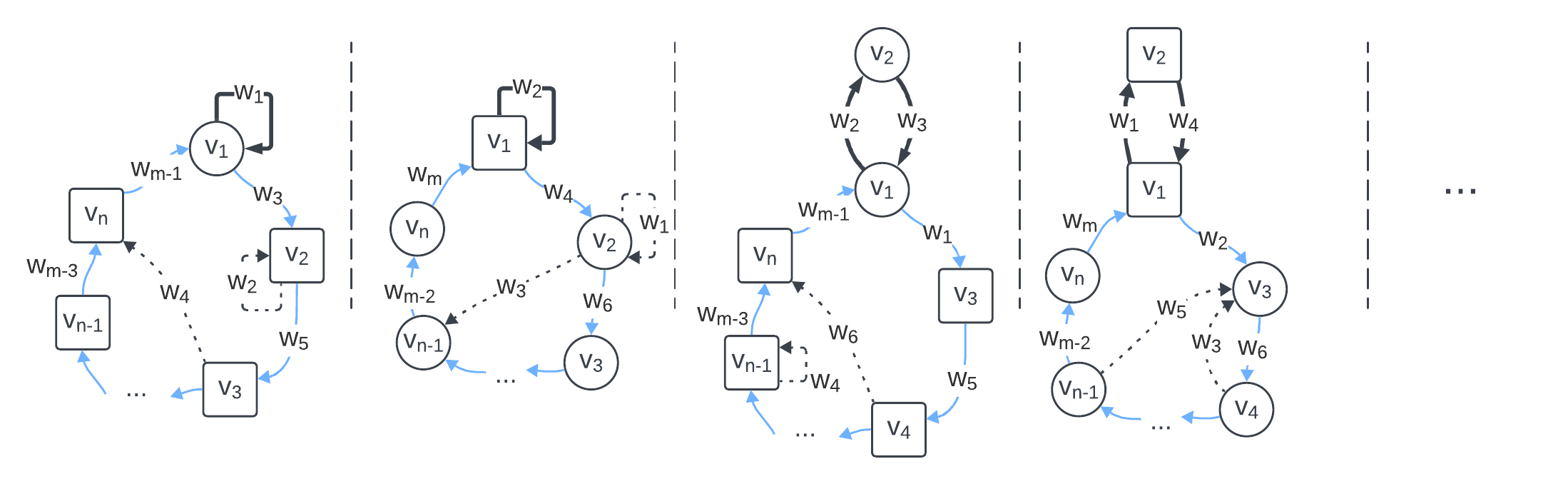}
        \caption{The start of the sequence of graphs constructed in the proof of \cref{th:multiple_reduction}. Circles are vertices controlled by player Max, squares are vertices controlled by player Min. In every graph, the cycle $C$ is marked with bold edges, and the cycle $C'$ by colored edges. There are $n=\frac{m}{2}+i-2$ vertices in each graph.}
        \label{fig:MPG_minweight_multiple}
    \end{figure}
    
    As observed, if $w'$ preserves the zero-mean partition then for every $i\leq \frac{m}{2}$ we have \break $w'_{2i}+w'_1+w'_3+\ldots + w'_{2i-3}\geq 0$ and $w'_{2i-1}+w'_2+w'_4+\ldots + w'_{2i-2}<0$. It then follows from Lemma \ref{lem:exponential_integers} that $\max (|w'_m|,|w'_{m-1}|)\geq 2^{m/2-2}$, which is what we needed to prove.
\end{proof}

\subsection{LDT algorithms}

We now consider algorithms for solving mean payoff games and energy games in the framework of the \emph{linear decision tree model}.
We give a short introduction, see~\cite{BuergisserClausenShokrollahi:1997} as a further general reference, and~\cite{KaneLovettMoran:2019} for recent advances; we will take a geometric point of view similar to~\cite{BjornerLovaszYao:1992}. 
A \emph{linear decision tree} $\dectree$ (LDT) with $m$ inputs and $n$ bits of output is a binary tree where each inner node is associated with a linear inequality $a_1 x_1 + \dots + a_m x_m \geq a_0$ for $a_0, a_1, \dots, a_m \in \ZZ$ and each leaf is labeled by an element in $\{0,1\}^n$. 
For an input vector $(x_1,\dots,x_m) \in \RR^m$, one starts at the root of $\dectree$ and successively descends to the left or right child of a node depending on the validity of the linear inequality associated to the respective node. 
Finally, the output is the element associated with the reached leaf.

Recall that a \emph{polyhedral fan} is a polyhedral complex where all polyhedra are polyhedral cones. 
A \emph{polyhedral subdivision} of a space $S$ is a polyhedral complex such that the union of all polyhedra is $S$. 
We refer to~\cite{Ziegler:1995} for more details on polyhedral geometry. 
However, note that it is sometimes convenient for us to consider polyhedral subdivisions composed of half-open polyhedra so that one obtains a disjoint union. 

An LDT $\dectree$ gives rise to several polyhedral subdivisions of the input space. 
We will focus on LDTs with $a_0 = 0$ for all nodes; then the polyhedral subdivisions are indeed polyhedral fans. 
Since the cycle-inequalities defining the zero-mean partition problem are homogeneous,
we can make this assumption on the LDTs in our context without losing expressive power (see e.g. \cite[Cor. 5.4]{snir1981proving}). 

The finest subdivision is the polyhedral fan arising from the collection of linear forms associated with the nodes of $\dectree$. 
That is, we consider the polyhedral subdivision of $\RR^m$ into polyhedral cones arising as the cells of the hyperplane arrangement given by all linear hyperplanes \break $\setOf{x \in \RR^m}{a^{\top}_v x = 0}$ for inner nodes $v$ of the tree. 
We call this the \emph{node subdivision} and denote the polyhedral fan by $\Nodesub(\dectree)$. 

The second subdivision is given by the collection of polyhedral cones arising as those input vectors $x \in \RR^m$ that lead to the same leaf node of $\dectree$. 
Note that those sets are defined by strict and non-strict inequalities. 
We call this the \emph{leaf subdivision}. 

Finally, for the coarsest subdivision, consider the function from $\RR^m$ to $\{0,1\}^n$ represented by~$\dectree$. 
The preimage of each label $\ell \in \{0,1\}^n$ arises as the union of these cones leading to a leaf with label $\ell$. 
Note that such a union is not necessarily convex anymore. 
As we get one such region for each element arising as image in $\{0,1\}^n$, we refer to it as the \emph{range subdivision}. 

For us, the aim is, given a weight vector $w \in \RR^E$, to compute the winning region of player Max (or equivalently its complement, the winning region of player Min), that is a subset of $\Vmax\cup \Vmin$ encoded by a vector in $\{0,1\}^{\Vmax\cup \Vmin}$. 

The range subdivision is hard to understand as it basically corresponds to solving an MPG. 
On the other hand, the leaf subdivision heavily depends on the particular choice of LDT for an MPG. 
Our main contribution is about the node subdivision which captures general insights in the linear decisions which are necessary to derive the winning region. 
To solve mean payoff games using this model, we have to assign a linear decision tree to each possible arena $G=(\Vmax\cup \Vmin, E)$. 

\begin{definition}
Let $\mathcal{G}$ be the set of all possible arenas. 
An \emph{LDT algorithm for MPGs} is a set of linear decision trees $\setOf{\dectree_G}{G\in \mathcal{G}}$ such that $\dectree_G$ solves each MPG with arena $G$ given the weights of the edges as input. 
\end{definition}

It turns out that several pseudopolynomial algorithms for determining the zero-mean partition in a mean payoff game fall into the framework of the linear decision tree model. 
We make this explicit for some concrete examples. 
For the specific version of strategy improvement algorithms we have chosen \cite{DBLP:journals/dam/BjorklundV07,DBLP:journals/ijfcs/BrimC12}, 
for value iteration algorithm we refer to \cite{DBLP:journals/fmsd/BrimCDGR11} 
and furthermore we consider the `GKK algorithm' from \cite{gurvich1990cyclic}. 

Note that strategy iteration and value iteration usually have quite some flexibility in choosing which nodes or edges to update/improve first. 
Especially for strategy iteration, this gives rise to many improvement rules while this kind of flexibility is ruled out in the linear decision tree model. 
However, many natural improvement rules can still be represented by an LDT. 
For example, fixing Bland's least index rule allows to formulate the run of the algorithm purely in terms of linear decisions. 

\begin{proposition}
The strategy improvement algorithms from \cite{DBLP:journals/dam/BjorklundV07,DBLP:journals/ijfcs/BrimC12}, the value iteration algorithm from \cite{DBLP:journals/fmsd/BrimCDGR11} and the GKK algorithm from \cite{gurvich1990cyclic} can be implemented as LDT algorithms.
\end{proposition}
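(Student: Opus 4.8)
The plan is to realize each algorithm, arena by arena, as a single finite linear decision tree $\dectree_G$ that simulates the algorithm's run on the weight vector $w\in\RR^{E}$ supplied as input. The common feature of all four algorithms is that they maintain a \emph{state} --- a positional strategy (for the two strategy improvement variants), an energy-level vector $f$ (for value iteration), or a tuple of potentials together with a vertex partition (for GKK) --- which is obtained from $w$ by additions, subtractions and $\min$/$\max$ operations over the finitely many edges of $G$ (and, for GKK, multiplication and division by fixed integers), and that every branching decision they take --- which switch to perform, whether a vertex still admits an update, which class of the partition a vertex lies in, whether to halt --- is the truth value of a comparison between two such quantities. A comparison involving $\min$/$\max$ expressions is resolved by first branching on the pairwise comparisons of the arguments; after that, each remaining test has the form $\langle a,w\rangle \gtrless \langle a',w\rangle$ with $a,a'\in\ZZ^{E}$, i.e.\ it is an integer-linear and in fact homogeneous test. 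The only genuinely nonlinear choices left are tie-breaks and the order in which vertices or edges are scanned; we fix these deterministically --- Bland's least-index rule for the two strategy improvement algorithms and the natural fixed sweep order for value iteration and GKK --- so that they carry no further dependence on $w$. Hence, for each fixed $G$, the whole run of the algorithm on $w$ is determined by the answers to a finite list of homogeneous integer-linear tests, and the tree of all such answer-histories, with each leaf labelled by the zero-mean partition output along that history, is the desired $\dectree_G$; homogeneity means it already respects the normalization $a_{0}=0$ imposed above, but one could equally invoke \cite[Cor.~5.4]{snir1981proving} together with scale-invariance of the zero-mean partition.

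It remains to check that $\dectree_G$ is finite, i.e.\ that only finitely many distinct hyperplanes can be queried, so that no root-to-leaf path repeats a test and the depth is bounded. For the two strategy improvement algorithms this is immediate: the relevant player has only finitely many positional strategies and the valuation strictly improves at each step, so the number of iterations is finite and independent of $w$; within an iteration the valuation of a fixed strategy is computed by a classical shortest-/longest-path subroutine whose tests compare weights $\sum_{e\in P}w_{e}$ of walks $P$ in $G$, and the improving-switch tests compare such valuations, so only finitely many linear forms occur. For value iteration on energy games and for GKK the textbook iteration count is only pseudopolynomial, hence a priori $w$-dependent, so here I would instead use boundedness of the maintained quantities: the energy levels stay in $[0,(|V|-1)\max_{e}|w_{e}|]$, and the GKK potentials, and the mean values that it compares, are controlled by weights of cycles and sub-walks of $G$; this forces every linear form $\langle a,w\rangle$ appearing in a test to have $\|a\|_{1}$ bounded by a polynomial in $|V|$ and $|E|$. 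Together with the $O(|E|\log|E|)$ structural tests that fix the signs and relative order of the $w_{e}$ --- needed so that a quantity such as the threshold $(|V|-1)\max_{e}|w_{e}|$ becomes a concrete linear form --- one obtains a finite pool $\mathcal{H}_{G}$ of hyperplanes, and a tree of depth at most $|\mathcal{H}_{G}|$.

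I expect the main obstacle to be exactly this last step: bounding the depth uniformly over all real weight vectors for the pseudopolynomial algorithms, where the number of iterations really does grow with $\max_{e}|w_{e}|$. The point is that a larger weight range produces more iterations but not more distinct comparisons --- the compared quantities collapse into a fixed finite family of integer-linear forms determined by $G$ alone, so the decision tree saturates. Making this rigorous requires a per-algorithm description of which walks and cycles of $G$ can index the maintained values: for instance, that inside its winning region the energy value stabilizes at a $\min$--$\max$ over walks of length at most $|V|-1$, while outside it the only weight-dependent decision is the comparison of a bounded linear form against the $\top$-threshold, with analogous bookkeeping for the GKK potentials. The remaining obligations --- verifying that the procedures of \cite{DBLP:journals/dam/BjorklundV07,DBLP:journals/ijfcs/BrimC12}, \cite{DBLP:journals/fmsd/BrimCDGR11} and \cite{gurvich1990cyclic} indeed branch only on comparisons of such quantities and on $w$-independent tie-breaks --- are routine inspections.
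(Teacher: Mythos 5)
Your high-level strategy matches the paper's: store each maintained quantity as an integer linear form in $w$ (a multiset of edges), branch only on comparisons of such forms, and fix a deterministic tie-breaking rule so the trajectory depends solely on the outcomes of those comparisons. This part, and your treatment of the two strategy improvement algorithms, is in line with the paper.

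The gap is exactly where you flag it, and your proposed fix does not close it. For the value iteration of \cite{DBLP:journals/fmsd/BrimCDGR11} and for GKK, the naive run makes a number of iterations that genuinely depends on the magnitude of $w$ (and on irrational inputs need not terminate at all), and the maintained quantities $f(v)$ are weights of walks of \emph{unbounded} length; consequently the coefficient vectors $a$ appearing in the tests do \emph{not} have $\|a\|_1$ polynomially bounded in $|V|,|E|$, contrary to your claim. Your observation that the eventual fixed point is a $\min$--$\max$ over walks of length at most $|V|-1$ is about the limit, not the trajectory, and does not by itself give you a finite decision tree that simulates the iteration --- showing a finite LDT suffices without actually running the iteration is tantamount to substituting a different algorithm. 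That substitution is precisely what the paper does: it invokes a variant of value iteration (see \cite[p.~37]{dorfman2024improved}) that is correct over the reals and is guaranteed to terminate in $O(m2^{n/2})$ iterations, and it cites \cite{Ohlmann2022GKK} for the fact that GKK terminates in $O(2^n)$ iterations, both bounds being purely graph-size dependent; these directly bound the LDT depth. Your proof would be complete if you replaced the ``saturation'' heuristic with these (or equivalent) weight-independent iteration bounds.
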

\begin{proof}[Proof sketch]
    We treat each type of algorithm separately:
    \begin{itemize}
        \item The main idea of the strategy improvement algorithms is to fix some strategy $\sigma$ for player Max, and compute some \emph{valuation} $d_{\sigma}:V\to \mathbb{R}$. The valuation is always equal to the weight of some shortest/longest path between two nodes. 
        The valuation is found with some number of shortest path computations (Dijkstra/Bellman-Ford). 
        Then, the main idea of strategy improvement is to change the strategy: we switch to using the edge $(u,v)$ if $d_{\sigma}(v)+w(u,v)>d_{\sigma}(u)$, we call this an \emph{improving move}. 
        Assuming a fixed improvement rule for choosing among possibly several improving moves, this yields a new strategy $\sigma'$.
        We repeat this process of finding valuations and making improving moves until there are no improving moves left. This gives an optimal strategy.

        These shortest path algorithms can also run in the LDT model, since we can compare path lengths, and we can store the valuations as sets of edges. If we do this, we can also evaluate whether $d_{\sigma}(v)+w(u,v)>d_{\sigma}(u)$ in the LDT model. So both strategy improvement algorithms can be formulated as LDT algorithm.
        \item The value iteration algorithm of \cite{DBLP:journals/fmsd/BrimCDGR11} maintains a value function $f:V\to \mathbb{Z}$. Initially, $f$ is 0 everywhere, but it gets updated using \emph{lifting operations}. To perform a lifting operation for a node $u$, we need to evaluate whether $f(u)<f(v)-w(u,v)$ and find the sign of $f(v)-w(u,v)$ for all edges of the form $(u,v)$. The values of the function $f$ always represent the weight of some finite walk in the graph. If the value of $f$ reaches above some threshold (which is bounded by the sum of absolute values of edge weights), it is set to $\infty$. If we reach a fixed point of the lifting operation, then we can find the zero-mean partition from the set of nodes with value $\infty$.

        Since the $f(v)$ represent the weight of some walk, we may store them as multisets of edges. This allows us to evaluate whether $f(u)<f(v)-w(u,v)$ and $f(v)-w(u,v)>0$ within the LDT model. Also, since we know the signs of all edge weights, we can also use sums of absolute values of edges in the LDT model. Hence we can also check whether $f(v)$ is larger than $\sum_{e\in E}|w(e)|$. So this value iteration algorithm is also an LDT algorithm.
        
        We note that implementing this algorithm naively can lead to the decision tree having infinite depth. However, one can use a modification that allows for using any real numbers, as described in \cite[p. 37]{dorfman2024improved}  
        which guarantees that the algorithm terminates in $O(m2^{n/2})$ iterations\footnote{The original source states $O(m2^n)$. We clarified with the author that this was a typo.}.
        \item The GKK algorithm is slightly more complicated. It maintains a potential function \break $\epsilon:V\to \mathbb{Z}$, and performs a \emph{potential transformation} on the edge costs: this replaces $w(u,v)$ by $w'(u,v)=w(u,v)+\epsilon(u)-\epsilon(v)$. The algorithm then computes the set of nodes $L$, from which player Min can guarantee seeing a negative weight edge before seeing a positive weight edge. Afterwards, it computes $\delta$, which is the smallest number such that increasing $\epsilon$ by $\delta$ on the current set $L$ will change $L$ in the next iteration. Then it changes $\epsilon$ (by adding $\delta$ to $\epsilon$ on $L$), and it repeats this until $\delta=\infty$. In that case, $L$ gives us the zero mean partition.

        We argue that $\delta$, $\epsilon(e)$ and $w'(e)$ can always be written as linear combinations of edge weights. 
        First, we start with $\epsilon=0$ everywhere, and the original edge weights. 
        Computing the set $L$ can be done by testing signs of edges. Finding $\delta$ is always done by taking the minimum of weights $w'(e)$ of a certain set of edges (determining that set is done by checking signs of $w'(e)$). These adapted weights $w'$ are always linear combinations of $w$, and therefore $\delta$ can also be written as a linear combination of weights. Finally, adding $\delta$ to $\epsilon$ on some nodes shifts some edge weights $w'(e)$ by $\delta$, which still keeps it a linear combination of $w$. We conclude that indeed, these variables can always be written as a linear combination of $w$, 
        hence one can implement this algorithm as an LDT algorithm. Note that it is also guaranteed to end in $O(2^n)$ iterations \cite{Ohlmann2022GKK}.     \qedhere
    \end{itemize}
\end{proof}

In the previous sections we saw that it is not always possible to reduce the weights to less than exponential while preserving the cycle pattern. 
We generalize this idea to the underlying structure of arbitrary LDT algorithms. 
The main question we try to answer in this section is the following: is it possible to improve the pseudopolynomial upper bound of these algorithms,
by using some preprocessing on the weights, that preserves the structure used by the algorithm? 
The answer will be no. 
This is a first step towards proving a lower bound on the running time of algorithms representable by an LDT using the minimal representation size of a cycle pattern. 

We start by formalizing the preprocessing of the weights. 

\begin{definition}
A \emph{core preserving reduction} for an LDT algorithm $\setOf{\dectree_G}{G\in \mathcal{G}}$ for the mean payoff games on the set of arenas $\mathcal{G}$ is a set of functions $\setOf{f_G}{G\in\mathcal{G}}$ with $f_{G} \colon \mathbb{Z}^E\to\mathbb{Z}^E$ with the following property: 
for each graph $G\in \mathcal{G}$, each vector $z\in \mathbb{Z}^E$ and each cone $F\in \Nodesub(\dectree_G)$, we have $z\in F\Leftrightarrow f_G(z)\in F$ and $\norm{f_G(z)}_{\infty} \leq \norm{z}_{\infty}$.
\end{definition}
In \cref{sec:cycle-patterns-arrangements}, we show that the fan $\Nodesub(\dectree_G)$ must be a refinement of the hyperplane arrangement coming from the cycle pattern. 
So the cycle pattern is in some sense the coarsest structure possible for an LDT algorithm. 
We then show that for any such hyperplane arrangement, there is an arena where one of the cells only has exponential integer weights. 
This implies that using only core preserving reductions, we cannot make a pseudopolynomial upper bound on the running time of an LDT algorithm that is less than exponential in the input size.

\subsection{Cycle patterns are coarsest}
\label{sec:cycle-patterns-arrangements}

Using our previous insights on cycle patterns and the framework of LDTs, we derive two insights about the size of weights in the different subdivisions. 
These are in some sense complementary to each other. 
First, we consider the range subdivision.

We fix an arena $G=(\Vmax \cup \Vmin, E)$ and study the set of weighted arenas $(G,w)$ for varying~$w$. 
For any subset $U\subseteq V(G)$, let
\[
\region{U} = \{w \in \RR^E \colon \val_i(w) \ge 0 \text{ for $i \in U$}, \val_i(w) < 0 \text{ for $i \notin U$} \}
\]
be the weights giving rise to the winning region $U$. 
Ranging over the different sets $\region{U}$, we get a subdivision of the space $\RR^m \simeq \RR^E$. 
With this terminology, we may interpret the zero-mean partition problem geometrically: given some $w\in \ZZ^E$, find $U$ such that $w\in \region{U}$. 
This subdivision of space agrees with the range subdivision for a linear decision tree $\dectree_G$. 

While the cones $\region{U}$ forming this subdivision are in general not polyhedral, not even convex, they turn out to be \emph{star-convex} sets. 
Recall that a set $S$ is star-convex if there exists a so-called \emph{center} $x_0 \in S$ such that for any $y\in S$ the line segment $x_0y$ is contained in $S$.

\begin{proposition}
    \label{lem:starconvex}
For any arena $G$ and vertex set $U$, if the set $\region{U}$ is nonempty, then it is star-convex.
Moreover, then there is a center of $\region{U}$ with entries in $\{-1,1\}$. 
\end{proposition}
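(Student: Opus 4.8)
The key observation is that a weighted arena $(G,w)$ and $(G,\lambda w)$ for $\lambda>0$ have the same value function up to the scaling factor $\lambda$: indeed, $\val_i(\lambda w) = \lambda \val_i(w)$ since all three game outcomes ($\limsup$ of averages, $\inf$ of partial sums, etc.) are positively homogeneous in $w$, and more generally optimal strategies are preserved. In particular $\sgn(\val_i(\lambda w)) = \sgn(\val_i(w))$ for every vertex $i$ and every $\lambda>0$, so $w\in\region{U}\iff \lambda w\in\region{U}$; hence $\region{U}$ is a cone (closed under positive scaling). This already tells us that a center, once found, can be rescaled, but it is not yet star-convexity.

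\textbf{Finding the center.} The natural candidate for a center is a weight vector built directly from a pair of optimal strategies. Fix $w^0\in\region{U}$ and let $(\sigma^*,\tau^*)$ be optimal strategies for the MPG on $(G,w^0)$. Define $\overline{w}\in\{-1,1\}^E$ by setting $\overline{w}(e)=+1$ if $e\in E_{\sigma^*}$ (the edges consistent with Max's optimal strategy, together with all Min-controlled edges) and $\overline{w}(e)=-1$ otherwise; or some similar sign pattern derived from the two optimal strategies — the precise recipe is what needs to be pinned down so that (a) $\overline{w}\in\region{U}$ and (b) it works as a center. I would verify $\overline{w}\in\region{U}$ by checking that under $\overline{w}$: from each $i\in U$, Max can still guarantee a nonnegative cycle (use $\sigma^*$: every cycle reachable in $(V,E_{\sigma^*})$ under $w^0$ is nonnegative, hence by \cref{lem:solvinggames} has $\psi_{w^0}$-value in $\{0,+\}$; but a cycle all of whose edges lie in $E_{\sigma^*}$... here one must be careful, as a reachable cycle may use Min-edges too). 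The cleaner route is to use both players' optimal strategies to control the sign of every relevant cycle, so that $\overline w$ has the same cycle pattern on all cycles that matter — this is essentially the content of \cref{sec:cycle-patterns-arrangements} that the region subdivision refines into cells of constant cycle pattern, inside each of which a $\pm1$ representative exists.

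\textbf{Star-convexity from the center.} Once $\overline{w}\in\region{U}$ is fixed, take any $w\in\region{U}$ and consider $w_t = (1-t)\overline{w} + t w$ for $t\in[0,1]$. I must show $w_t\in\region{U}$, i.e.\ $\sgn(\val_i(w_t))$ agrees with membership in $U$ for all $i$. The mechanism: for $i\in U$, Max has a strategy $\sigma$ guaranteeing that every cycle reachable from $i$ in $(V,E_\sigma)$ has nonnegative $w$-weight; if additionally $\overline w$ is chosen so that \emph{the same} $\sigma$ makes every such cycle have nonnegative $\overline w$-weight, then by linearity every such cycle has nonnegative $w_t$-weight, so $\val_i(w_t)\ge 0$. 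Symmetrically, for $i\notin U$, Min has a strategy forcing a strictly negative $w$-cycle; if $\overline w$ is chosen so Min's strategy also forces a strictly negative $\overline w$-cycle on the same reachable cycles, then every reachable cycle has strictly negative $w_t$-weight for all $t\in[0,1]$ (strictness is preserved under convex combination of two strictly negative numbers), giving $\val_i(w_t)<0$. So the center must be picked to simultaneously certify the win for Max from $U$ and for Min from $V\setminus U$ along \emph{the very cycles used in $w$'s certificates}; the standard fact that optimal positional strategies are optimal uniformly over all starting vertices is what makes a single $\overline w$ work for all $i$ at once.

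\textbf{Main obstacle.} The delicate point is constructing $\overline{w}\in\{-1,1\}^E$ that is simultaneously a valid certificate for \emph{every} vertex's membership status and that dominates the sign requirements imposed by an \emph{arbitrary} other point $w\in\region{U}$ — in particular handling cycles that mix Max-edges and Min-edges, and ensuring strictness is never lost. I expect this to require invoking the cycle-pattern refinement result of \cref{sec:cycle-patterns-arrangements} (that $\region{U}$ decomposes into pieces of constant cycle pattern, each containing a $\pm1$ point), so that the center can be taken inside the cycle-pattern cell of some fixed $w^0\in\region{U}$; then star-convexity reduces to the observation that along the segment from that center to any $w$, the sign of every cycle that is "decisive" for some vertex never flips, because on one side it is forced by a strategy and on the other the center was chosen to match. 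The bookkeeping of which cycles are decisive, and checking that convex combinations preserve all the needed (non-)strict sign conditions, is the part that needs care but no deep new idea.
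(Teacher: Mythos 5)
Your star-convexity mechanism (cycle weights are linear in $w$, so sign conditions certified by a common strategy survive convex combination) is the right one and matches the paper. The genuine gap is in the construction of the center, which you leave unresolved and for which both of your concrete suggestions fail. First, a center tied to the optimal strategies of one fixed $w^0\in\region{U}$ cannot work as stated: star-convexity requires the segment from the center to an \emph{arbitrary} $w\in\region{U}$ to stay in $\region{U}$, and that $w$ may have optimal strategies completely different from those of $w^0$, so the cycles "decisive" for $w$ need not be controlled by a sign pattern built from $\sigma^*(w^0)$. Moreover the specific recipe $\overline{w}(e)=+1$ for $e\in E_{\sigma^*}$ and $-1$ otherwise assigns $+1$ to \emph{every} Min-controlled edge, so Min cannot form any negative cycle inside $V\setminus U$ under $\overline{w}$ and $\overline{w}\notin\region{U}$ whenever $U\neq V$. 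Second, your fallback — that each cycle-pattern cell contains a $\pm1$ point, so the center can be taken in the cell of $w^0$ — is false: \cref{thm:expweights} shows precisely that some cycle-pattern cells contain only integer points with exponentially large entries, and \cref{sec:cycle-patterns-arrangements} provides no such $\pm1$ representative.

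The missing idea is that the center should depend only on $U$, not on any realization or strategy: the paper takes $z_{(v,v')}=1$ if $v\in U$ and $z_{(v,v')}=-1$ if $v\notin U$ (sign determined by the tail vertex). The key observation making this work uniformly over all $w\in\region{U}$ is that for \emph{any} such $w$ and \emph{any} Max strategy $\sigma$ certifying $w\in\region{U}$, no edge of $E_\sigma$ leaves $U$ (by prefix-independence of the mean payoff, an escape to $V\setminus U$ would let Min force a negative outcome from inside $U$). Hence every cycle reachable from $U$ in $E_\sigma$ lies entirely in $U$ and has all $z$-weights equal to $+1$, so it is nonnegative for both $w$ and $z$ and therefore for every convex combination; symmetrically for Min on $V\setminus U$. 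With that construction your convex-combination argument goes through; without it, the proof does not close.
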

\begin{proof}
    Let $z \in \RR^E$ be the weight vector defined by:
    \[
    z_{(v,v')}=
    \begin{cases}
        1 & v\in U,\\
        -1 & v\notin U.
    \end{cases}
    \] 
We want to show that $z$ is a center for $\region{U}$. 
For an arbitrary $w\in \region{U}$ and $\lambda \in [0,1]$, we need to show that $y=\lambda w+(1-\lambda) z$ is contained in $\region{U}$.
Since $w\in \region{U}$, there exists a Max strategy $\sigma$ that achieves an outcome $\geq 0$ starting in $U$ in the game on $(G,w)$, and achieves outcome $<0$ when starting in $V\backslash U$. 
In particular this means that there is no edge in $E_{\sigma}$ allowing Min to go out of $U$. 
Then, since all edge weights
within $U$ for $z$ are $1$, Max can achieve an outcome of $1$ on $U$ by playing strategy $\sigma$ on the weighted arena $(G,z)$.  
Now consider an arbitrary cycle $C$ whose vertices are contained in $U$ and whose edges are in $E_{\sigma}$. 
Since $\sigma$ ensures that the outcome is nonnegative on both $(G,w)$ and $(G,z)$, the total weight of $C$ must be nonnegative for both $w$ and $z$. 
Therefore, the weight of $C$ is nonnegative for any convex combination of $w$ and~$z$. 
Since this holds for any such cycle, the strategy $\sigma$ yields a nonnegative outcome starting from each node of $U$, in the game played on $(G,y)$.
Analogously, we can take a Min strategy $\tau$ that gives outcome $<0$ on $V\backslash U$ in $(G,w)$, and derive that it also gives negative outcome on $V\backslash U$ in $(G,y)$. 
We conclude that indeed $y\in \region{U}$.
Hence, the cone $\region{U}$ is star-convex with center $z$.
\end{proof}

While the lemma above shows that there is a `direct' way to go from arbitrary weights to (very!) small weights preserving the zero-mean partition, it is not clear how to use this algorithmically, since we used the zero-mean partition to find this weight reduction. 
We now relate the structure of the sets $\region{U}$ with cycle patterns.

\begin{proposition}\label{lem:ZUboundary}
    Assume $G$ is strongly connected. 
    For any cycle $C$, there is a set of vertices~$U$ and a weight vector $w$ 
    such that the intersection of the ball $B(w,1/2) = \setOf{t \in \RR^E}{\norm{w-t}_{\infty} < 1/2}$  
    with the boundary of $\region{U}$ is contained in the hyperplane 
    $
    H_C := \setOf{x \in \RR^E}{\sum_{(i,j)\in C}x_{ij}=0} \enspace .
    $
\end{proposition}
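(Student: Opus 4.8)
### Proof plan for Proposition \ref{lem:ZUboundary}

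The goal is to exhibit, for a given cycle $C$, a winning region $U$ and a weight vector $w$ so that locally around $w$ the only way to cross from $\region{U}$ to its complement is to cross the single hyperplane $H_C$. The plan is to engineer $w$ so that the sign of the weight of $C$ is the only ``undecided'' piece of information in a neighbourhood of $w$: every other cycle of $G$ should be robustly positive or robustly negative under the strategy structure we pick, while $C$ itself sits exactly on $H_C$, i.e.\ $w(C) = 0$.

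First I would choose $w$ concretely. Since $G$ is strongly connected, pick $C = (e_1,\dots,e_k)$ and set each edge of $C$ to weight $0$, so that $w(C) = 0$ and hence $w \in H_C$; give every edge not on $C$ a large negative weight, say $-N$ for $N$ a large integer (larger than $k$, so that any cycle using at least one off-$C$ edge has weight $\le -N + (k-1)\cdot 0 < -1/2$, and in fact remains negative after an $\ell_\infty$-perturbation of size $< 1/2$). Next I would declare \emph{all} vertices of $G$ to be controlled by player Max (this is allowed: the statement only asks for \emph{some} arena on the graph $G$), and take $U = V(G)$. Then there are no Min vertices, so $E_\sigma$ ranges over all subgraphs where each Max vertex keeps one outgoing edge, and $\region{U}$ is the set of weights for which Max can force a nonnegative cycle from every vertex — equivalently, the set of weights for which $G$ has \emph{no} strategy-closed subgraph all of whose reachable cycles are negative; since Max controls everything, this just says $\val_i(w)\ge 0$ for all $i$, which (strong connectivity) holds iff there is some cycle with nonnegative weight reachable appropriately. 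The point of the large $-N$ is that, for $t \in B(w,1/2)$, every cycle other than $C$ has strictly negative weight, so the \emph{only} candidate nonnegative cycle is $C$ itself.

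Then the key computation: for $t\in B(w,1/2)$, one shows $t\in\region{V(G)}$ iff $t(C)\ge 0$, i.e.\ iff $\sum_{(i,j)\in C} t_{ij}\ge 0$. Indeed, if $t(C)\ge 0$, player Max routes every vertex toward $C$ (possible by strong connectivity — direct each vertex along a shortest path into $V(C)$, and along $C$ inside $V(C)$) and every play eventually loops on $C$ with nonnegative average, so $\val_i(t)\ge 0$ for all $i$ and $t\in\region{V(G)}$. Conversely, if $t(C)<0$, then under \emph{any} Max strategy every reachable cycle is either $C$ (weight $<0$) or uses an off-$C$ edge (weight $\le -N + (k-1)/2 <0$ since $\|t-w\|_\infty<1/2$), so $\val_i(t)<0$ for every $i$, and $t\notin\region{V(G)}$. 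Hence within $B(w,1/2)$ we have $\region{V(G)} = \{t : t(C)\ge 0\}\cap B(w,1/2)$, a half-space intersected with the ball, and its topological boundary inside the ball is exactly $H_C\cap B(w,1/2)$, which is contained in $H_C$ as required.

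The main obstacle is the ``only if'' direction of the local characterisation: one has to be careful that \emph{no} cleverly chosen Max strategy can create a nonnegative cycle other than $C$ in the perturbed weights. This is exactly where the choice of a single distinguished cycle with all other edges uniformly very negative does the work — any cycle other than $C$ must traverse at least one off-$C$ edge, whose perturbed weight is below $-N + 1/2$, dwarfing the at-most-$(k-1)$ other edges each perturbed by less than $1/2$; choosing $N \ge k$ (say $N = k+1$) makes this airtight. A secondary point to check is that making all vertices Max-controlled is legitimate and that with $U = V(G)$ the set $\region{U}$ is nonempty (it contains $w$ itself, by the ``if'' direction since $w(C)=0\ge 0$), so that Proposition \ref{lem:starconvex} and the ambient framework apply; both are immediate. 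Everything else is the routine shortest-path routing argument enabled by strong connectivity.
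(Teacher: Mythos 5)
Your construction only proves the statement for the arena in which every vertex is controlled by Max, and your justification for this move (``the statement only asks for some arena on the graph $G$'') is a misreading. In \cref{sec:cycle-patterns-arrangements} the arena $G=(\Vmax\sqcup\Vmin,E)$, including its partition into Max- and Min-vertices, is \emph{fixed} before $\region{U}$ is even defined; the proposition quantifies existentially only over $U$ and $w$, and it is subsequently invoked in the form ``for any arena $G$ \dots the node subdivision must contain $H_C$'' in the proof of \cref{thm:core-preserving-lower-bound}. With Min-vertices present, your choice of a uniform large negative weight $-N$ on all off-$C$ edges breaks the ``if'' direction of your local characterisation: a Min-vertex off $C$ (or on $C$) can steer the play onto some cycle $C'\neq C$, which under your weights has weight about $-N$, so $\val_i<0$ at many vertices even when $t(C)\ge 0$; in the extreme all-Min case $\region{V}\cap B(w,1/2)$ is simply empty whenever $G$ has a second cycle, and the hyperplane $H_C$ never appears as a boundary, which defeats the purpose of the proposition.

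The repair requires exactly the two ingredients your sketch omits. First, fix a shortest-path forest $F$ routing every vertex of $V\setminus V(C)$ into $C$ and give all edges of $C\cup F$ weight $0$; since every vertex then has an outgoing edge in $C\cup F$ and the only cycle inside $C\cup F$ is $C$ itself, \emph{either} player can adopt a positional policy confined to $C\cup F$. Second, make the sign of the large weight on an edge $(v,v')\notin C\cup F$ depend on the owner of $v$: weight $+|V|$ if $v\in\Vmin$ and $-|V|$ if $v\in\Vmax$. Then if $t(C)\ge 0$ Max confines play to $C\cup F$ and any cycle other than $C$ that Min can reach must use a Min-edge of weight $\ge |V|-\tfrac12$, hence is positive; symmetrically, if $t(C)<0$ Min confines play to $C\cup F$ and any other cycle Max can form uses a Max-edge of weight $\le -|V|+\tfrac12$, hence is negative. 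Your quantitative bookkeeping for the one-player case (at most $k-1$ edges of $C$ in any cycle $C'\neq C$, so $N\ge k$ suffices) is fine, and your observation that the prefix into $C$ is irrelevant for the mean-payoff criterion is correct; the missing idea is the owner-dependent sign asymmetry that handles the two-player arena the proposition is actually about.
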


\begin{figure}[H]
    \centering
    \includegraphics[width=0.8\linewidth]{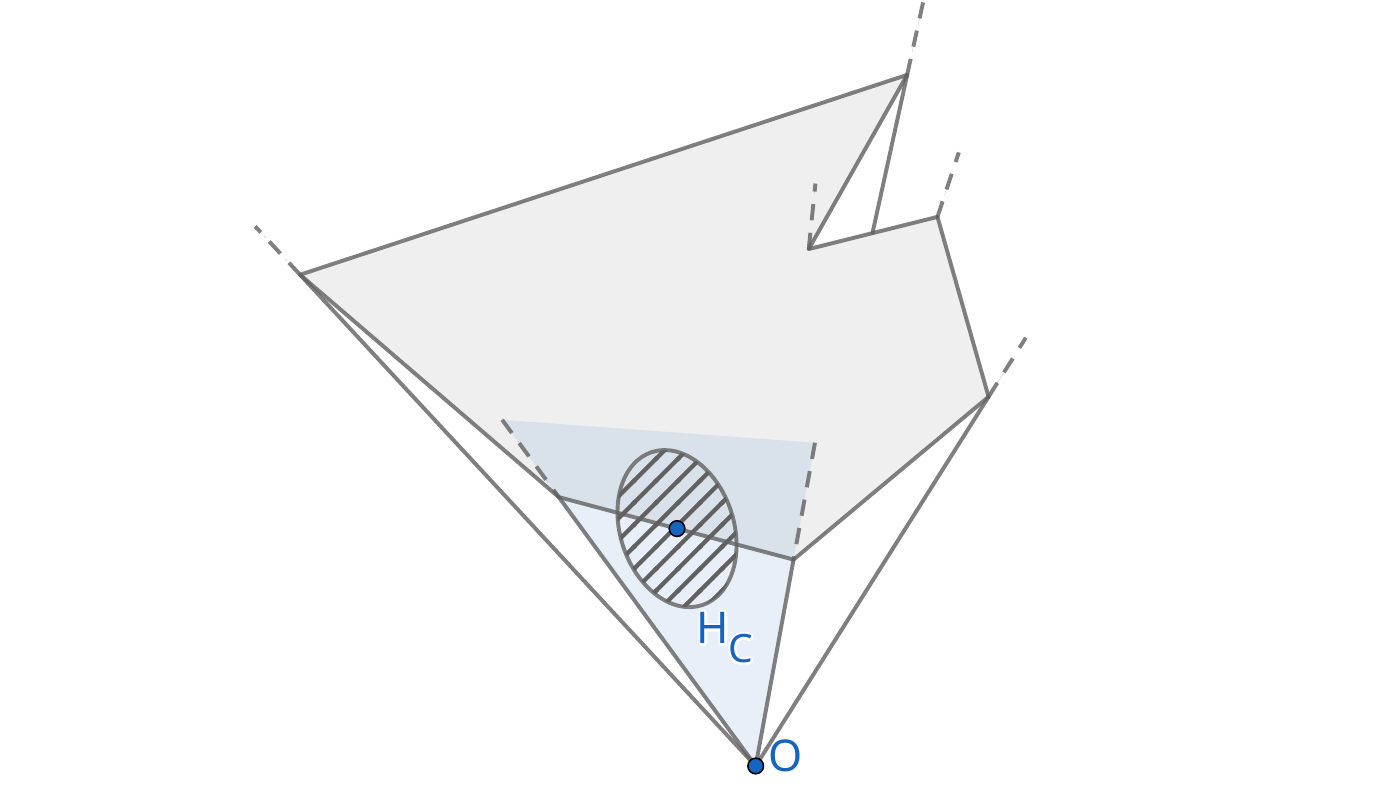}
    \caption{A cross section of a cone $\region{U}$. The intersection of the boundary of $\region{U}$ with $B(w, \frac{1}{2})$ (highlighted with stripes) lies on the hyperplane $H_C$ (blue).}
    \label{fig:cycle_hyperplane_lemma}
\end{figure}

\begin{proof}
Let $V(C)$ be the vertex set of $C$, 
and let $U=V$ be the whole vertex set of $G$. 
Furthermore, let $F$ be a shortest path forest containing the edges from shortest paths (in terms of number of edges) from every node in $V\backslash V(C)$ to $V(C)$. 
Note that $F$ exists because $G$ is strongly connected. 
We define the weight vector $w$ by 
    \[
    w_{(v,v')}=\begin{cases}
        0 & (v,v')\in C\cup F, \\
        |V| & v\in \Vmin, (v,v')\notin C\cup F, \\
        -|V| & v\in \Vmax, (v,v')\notin C\cup F \enspace .
    \end{cases} 
    \]
Now consider the weight vector $w+y$ for some $\norm{y}_{\infty}<\frac{1}{2}$. 
We show that $w+y\in \region{U}$ if and only if $\sum_{(i,j)\in C}(w+y)_{ij}\geq 0$.

Suppose that $\sum_{(i,j)\in C}(w+y)_{ij} \ge 0$. Since the arena is strongly connected, every vertex has an outgoing edge in $C \cup F$. If player Max plays according to some policy that only uses these edges, then they guarantee that the outcome of the game played on $(G,w+y)$ is nonnegative from any starting position. Indeed, let $\sigma$ denote any such policy and let $C'$ be any cycle composed of the edges of $E_{\sigma}$. If $C' = C$, then $C'$ has nonnegative weight by assumption. Otherwise, note that the cycle $C'$ has two types of edges: edges from $C \cup F$ which have total weight at least $-\frac{1}{2}|V|$, and edges of the form $(v,v') \notin C \cup F$ where $v \in \Vmin$. Every edge of the second type has weight at least $|V| - \frac{1}{2} > \frac{1}{2}|V|$ and $C'$ has at least one such edge because $C' \neq C$. Hence, $C'$ has positive weight. In particular, we have $w + y \in \mathcal{Z}(U)$. In the same way, if $\sum_{(i,j)\in C}(w+y)_{ij} < 0$, then player Min has a policy that guarantees that the outcome of the game is strictly negative from any starting position and $w + y \notin \mathcal{Z}(U)$. In conclusion, if a point inside $B(w,1/2)$ belongs to the boundary of $\mathcal{Z}(U)$, then it must lie on the hyperplane $\sum_{(i,j)\in C}(w+y)_{ij} = 0$.
\end{proof}

\begin{remark}
    If $G$ is not strongly connected then there can exist cycles that are not `critical' (in the sense that their hyperplane is not part of any boundary). For example, in Figure \ref{fig:nonstrong} the weight of cycle $ab$ has never any influence on which node is winning: that is because node $c$ always has at least as good of a value for one of the players as cycle $ab$, hence one player would always choose to leave the cycle $ab$.
    \begin{figure}[h]
        \centering
        \includegraphics[width=0.5\linewidth]{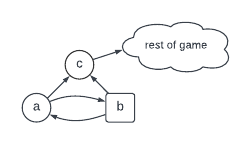}
        \caption{An arena that is not strongly connected. Circles are in $\Vmax$, squares in $\Vmin$.}
        \label{fig:nonstrong}
    \end{figure}
\end{remark}

\Cref{lem:ZUboundary} shows that the structure of the cycle pattern appears crucially in the node fan $\Nodesub(\mathcal{T}_G)$ associated with a linear decision tree for the game on $G$. 
This has some implications for core preserving reductions and pseudopolynomial algorithms. 
To make this precise, we now introduce a geometric hardness measure.

Given an arena $G$ and an LDT algorithm $\mathcal{T}_G$, we define the 
\emph{representability measure}
to be
\newcommand{\hardmeasure}{\texttt{RM}}
\begin{equation*}
\hardmeasure(\mathcal{T}_G) = \max_{z\in\mathbb{Z}^m}\min \setOf{ \norm{f_G(z)}_{\infty}}{f_G \text{ core preserving reduction}} \enspace .
\end{equation*}
By definition, this quantity provides a limitation for the effectiveness of core preserving reductions, but it can also be interpreted from other perspectives.

Firstly, we describe the geometric point of view on this quantity.
Suppose we take the smallest integer vector (w.r.t. the $\infty$-norm) in every cone of the node subdivision $\Nodesub(\mathcal{T}_G)$. 
Such a vector exists in every cone since they are halfopen polyhedra defined with rational inequalities. Then the representability measure is the norm of the largest of these integer vectors.

 Secondly, the representability measure can be related to the runtime of pseudopolynomial LDT algorithms. Suppose we have some pseudopolynomial LDT algorithm $\setOf{\mathcal{T}_{G}}{G\in\mathcal{G}}$ whose runtime is bounded by $p(W,|E|)$, with $p$ a polynomial and $W=\max_{e\in E}|w_e|$. For each weight vector $z\in \mathbb{Z}^E$ and each core preserving reduction $f_{G}$, the LDT algorithm makes the exact same choices for the game $(G,z)$ as for $(G,f_{G}(z))$. In particular we may use the best core preserving reduction $f_{G}$ and always assume that our algorithm solves the game $(G,f_{G}(z))$ instead of $(G,z)$. This then gives us a runtime bound of $p(\hardmeasure (\mathcal{T}_{G}),|E|)$, and in particular if $\hardmeasure (\mathcal{T}_{G})$ is bounded by a polynomial of the arena size, this means that the LDT algorithm solves any game on the arena $G$ in polynomial time.

However, we show that LDT algorithms for mean payoff games all have a large representability measure for some arenas.
Let again $\mathcal{G}$ be the set of all possible arenas. 

\begin{theorem} \label{thm:core-preserving-lower-bound}
There exists a sequence of arenas $(G_i)_{i\in \mathbb{N}}$ with strictly increasing number of edges~$(m_i)_{i\in \mathbb{N}}$ and a constant $c>0$, such that for any $i\in \mathbb{N}$ and any LDT algorithm $\setOf{\mathcal{T}_G}{G\in \mathcal{G}}$ 
we have 
\[
\hardmeasure(\mathcal{T}_{G_i})> 2^{c m_i} \enspace .
\]
\end{theorem}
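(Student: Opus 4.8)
The plan is to combine the two key structural facts established earlier. First, by \cref{lem:ZUboundary}, for any cycle $C$ in a strongly connected arena $G$ there is a weight vector $w$ and a ball $B(w,1/2)$ such that the only boundary of a range region $\region{U}$ meeting this ball lies on the hyperplane $H_C$. Since $\Nodesub(\mathcal{T}_G)$ is the fan of hyperplanes of the internal nodes of a correct LDT for $G$, and the leaf (hence range) subdivision must refine this fan — wait, it is the other way: the range subdivision is coarsened from the leaf subdivision, which is refined by $\Nodesub(\mathcal{T}_G)$. So the boundary of $\region{U}$ is a union of facets of cones of $\Nodesub(\mathcal{T}_G)$, which means $H_C$ must appear (locally near $w$) as one of the node hyperplanes; in particular the cycle-hyperplane $H_C$ is among the hyperplanes cutting $\RR^E$ into the cones of $\Nodesub(\mathcal{T}_G)$. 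Consequently $\Nodesub(\mathcal{T}_G)$ refines the hyperplane arrangement $\{H_C : C \in \Cs\}$, i.e.\ every cell of the node fan lies inside a single (open) cell of the cycle-pattern arrangement — each node cell induces a well-defined cycle pattern. This is exactly the claim ``cycle patterns are coarsest'' promised in the section title.

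Next I would invoke the lower-bound construction. The idea is that the representability measure $\hardmeasure(\mathcal{T}_G)$ is, by the geometric interpretation given in the text, the $\infty$-norm of the largest ``smallest integer point'' over all cells of $\Nodesub(\mathcal{T}_G)$. Since each such cell is contained in a cell of the cycle arrangement — i.e.\ in the realization cone of some cycle pattern $\psi$ — if some realizable cycle pattern $\psi$ on an arena $G_i$ has the property that \emph{every} integer realization has an entry of absolute value $\ge 2^{c m_i}$, then the cell(s) of $\Nodesub(\mathcal{T}_{G_i})$ lying inside that realization cone contain no integer point of smaller norm, and hence $\hardmeasure(\mathcal{T}_{G_i}) \ge 2^{c m_i}$, regardless of which LDT algorithm was chosen. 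Such cycle patterns and such a family of arenas are exactly what \cref{thm:expweights} provides: there are families of (strongly connected) graphs with realizable cycle patterns for which every integer realization has an edge $e$ with $|w(e)| = 2^{\Omega(m)}$. Taking $(G_i)$ to be that family (adapted to be arenas, by distributing the vertices between $\Vmax$ and $\Vmin$ arbitrarily — note the cycle pattern and hence the arrangement do not depend on ownership), and using the corresponding realizable $\psi$, we get the bound with $c$ the implied constant from \cref{thm:expweights}.

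So the structure of the argument is: (1) show $\Nodesub(\mathcal{T}_G)$ refines the cycle-pattern arrangement using \cref{lem:ZUboundary} and the fact that the range subdivision is coarser than the leaf subdivision, which is refined by the node subdivision; (2) observe that therefore each node cell is contained in a single realization cone; (3) pick $G_i$ and $\psi$ from \cref{thm:expweights}, so that the realization cone of $\psi$ contains no integer point of $\infty$-norm below $2^{\Omega(m_i)}$; (4) conclude the node cell inside that realization cone witnesses $\hardmeasure(\mathcal{T}_{G_i}) > 2^{c m_i}$. The main obstacle is step (1): I need to argue carefully that near the point $w$ from \cref{lem:ZUboundary} the hyperplane $H_C$ genuinely shows up among the node hyperplanes of \emph{any} correct LDT — this requires that the range region $\region{U}$ has $H_C$ as a supporting hyperplane of its boundary in that ball with $U$ actually separating the two sides (i.e.\ both half-balls are nonempty and have different output labels), forcing some decision node to cut along $H_C$ there. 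The construction in \cref{lem:ZUboundary} is set up precisely to give this (it makes $\region{U}$ locally look like a halfspace bounded by $H_C$ near $w$), but making the refinement statement precise — e.g.\ handling the half-open convention for the node fan and checking that a single decision node, not a more complicated combination, must realize the cut $H_C$ locally — is where the real care is needed.
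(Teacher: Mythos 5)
Your proposal is correct and follows essentially the same route as the paper's proof: use \cref{lem:ZUboundary} to force every cycle hyperplane $H_C$ into the node fan (so $\Nodesub(\mathcal{T}_{G_i})$ refines the cycle-pattern arrangement), then invoke the family from \cref{thm:expweights} whose realization cones contain only exponentially large integer points, so no core preserving reduction can shrink weights in those cells. The refinement step you flag as delicate is exactly the one the paper also treats briefly, and your outline of why $H_C$ must appear among the node hyperplanes is the intended argument.
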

\begin{proof}
From \cref{lem:ZUboundary}, we know that for any arena $G$ and any LDT algorithm, the node subdivision must contain the hyperplane $H_C=\setOf{x}{\chi(C)^Tx=0}$ for any cycle $C$ in the set of cycles $\Cs$ of $G$. 
Thus, the polyhedral fan $\Nodesub(\mathcal{T}_G)$ must be a refinement of the polyhedral fan arising from the hyperplanes $\setOf{H_C}{C\in \Cs}$. 
Otherwise, the LDT algorithm would give the wrong answer on some inputs, in particular those that are close to a multiple of the vector $w$ from \cref{lem:ZUboundary}.
From Theorem \ref{thm:expweights}, we know that there is a sequence of graphs $(G_{i})_{i\in\mathbb{N}}$ with realizable cycle patterns that require integer weights of size $2^{\Omega(m_i)}$. 
This implies, for each $G_i$, that the polyhedral fan coming from $\setOf{H_C}{C\in \Cs}$ has cells 
in which every integer vector has an entry that is exponential in the number of edges. 
Therefore, the same must hold for the finer polyhedral fans $\Nodesub(\mathcal{T}_{G_i})$. 
Let $z$ be an integer vector in such a cell only containing vectors with norm more than $2^{cm_i}$. 
Then, there cannot be a core preserving reduction that maps entries from this cell to smaller entries than $2^{cm_i}$. 
Hence $\min \setOf{ \norm{f_G(z)}_{\infty}}{f_G \text{ core preserving reduction}}>2^{cm_i}$, which implies the theorem. 
\end{proof}

We also extract the geometric aspect of the former proof. 

\begin{corollary}
    Any fan refining the hyperplane arrangement arising from the cycles of a complete graph has a cone containing only exponentially big integer points.  

    More precisely, let $(G_i)_{i\in \mathbb{N}}$ be the sequence of complete directed graphs indexed by its number of vertices and $(\Cs_i)_{i\in \mathbb{N}}$ be the corresponding sequence of sets of cycles. 
    Furthermore, let $(\mathcal{F}_i)_{i\in \mathbb{N}}$ be a sequence of polyhedral fans for which $\mathcal{F}_i$
    refines the polyhedral fan arising from the hyperplanes $\setOf{H_C}{C\in \Cs_i}$. 
    Then there is a sequence of cones, one for each element in $(\mathcal{F}_i)_{i\in \mathbb{N}}$, in which every integer vector has an entry that is exponential in the number of edges. 
\end{corollary}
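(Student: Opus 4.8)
### Proof proposal for the Corollary

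The plan is to extract the purely geometric content of the proof of \cref{thm:core-preserving-lower-bound} and to remove any reference to linear decision trees or games. The single ingredient we need is \cref{thm:expweights}: for (a subsequence of) complete directed graphs $G_i$ there is a realizable cycle pattern $\psi_i$ such that every integral realization has an entry of absolute value $2^{\Omega(n_i^2)} = 2^{\Omega(m_i)}$, where $m_i = n_i(n_i-1)$ is the number of edges. (If the construction in \cref{thm:expweights} is not literally on the complete graph, one can embed it into the complete graph by assigning weight $0$ to all added edges; the point is only that \emph{some} realizable cycle pattern on the complete graph has the exponential-realization property, and the construction of \cref{thm:expweights} is easily padded this way.)

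First I would record the key dictionary between cycle patterns and the hyperplane fan $\mathcal{A}_i$ arising from $\setOf{H_C}{C\in\Cs_i}$. A (relatively open) cone $F$ of $\mathcal{A}_i$ is exactly a maximal set of weight vectors on which the sign vector $\bigl(\sgn(\chi(C)^\top w)\bigr)_{C\in\Cs_i}$ is constant; in other words, each cone $F$ of $\mathcal{A}_i$ is precisely the \emph{realization cone} (in the sense of \cref{def:realization+cone}) of the cycle pattern it induces. Consequently, the realizable cycle pattern $\psi_i$ from \cref{thm:expweights} corresponds to a nonempty cone $F_i$ of $\mathcal{A}_i$, and "every integral realization of $\psi_i$ has an entry of absolute value $\ge 2^{c m_i}$" translates verbatim to "every integer vector in $F_i$ has $\infty$-norm $\ge 2^{c m_i}$" for a suitable absolute constant $c>0$.

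Next, the refinement step. Let $\mathcal{F}_i$ refine $\mathcal{A}_i$. Then the cone $F_i$ of $\mathcal{A}_i$ is a union of (relatively open) cones of $\mathcal{F}_i$ together with some of their faces; more simply, every full-dimensional cell of $\mathcal{F}_i$ is contained in a single closed cell of $\mathcal{A}_i$, and at least one full-dimensional cell $F_i'$ of $\mathcal{F}_i$ meets the interior of $F_i$, hence satisfies $F_i' \subseteq \overline{F_i}$. I would then observe that any integer point in $\overline{F_i}$ still has a large coordinate: a point on the boundary of $F_i$ lies in a cone of $\mathcal{A}_i$ corresponding to a cycle pattern $\psi$ with at least one extra $0$-cycle, but such a $\psi$ is still realizable (it is a coarsening in the sense of having more equalities, and any limit of realizations along the relative interior gives a realization; more directly, the closed realization cone $\{Aw\ge 0, Bw=0\}$ of $\psi_i$ has all its integer points subject to the same Hadamard/Cramer analysis, and the exponential lower bound in \cref{thm:expweights} is proved from strict cycle inequalities that degrade by at most a constant when relaxed to $\ge$, so the $2^{\Omega(m_i)}$ bound survives on $\overline{F_i}$). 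Hence every integer vector in $F_i'$ has an entry of absolute value $2^{\Omega(m_i)}$, which is the claimed sequence of cones, one per $\mathcal{F}_i$.

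The main obstacle I anticipate is the boundary issue in the last paragraph: \cref{thm:expweights} as stated is about integral realizations of a specific cycle pattern, i.e.\ integer points in the \emph{open} cone $F_i$, whereas a refining fan $\mathcal{F}_i$ might only place a full-dimensional cell on the boundary $\partial F_i$ if $F_i$ is itself lower-dimensional — though in fact the cycle patterns produced in \cref{thm:expweights} have no $0$-cycles, so $F_i$ is full-dimensional and any refining fan has a full-dimensional subcell inside $F_i$, sidestepping the issue entirely. I would therefore simply note that the construction has no zero cycles, conclude $\dim F_i = m_i$, pick a full-dimensional cell of $\mathcal{F}_i$ inside $F_i$, and take any integer point in it; all such points have $\infty$-norm $\ge 2^{c m_i}$ by \cref{thm:expweights}. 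This finishes the proof.
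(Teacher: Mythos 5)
Your proposal is correct and follows essentially the same route as the paper, which obtains this corollary by extracting the geometric core of the proof of \cref{thm:core-preserving-lower-bound}: identify the cells of the arrangement $\setOf{H_C}{C\in\Cs_i}$ with realization cones, invoke \cref{thm:expweights} to find a cell whose integer points are all exponentially large, and observe that a refining fan has a cell inside it. Your explicit padding of the construction of \cref{thm:expweights} into the complete graph and the remark that the relevant cell is full-dimensional (no $0$-cycles) are details the paper leaves implicit, but they do not change the argument.
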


\begin{remark}
    The paper \cite{Loho:2016} deduced a combinatorial and pseudopolynomial algorithm for mean payoff games based on the structure of the subdivions of a product of two simplices. 
    It turns out that the secondary fan of a product of two simplices also arises as refinement of the hyperplane arrangement associated to the cycle pattern. 
    This follows as the secondary fan can be described by inequalities on alternating cycles in a bipartite graph. 
    Therefore, there are regular triangulations of a product of two simplices $\Delta_{d-1} \times \Delta_{n-1}$ such that each integral weight vector inducing such a triangulation has an entry exponential in $d$ and $n$. 
    This extends the bounds given in \cite{BabsonBillera:1998}. 
    We leave it as an interesting research direction to deduce such results for other secondary fans of polytopes. 
\end{remark}

\section{Discussion and further thoughts}
\subsection{Extended cycle patterns}

We finish with a possible extension of our definition of cycle patterns. 
Given a digraph $G$, let $\Cs'$ be the set of ordered pairs of paths $(P_1,P_2)$, with the properties that $P_1$ and $P_2$ have the same start and end vertex, and that $P_1$ and $P_2$ share no internal vertices. 
Let $\Cs^E=\Cs\cup \Cs'$ be the union of these pairs of paths and of directed cycles. 
We define an \emph{extended cycle pattern} as a function $\psi \colon \Cs^E\to \{-,0,+\}$ on this extended domain, also associating a sign with pairs of paths. 
These signs indicate which of the paths has a larger weight.

More specifically, we say that $w$ is a \emph{realization} of $\psi$ if for every $C\in \Cs$ we have $\sgn(w(C))=\psi(C)$, and for every $(P_1,P_2)\in \Cs'$, we have $\sgn(w(P_1)-w(P_2))=\psi(P_1,P_2)$. 
The notions of realizability and induced extended cycle pattern are defined similarly. 
Note that, for every realizable extended cycle pattern and every $(P_1,P_2)\in \mathcal{C'}$, we have that $\psi(P_1,P_2)$ is the reverse sign of $\psi(P_2,P_1)$. 
It turns out that the extended cycle pattern has some nice computational advantage compared to the regular cycle pattern.

\begin{lemma}
    Given a realizable extended cycle pattern $\psi$ as a Boolean circuit, we can perform the following tasks in polynomial time:
    \begin{itemize}
        \item Decide if a graph contains negative cycles, and find one if it exists.
        \item Find a shortest directed walk between any pair of points, if one exists. 
    \end{itemize}
\end{lemma}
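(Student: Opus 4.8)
\textbf{The plan} is to run Bellman--Ford from a fixed source $s$, but to replace every numeric comparison it performs by a single evaluation of the Boolean circuit for $\psi$. We never compute weights. For the analysis we fix some realization $w$ of $\psi$ (it exists since $\psi$ is realizable); the claim is that our algorithm makes exactly the same branching decisions as ordinary Bellman--Ford run on $(G,w)$, so all of its textbook guarantees carry over. Since $G$ is strongly connected, ``$G$ contains a negative cycle'' is equivalent to ``$G$ contains a negative cycle reachable from $s$'', and if such a cycle exists then for every $t$ there is an $s$--$t$ walk of weight tending to $-\infty$, so no shortest walk exists.

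\textbf{Turning comparisons into queries.} Our algorithm maintains predecessor pointers; as long as no ``back edge'' has caused a cycle-creating update these pointers form a tree $T$ rooted at $s$, and the distance estimate of $v$ equals $w(Q_v)$ for $Q_v$ the $T$-path from $s$ to $v$. Relaxing an edge $(u,v)$ requires the sign of $w(Q_u)+w(u,v)-w(Q_v)$. Let $x$ be the deepest common vertex of $Q_u$ and $Q_v$ in $T$; writing $Q_u$ and $Q_v$ as a shared prefix $s\to x$ followed by internally vertex-disjoint $T$-paths $P_{x,u}$ and $P_{x,v}$, the sign equals $\sgn\bigl(w(P_{x,u}\cdot(u,v))-w(P_{x,v})\bigr)$. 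If $u\neq v$ and $x\neq v$ then $P_{x,u}\cdot(u,v)$ is a simple path internally vertex-disjoint from $P_{x,v}$ (if $v$ lay on $P_{x,u}$ it would be a common vertex of $Q_u,Q_v$ of depth at least that of $x$, forcing $v=x$; if $x=u$ the first factor is just the edge $(u,v)$), so the sign is exactly $\psi\bigl(P_{x,u}\cdot(u,v),\,P_{x,v}\bigr)$, a legal query on a pair in $\Cs'$. If $x=v$ then $C:=P_{v,u}\cdot(u,v)$ is a directed cycle through $v$ and the sign is $\psi(C)$; if $u=v$ the edge is a loop and $C=(u,v)$ is a length-one cycle, again with sign $\psi(C)$; and the one remaining trivial situation, relaxing the edge of $T$ entering $v$, never improves and can be skipped without a query. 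Thus each relaxation is decided by one circuit evaluation, and since $\psi$ is realized by $w$ this is the sign that ordinary Bellman--Ford on $(G,w)$ would compute.

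\textbf{Putting it together.} Run $O(|V|)$ full relaxation passes from $s$. Whenever a relaxation is of the back-edge type above (targeting an ancestor $v$ of $u$) with $\psi(C)=-$, halt and output $C$: such a relaxation being improving is equivalent to $w(C)<0$, and $C$ is a genuine simple negative cycle. Before any such step the predecessor structure is a tree and the run coincides with ordinary Bellman--Ford on $(G,w)$. Hence, if a negative cycle exists, pick a vertex $z$ on it; by strong connectivity $z$ is reachable from $s$, and a weight-magnitude-independent counting argument (the estimate of $z$ drops to at most $\delta^{(|V|-1)}(z)+w(C^{*})$ after $|V|-1+|C^{*}|\le 2|V|-1$ passes, which is strictly below the weight of every simple $s$--$z$ path, and this is impossible while the predecessor structure is a tree) shows a cycle-creating update must occur within $2|V|-1$ passes; so our algorithm halts with a negative cycle within that many passes. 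Conversely, if $2|V|$ passes complete without halting, no negative cycle exists, the predecessor tree is a genuine shortest-path tree for $(G,w)$, and for every target $t$ the $T$-path $Q_t$ is a shortest $s$--$t$ walk, which we output (and we report ``no shortest walk'' when a negative cycle was found). Repeating over all choices of source handles arbitrary pairs, and the whole procedure is polynomial in $|V|$, $|E|$ and the circuit size.

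\textbf{Main obstacle.} The two delicate points are: (1) checking that prefix-cancellation always leaves an object that $\psi$ can legally be queried on, namely a directed cycle in $\Cs$ or an internally vertex-disjoint pair of equal-endpoint paths in $\Cs'$, together with the short list of degenerate cases; and (2) bounding the number of passes before a negative cycle, if present, surfaces as an improving back-edge relaxation. The latter must use the ``at most $j$ edges'' form of the Bellman--Ford analysis rather than a naive potential argument, since the realizing weights $w$ may be exponentially large. Everything else is the standard correctness proof of Bellman--Ford, transported through the fixed (but never constructed) realization $w$.
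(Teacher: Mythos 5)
Your proof is correct and follows the same route as the paper's (much briefer) proof sketch: an adapted Bellman--Ford that stores a path per node instead of a number and resolves each relaxation by a single query to $\psi$, after cancelling the common prefix so that the query is on a legal pair of internally disjoint paths or on a cycle. The details you supply --- the case analysis guaranteeing each query is admissible, and the pass bound forcing a negative cycle to surface as an improving back-edge relaxation --- are exactly the points the paper leaves implicit.
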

\begin{proof}[Sketch of proof]
    Since we can compare any two paths with disjoint interiors, we can use an adapted version of the Bellman-Ford algorithm to solve these two problems: in this adaptation, instead of storing a number for each node, we store a path for each node representing the shortest path found so far.
\end{proof}
\begin{corollary}
    Given a realizable extended cycle pattern $\psi$ of the graph, the zero-mean partition in one-player MPGs can be computed in polynomial time. 
\end{corollary}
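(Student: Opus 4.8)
The plan is to reduce solving a one-player MPG to a small number of cycle-detection tasks covered by the preceding lemma; recall that we assume the arena to be strongly connected. Consider first the case $\Vmin=\emptyset$, where player Max controls every vertex. Then every cycle of $G$ is reachable from every vertex, and heading to a best reachable cycle and looping there is optimal for Max, so $\val(v)\ge 0$ for some (equivalently, for every) vertex $v$ exactly when $G$ contains a cycle $C$ with $w(C)\ge 0$; hence the zero-mean partition is all of $V$ if a nonnegative cycle exists and empty otherwise. Symmetrically, if $\Vmax=\emptyset$, the partition is all of $V$ when $G$ has no negative cycle and empty otherwise, which is decided directly by the first item of the preceding lemma applied to $\psi$. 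So the only thing left to do is to decide, in the case $\Vmin=\emptyset$, whether $G$ has a cycle of nonnegative weight, using only the extended cycle pattern.

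For that I would first detect a positive cycle by applying the first item of the preceding lemma to the sign-reversed extended cycle pattern obtained from $\psi$ by flipping all of its values; this is again realizable, namely by $-w$, and a negative cycle with respect to $-w$ is precisely a positive cycle with respect to $w$. If one is found, the answer is ``yes''. Otherwise every cycle has $w(C)\le 0$, equivalently $(-w)(C)\ge 0$, so shortest directed walks with respect to $-w$ are finite and, by cutting out closed subwalks, may be assumed to be simple paths. Using the second item of the preceding lemma, compute for each edge $e=(u,v)$ a shortest $v$-$u$ walk $P_e$ with respect to $-w$ and evaluate $\psi$ on the simple cycle $P_e\cup\{e\}$ (simple because $P_e$ visits $v$ only at its start). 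If $\psi(P_e\cup\{e\})=0$ for some edge $e$, then $G$ has a zero cycle and the answer is ``yes''; if not, then --- since there is also no positive cycle --- every cycle is negative and the answer is ``no''. The last step is correct because if $C$ is any zero cycle through an edge $e=(u,v)$, the path $C\setminus\{e\}$ is already a shortest $v$-$u$ walk with respect to $-w$: a strictly shorter one would close up with $e$ into a negative cycle for $-w$, contradicting the standing assumption. Since a single adapted Bellman-Ford run for shortest walks with respect to $-w$ also detects the positive-cycle case, the whole procedure uses one shortest-walk computation and $O(m)$ evaluations of $\psi$, hence runs in polynomial time.

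The main obstacle is not in this reduction but in the shortest-walk routine of the preceding lemma that it relies on: comparing two walks with the same endpoints must be reduced to a bounded number of queries of $\psi$ on directed cycles and on internally-disjoint path pairs, and one has to check that storing paths instead of numbers keeps the Bellman-Ford iteration polynomial and that unbounded (i.e.\ $-\infty$) walks are flagged correctly. Granting that --- which is exactly what the preceding lemma asserts --- the argument above is routine; the one genuinely new ingredient is the elimination of the awkward zero-cycle case by the sign-reversal trick together with the observation that the complement of a zero cycle is necessarily a shortest path.
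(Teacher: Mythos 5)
Your proof is correct and follows the route the paper intends: the corollary is stated there as an immediate consequence of the preceding lemma, and you derive it from exactly that lemma, applied to $\psi$ and to its sign-reversed pattern (realized by $-w$). The one genuinely nontrivial detail --- detecting a zero cycle once positive cycles are excluded, by closing up each edge $(u,v)$ with a shortest $v$-$u$ walk for $-w$ --- is handled correctly.
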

This is different from regular cycle patterns as we saw in \cref{thm:1pZMP}. However, this is much closer to the properties of many algorithms for games on graphs. For example, many strategy improvement algorithms, the GKK algorithm and the value iteration algorithm from \cite[p. 37]{dorfman2024improved} also solve one-player games in polynomial time. For some strategy improvement algorithms, one-player computation is used as a (polynomial time) subroutine, e.g.  \cite{DBLP:journals/dam/BjorklundV07}, and the other two algorithms also solve one-player games in polynomial time, as direct consequences of \cite[Thm. 5]{Ohlmann2022GKK} and \break \cite[Lem. 2.2.14]{dorfman2024improved}, respectively.

This is an indication that the extension allows for a more precise analysis of algorithms. 
Even stronger, the strategy improvement algorithm from \cite{DBLP:journals/dam/BjorklundV07} only uses shortest path computations and comparisons of path lengths. 
Adapting it to the setting of extended cycle patterns is left for future work. 

\subsection{Discussion}
We have defined the cycle pattern, characterized realizability, and have shown that several problems are hard to compute when only given the cycle pattern as a Boolean circuit. In this case much of the complexity comes from the encoding of the cycle pattern. However, it shows that, if a (randomized) polynomial time algorithm for the zero-mean partition problem existed, it must somehow utilize information from a specific realization (assuming $NP\neq RP$).

The hardness of parity-realizability has an interesting consequence.
While some attempts have been made to translate ideas from parity games to mean payoff games, like \emph{quasi-dominions} \cite{benerecetti2024quasi,Fearnley2010Nonoblivious} and \emph{universal graphs} \cite{fijalkow:hal-03800510}, they did not turn out to give a significant algorithmic advantage. 
Our result may be considered a complexity theoretic quantification of this phenomenon, showing how different the structure of parity games and mean payoff games is, given that cycle patterns capture a fundamental essence of the algorithmically relevant structure. 
Stated more boldly, we get the following restriction for applying parity game techniques to MPGs: Suppose we are able to adapt a quasipolynomial parity game algorithm to work on MPGs in the same way. 
Then, for every instance of a MPG whose cycle pattern is parity-realizable, our algorithm cannot detect this fact unless $NP=QP$.
Here $QP$ is the class of problems that can be solved in quasipolynomial time (there exists some $c\in \mathbb{N}$, such that it can be decided on a Turing machine in time $O(2^{\left(\log(n)\right)^c})$, where $n$ is the input size).

We have shown that there are graphs for which the representability measure is large for any algorithm. 
This does not imply large running time, although it does provide an explanation why these algorithms have exponential worst-case behaviour, especially for pseudopolynomial algorithms. This explanation is more powerful than many well-known graph complexity measures, because it also holds for graphs where many commonly used ones are small (see \cref{thm:expweights}). We leave an explicit connection between sizes of realizations and complexity of (LDT) algorithms as further work.

It would also be possible to define the representability measure for the leaf subdivision. 
It is still open how large it can be in that case. 
Finally, as already mentioned, complexity of the bounded realization problem is also still open, as is the question if the lower bound from \cref{thm:expweights} can be improved to $2^{\Theta(m\log (n))}$.

We hope this is just the beginning of a geometric complexity theory for mean payoff games.

\bibliography{patterns}

\begin{thebibliography}{55}
\providecommand{\natexlab}[1]{#1}
\providecommand{\url}[1]{\texttt{#1}}
\expandafter\ifx\csname urlstyle\endcsname\relax
  \providecommand{\doi}[1]{doi: #1}\else
  \providecommand{\doi}{doi: \begingroup \urlstyle{rm}\Url}\fi

\bibitem[Akian et~al.(2012)Akian, Gaubert, and
  Guterman]{AkianGaubertGuterman:2012}
Marianne Akian, St{\'e}phane Gaubert, and Alexander Guterman.
\newblock Tropical polyhedra are equivalent to mean payoff games.
\newblock \emph{Internat. J. Algebra Comput.}, 22\penalty0 (1):\penalty0
  1250001, 43, 2012.
\newblock ISSN 0218-1967.
\newblock \doi{10.1142/S0218196711006674}.

\bibitem[Allamigeon et~al.(2014)Allamigeon, Benchimol, Gaubert, and
  Joswig]{Allamigeon2014TropicalSimplex}
Xavier Allamigeon, Pascal Benchimol, St\'{e}phane Gaubert, and Michael Joswig.
\newblock Combinatorial simplex algorithms can solve mean payoff games.
\newblock \emph{SIAM Journal on Optimization}, 24\penalty0 (4):\penalty0
  2096--2117, 2014.
\newblock \doi{10.1137/140953800}.

\bibitem[Alon and V{\~u}(1997)]{alon1997anti}
Noga Alon and V{\u{a}}n~H V{\~u}.
\newblock Anti-hadamard matrices, coin weighing, threshold gates, and
  indecomposable hypergraphs.
\newblock \emph{Journal of Combinatorial Theory, Series A}, 79\penalty0
  (1):\penalty0 133--160, 1997.
\newblock \doi{https://doi.org/10.1006/jcta.1997.2780}.

\bibitem[Ardila and Develin(2009)]{ArdilaDevelin:2009}
Federico Ardila and Mike Develin.
\newblock Tropical hyperplane arrangements and oriented matroids.
\newblock \emph{Math. Z.}, 262\penalty0 (4):\penalty0 795--816, 2009.
\newblock ISSN 0025-5874.
\newblock \doi{10.1007/s00209-008-0400-z}.

\bibitem[Austin and Dell'Erba(2023)]{austin2023erratatofasterdeterministic}
Peter Austin and Daniele Dell'Erba.
\newblock Errata to: "faster deterministic exponential time algorithm for
  energy games and mean payoff games".
\newblock 2023.
\newblock \doi{10.48550/arXiv.2310.04130}.

\bibitem[Babson and Billera(1998)]{BabsonBillera:1998}
E.~K. Babson and L.~J. Billera.
\newblock The geometry of products of minors.
\newblock \emph{Discrete Comput. Geom.}, 20\penalty0 (2):\penalty0 231--249,
  1998.
\newblock ISSN 0179-5376.
\newblock \doi{10.1007/PL00009384}.

\bibitem[Baldan et~al.(2023)Baldan, Eggert, K\"{o}nig, and
  Padoan]{baldan_et_al:LIPIcs.CSL.2023.7}
Paolo Baldan, Richard Eggert, Barbara K\"{o}nig, and Tommaso Padoan.
\newblock {A Lattice-Theoretical View of Strategy Iteration}.
\newblock In \emph{31st EACSL Annual Conference on Computer Science Logic (CSL
  2023)}, volume 252 of \emph{Leibniz International Proceedings in Informatics
  (LIPIcs)}, pages 7:1--7:19. Schloss Dagstuhl -- Leibniz-Zentrum f{\"u}r
  Informatik, 2023.
\newblock ISBN 978-3-95977-264-8.
\newblock \doi{10.4230/LIPIcs.CSL.2023.7}.

\bibitem[Benerecetti et~al.(2024)Benerecetti, Dell’Erba, Faella, and
  Mogavero]{benerecetti2024quasi}
Massimo Benerecetti, Daniele Dell’Erba, Marco Faella, and Fabio Mogavero.
\newblock From quasi-dominions to progress measures.
\newblock In \emph{ASPECTS OF COMPUTATION AND AUTOMATA THEORY WITH
  APPLICATIONS}, pages 159--199. World Scientific, 2024.
\newblock \doi{https://doi.org/10.1142/9789811278631_0007}.

\bibitem[Bj{\"{o}}rklund and Vorobyov(2007)]{DBLP:journals/dam/BjorklundV07}
Henrik Bj{\"{o}}rklund and Sergei~G. Vorobyov.
\newblock A combinatorial strongly subexponential strategy improvement
  algorithm for mean payoff games.
\newblock \emph{Discret. Appl. Math.}, 155\penalty0 (2):\penalty0 210--229,
  2007.
\newblock \doi{10.1016/J.DAM.2006.04.029}.

\bibitem[Bj\"{o}rner et~al.(1992)Bj\"{o}rner, Lov\'{a}sz, and
  Yao]{BjornerLovaszYao:1992}
Anders Bj\"{o}rner, L\'{a}szl\'{o} Lov\'{a}sz, and Andrew C.~C. Yao.
\newblock Linear decision trees: volume estimates and topological bounds.
\newblock In \emph{Proceedings of the Twenty-Fourth Annual ACM Symposium on
  Theory of Computing}, STOC '92, page 170–177, New York, NY, USA, 1992.
  Association for Computing Machinery.
\newblock ISBN 0897915119.
\newblock \doi{10.1145/129712.129730}.

\bibitem[Boros et~al.(2011)Boros, Elbassioni, Gurvich, and
  Tiwary]{BorosElbassioniGurvichTiwary:2011}
Endre Boros, Khaled Elbassioni, Vladimir Gurvich, and Hans~Raj Tiwary.
\newblock The negative cycles polyhedron and hardness of checking some
  polyhedral properties.
\newblock \emph{Ann. Oper. Res.}, 188:\penalty0 63--76, 2011.
\newblock ISSN 0254-5330.
\newblock \doi{10.1007/s10479-010-0690-5}.

\bibitem[Brim and Chaloupka(2012)]{DBLP:journals/ijfcs/BrimC12}
Lubos Brim and Jakub Chaloupka.
\newblock Using strategy improvement to stay alive.
\newblock \emph{Int. J. Found. Comput. Sci.}, 23\penalty0 (3):\penalty0
  585--608, 2012.
\newblock \doi{10.1142/S0129054112400291}.

\bibitem[Brim et~al.(2011)Brim, Chaloupka, Doyen, Gentilini, and
  Raskin]{DBLP:journals/fmsd/BrimCDGR11}
Lubos Brim, Jakub Chaloupka, Laurent Doyen, Raffaella Gentilini, and
  Jean{-}Fran{\c{c}}ois Raskin.
\newblock Faster algorithms for mean-payoff games.
\newblock \emph{Formal Methods Syst. Des.}, 38\penalty0 (2):\penalty0 97--118,
  2011.
\newblock \doi{10.1007/S10703-010-0105-X}.

\bibitem[B{\"{u}}rgisser et~al.(1997)B{\"{u}}rgisser, Clausen, and
  Shokrollahi]{BuergisserClausenShokrollahi:1997}
Peter B{\"{u}}rgisser, Michael Clausen, and Mohammad~Amin Shokrollahi.
\newblock \emph{Algebraic complexity theory}, volume 315 of \emph{Grundlehren
  der mathematischen Wissenschaften}.
\newblock Springer, 1997.
\newblock ISBN 3-540-60582-7.
\newblock \doi{10.1007/978-3-662-03338-8}.

\bibitem[Calude et~al.(2017)Calude, Jain, Khoussainov, Li, and
  Stephan]{Calude2017Quasipolynomial}
Cristian~S. Calude, Sanjay Jain, Bakhadyr Khoussainov, Wei Li, and Frank
  Stephan.
\newblock Deciding parity games in quasipolynomial time.
\newblock In \emph{Proceedings of the 49th Annual ACM SIGACT Symposium on
  Theory of Computing}, STOC 2017, page 252–263, New York, NY, USA, 2017.
  Association for Computing Machinery.
\newblock ISBN 9781450345286.
\newblock \doi{10.1145/3055399.3055409}.

\bibitem[Comin et~al.(2014)Comin, Posenato, and Rizzi]{Comin2014Hypertemporal}
Carlo Comin, Roberto Posenato, and Romeo Rizzi.
\newblock A tractable generalization of simple temporal networks and its
  relation to mean payoff games.
\newblock In \emph{2014 21st International Symposium on Temporal Representation
  and Reasoning}, pages 7--16, 2014.
\newblock \doi{10.1109/TIME.2014.19}.

\bibitem[Dhingra and Gaubert(2006)]{Dhingra2006HowGames}
Vishesh Dhingra and St\'{e}phane Gaubert.
\newblock How to solve large scale deterministic games with mean payoff by
  policy iteration.
\newblock In \emph{Proceedings of the 1st International Conference on
  Performance Evaluation Methodolgies and Tools}, valuetools '06, page 12–es,
  New York, NY, USA, 2006. Association for Computing Machinery.
\newblock ISBN 1595935045.
\newblock \doi{10.1145/1190095.1190110}.

\bibitem[Dorfman(2024)]{dorfman2024improved}
Dani Dorfman.
\newblock \emph{Improved bounds for Energy Games and Mean Payoff Games}.
\newblock PhD thesis, Tel Aviv University, 2024.

\bibitem[Dorfman et~al.(2019)Dorfman, Kaplan, and
  Zwick]{dorfman_et_al:LIPIcs.ICALP.2019.114}
Dani Dorfman, Haim Kaplan, and Uri Zwick.
\newblock {A Faster Deterministic Exponential Time Algorithm for Energy Games
  and Mean Payoff Games}.
\newblock In \emph{46th International Colloquium on Automata, Languages, and
  Programming (ICALP 2019)}, volume 132 of \emph{Leibniz International
  Proceedings in Informatics (LIPIcs)}, pages 114:1--114:14. Schloss Dagstuhl
  -- Leibniz-Zentrum f{\"u}r Informatik, 2019.
\newblock ISBN 978-3-95977-109-2.
\newblock \doi{10.4230/LIPIcs.ICALP.2019.114}.

\bibitem[Emerson et~al.(2001)Emerson, Jutla, and Sistla]{EMERSON2001491}
E.Allen Emerson, Charanjit~S. Jutla, and A.Prasad Sistla.
\newblock On model checking for the $\mu$-calculus and its fragments.
\newblock \emph{Theoretical Computer Science}, 258\penalty0 (1):\penalty0
  491--522, 2001.
\newblock ISSN 0304-3975.
\newblock \doi{https://doi.org/10.1016/S0304-3975(00)00034-7}.

\bibitem[Fearnley(2010)]{Fearnley2010Nonoblivious}
John Fearnley.
\newblock Non-oblivious strategy improvement.
\newblock In \emph{Logic for Programming, Artificial Intelligence, and
  Reasoning}, pages 212--230, Berlin, Heidelberg, 2010. Springer Berlin
  Heidelberg.
\newblock ISBN 978-3-642-17511-4.
\newblock \doi{10.1007/978-3-642-17511-4_13}.

\bibitem[Fijalkow et~al.(2020)Fijalkow, Gawrychowski, and
  Ohlmann]{fijalkow:hal-03800510}
Nathana{\"e}l Fijalkow, Pawel Gawrychowski, and Pierre Ohlmann.
\newblock {Value Iteration Using Universal Graphs and the Complexity of Mean
  Payoff Games}.
\newblock In \emph{{45th International Symposium on Mathematical Foundations of
  Computer Science (MFCS 2020)}}, Prague, Czech Republic, August 2020.
\newblock \doi{10.4230/LIPIcs.MFCS.2020.34}.

\bibitem[Fijalkow et~al.(2023)Fijalkow, Bertrand, Bouyer-Decitre, Brenguier,
  Carayol, Fearnley, Gimbert, Horn, Ibsen-Jensen, Markey, Monmege, Novotný,
  Randour, Sankur, Schmitz, Serre, and Skomra]{fijalkow2023gamesgraphs}
Nathanaël Fijalkow, Nathalie Bertrand, Patricia Bouyer-Decitre, Romain
  Brenguier, Arnaud Carayol, John Fearnley, Hugo Gimbert, Florian Horn, Rasmus
  Ibsen-Jensen, Nicolas Markey, Benjamin Monmege, Petr Novotný, Mickael
  Randour, Ocan Sankur, Sylvain Schmitz, Olivier Serre, and Mateusz Skomra.
\newblock \emph{Games on Graphs}.
\newblock 2023.
\newblock \doi{https://doi.org/10.48550/arXiv.2305.10546}.

\bibitem[Frank and Tardos(1987)]{FrankTardos:1987}
Andr{\'a}s Frank and {\'E}va Tardos.
\newblock An application of simultaneous diophantine approximation in
  combinatorial optimization.
\newblock \emph{Combinatorica}, 7\penalty0 (1):\penalty0 49--65, Mar 1987.
\newblock ISSN 1439-6912.
\newblock \doi{10.1007/BF02579200}.

\bibitem[Friedmann and Lange(2009)]{Friedmann2009ParityPractice}
Oliver Friedmann and Martin Lange.
\newblock Solving parity games in practice.
\newblock In \emph{Automated Technology for Verification and Analysis}, pages
  182--196, Berlin, Heidelberg, 2009. Springer Berlin Heidelberg.
\newblock ISBN 978-3-642-04761-9.
\newblock \doi{10.1007/978-3-642-04761-9_15}.

\bibitem[Godsil and Royle(2001)]{godsil_algebraic_2001}
Chris Godsil and Gordon Royle.
\newblock \emph{Algebraic {Graph} {Theory}}, volume 207 of \emph{Graduate
  {Texts} in {Mathematics}}.
\newblock Springer, New York, NY, 2001.
\newblock ISBN 978-0-387-95220-8 978-1-4613-0163-9.
\newblock \doi{10.1007/978-1-4613-0163-9}.

\bibitem[Joswig(2022)]{Joswig:2022}
Michael Joswig.
\newblock \emph{Essentials of tropical combinatorics}, volume 219 of
  \emph{Graduate Studies in Mathematics}.
\newblock American Mathematical Society, Providence, RI, 2022.

\bibitem[Jurdziński and Lazić(2017)]{Jurdzinski2017Succinct}
Marcin Jurdziński and Ranko Lazić.
\newblock Succinct progress measures for solving parity games.
\newblock In \emph{2017 32nd Annual ACM/IEEE Symposium on Logic in Computer
  Science (LICS)}, pages 1--9, 2017.
\newblock \doi{10.1109/LICS.2017.8005092}.

\bibitem[Kane et~al.(2019)Kane, Lovett, and Moran]{KaneLovettMoran:2019}
Daniel~M. Kane, Shachar Lovett, and Shay Moran.
\newblock Near-optimal linear decision trees for k-sum and related problems.
\newblock \emph{J. ACM}, 66\penalty0 (3), April 2019.
\newblock ISSN 0004-5411.
\newblock \doi{10.1145/3285953}.

\bibitem[Kavitha et~al.(2009)Kavitha, Liebchen, Mehlhorn, Michail, Rizzi,
  Ueckerdt, and Zweig]{KAVITHA2009199}
Telikepalli Kavitha, Christian Liebchen, Kurt Mehlhorn, Dimitrios Michail,
  Romeo Rizzi, Torsten Ueckerdt, and Katharina~A. Zweig.
\newblock Cycle bases in graphs characterization, algorithms, complexity, and
  applications.
\newblock \emph{Computer Science Review}, 3\penalty0 (4):\penalty0 199--243,
  2009.
\newblock ISSN 1574-0137.
\newblock \doi{https://doi.org/10.1016/j.cosrev.2009.08.001}.

\bibitem[Kawase et~al.(2015)Kawase, Kobayashi, and
  Yamaguchi]{kawase2015finding}
Yasushi Kawase, Yusuke Kobayashi, and Yutaro Yamaguchi.
\newblock Finding a path in group-labeled graphs with two labels forbidden.
\newblock In \emph{International Colloquium on Automata, Languages, and
  Programming}, pages 797--809. Springer, 2015.
\newblock \doi{10.1007/978-3-662-47672-7_65}.

\bibitem[Khachivan(1990)]{gurvich1990cyclic}
LG~Khachivan.
\newblock Cyclic games and an algorithm to find minimax cycle means in directed
  graphs.
\newblock \emph{USSR Computational Mathematics and Mathematical Physics},
  28\penalty0 (5):\penalty0 85--91, 1990.
\newblock URL \url{http://www.maths.lse.ac.uk/Personal/stengel/ussrGKK.pdf}.

\bibitem[Khachiyan et~al.(2008)Khachiyan, Boros, Borys, Elbassioni, and
  Gurvich]{KhachiyanBorosBorysElbassioniGurvich:2008}
Leonid Khachiyan, Endre Boros, Konrad Borys, Khaled Elbassioni, and Vladimir
  Gurvich.
\newblock Generating all vertices of a polyhedron is hard.
\newblock \emph{Discrete Comput. Geom.}, 39\penalty0 (1-3):\penalty0 174--190,
  2008.
\newblock ISSN 0179-5376.
\newblock \doi{10.1007/s00454-008-9050-5}.

\bibitem[Koh and Loho(2021)]{KohLoho:2021}
Zhuan~Khye Koh and Georg Loho.
\newblock Beyond value iteration for parity games: Strategy iteration with
  universal trees.
\newblock \emph{CoRR}, abs/2108.13338, 2021.
\newblock URL \url{https://arxiv.org/abs/2108.13338}.

\bibitem[Loff and Skomra(2024)]{loff_et_al_SmoothedGames}
Bruno Loff and Mateusz Skomra.
\newblock {Smoothed Analysis of Deterministic Discounted and Mean-Payoff
  Games}.
\newblock In \emph{51st International Colloquium on Automata, Languages, and
  Programming (ICALP 2024)}, volume 297 of \emph{Leibniz International
  Proceedings in Informatics (LIPIcs)}, pages 147:1--147:16. Schloss Dagstuhl
  -- Leibniz-Zentrum f{\"u}r Informatik, 2024.
\newblock ISBN 978-3-95977-322-5.
\newblock \doi{10.4230/LIPIcs.ICALP.2024.147}.

\bibitem[{Loho}(2020)]{Loho:2016}
Georg {Loho}.
\newblock {Abstract tropical linear programming}.
\newblock \emph{{Electron. J. Comb.}}, 27\penalty0 (2):\penalty0 research paper
  p2.51, 68, 2020.
\newblock ISSN 1077-8926/e.
\newblock \doi{10.37236/7718}.

\bibitem[Möhring et~al.(2004)Möhring, Skutella, and
  Stork]{mohring_scheduling_2004}
Rolf~H. Möhring, Martin Skutella, and Frederik Stork.
\newblock Scheduling with {AND}/{OR} {Precedence} {Constraints}.
\newblock \emph{SIAM Journal on Computing}, 33\penalty0 (2):\penalty0 393--415,
  January 2004.
\newblock ISSN 0097-5397.
\newblock \doi{10.1137/S009753970037727X}.

\bibitem[Ohlmann(2021)]{ohlmann2021monotonic}
Pierre Ohlmann.
\newblock \emph{Monotonic graphs for parity and mean-payoff games}.
\newblock PhD thesis, Universit{\'e} Paris Cit{\'e}, 2021.
\newblock URL \url{https://theses.hal.science/tel-03771185/}.

\bibitem[Ohlmann(2022)]{Ohlmann2022GKK}
Pierre Ohlmann.
\newblock The gkk algorithm is the fastest over simple mean-payoff games.
\newblock In \emph{Computer Science -- Theory and Applications}, pages
  269--288, Cham, 2022. Springer International Publishing.
\newblock ISBN 978-3-031-09574-0.
\newblock \doi{10.1007/978-3-031-09574-0_17}.

\bibitem[Parys(2019)]{parys:LIPIcs.MFCS.2019.10}
Pawe{\l} Parys.
\newblock {Parity Games: Zielonka’s Algorithm in Quasi-Polynomial Time}.
\newblock In \emph{44th International Symposium on Mathematical Foundations of
  Computer Science (MFCS 2019)}, volume 138 of \emph{Leibniz International
  Proceedings in Informatics (LIPIcs)}, pages 10:1--10:13. Schloss Dagstuhl --
  Leibniz-Zentrum f{\"u}r Informatik, 2019.
\newblock ISBN 978-3-95977-117-7.
\newblock \doi{10.4230/LIPIcs.MFCS.2019.10}.

\bibitem[Rehs(2022)]{rehs2022comparing}
Carolin Rehs.
\newblock \emph{Comparing and Computing Parameters for Directed Graphs}.
\newblock PhD thesis, Dissertation, D{\"u}sseldorf,
  Heinrich-Heine-Universit{\"a}t, 2022.
\newblock URL
  \url{https://docserv.uni-duesseldorf.de/servlets/DerivateServlet/Derivate-63757/arbeit_archive.pdf}.

\bibitem[Schaefer and Umans(2002)]{schaefer2002completeness}
Marcus Schaefer and Christopher Umans.
\newblock Completeness in the polynomial-time hierarchy: A compendium.
\newblock \emph{SIGACT news}, 33\penalty0 (3):\penalty0 32--49, 2002.
\newblock URL
  \url{https://citeseerx.ist.psu.edu/document?repid=rep1&type=pdf&doi=b5dfb9bbff81164a7711522deafcef8f8d8610a5}.

\bibitem[Schewe et~al.(2015)Schewe, Trivedi, and
  Varghese]{DBLP:conf/icalp/ScheweTV15}
Sven Schewe, Ashutosh Trivedi, and Thomas Varghese.
\newblock Symmetric strategy improvement.
\newblock In \emph{Automata, Languages, and Programming - 42nd International
  Colloquium, {ICALP} 2015, Kyoto, Japan, July 6-10, 2015, Proceedings, Part
  {II}}, volume 9135 of \emph{Lecture Notes in Computer Science}, pages
  388--400. Springer, 2015.
\newblock \doi{10.1007/978-3-662-47666-6\_31}.

\bibitem[Schrijver(1987)]{Schrijver:1987}
Alexander Schrijver.
\newblock \emph{Theory of linear and integer programming}.
\newblock Wiley-Intersci. Ser. Discrete Math. Optim. Wiley, New York, 1987.
\newblock ISBN 0-471-98232-6.

\bibitem[Sipser(1996)]{sipser1996introduction}
Michael Sipser.
\newblock \emph{Introduction to the Theory of Computation}, volume~27.
\newblock ACM New York, NY, USA, 1996.
\newblock ISBN 0-534-95250-X.

\bibitem[Snir(1981)]{snir1981proving}
Marc Snir.
\newblock Proving lower bounds for linear decision trees.
\newblock In \emph{International Colloquium on Automata, Languages, and
  Programming}, pages 305--315. Springer, 1981.

\bibitem[Stoer and Witzgall(1970)]{StoerWitzgall:1970}
Josef Stoer and Christoph Witzgall.
\newblock \emph{Convexity and Optimization in Finite Dimensions I}, volume 163
  of \emph{Grundlehren Math. Wiss.}
\newblock Springer, Berlin, 1970.

\bibitem[Valiant and Vazirani(1986)]{ValiantVazirani:1986}
L.G. Valiant and V.V. Vazirani.
\newblock {NP} is as easy as detecting unique solutions.
\newblock \emph{Theoretical Computer Science}, 47:\penalty0 85--93, 1986.
\newblock ISSN 0304-3975.
\newblock \doi{https://doi.org/10.1016/0304-3975(86)90135-0}.

\bibitem[V{\"{o}}ge and Jurdzinski(2000)]{DBLP:conf/cav/VogeJ00}
Jens V{\"{o}}ge and Marcin Jurdzinski.
\newblock A discrete strategy improvement algorithm for solving parity games.
\newblock In \emph{Computer Aided Verification, 12th International Conference,
  {CAV} 2000, Chicago, IL, USA, July 15-19, 2000, Proceedings}, volume 1855 of
  \emph{Lecture Notes in Computer Science}, pages 202--215. Springer, 2000.
\newblock \doi{10.1007/10722167\_18}.

\bibitem[Wrathall(1976)]{WRATHALL197623}
Celia Wrathall.
\newblock Complete sets and the polynomial-time hierarchy.
\newblock \emph{Theoretical Computer Science}, 3\penalty0 (1):\penalty0 23--33,
  1976.
\newblock ISSN 0304-3975.
\newblock \doi{https://doi.org/10.1016/0304-3975(76)90062-1}.

\bibitem[Yamada and Kinoshita(2002)]{YamadaKinoshita:2002}
Takeo Yamada and Harunobu Kinoshita.
\newblock Finding all the negative cycles in a directed graph.
\newblock \emph{Discrete Appl. Math.}, 118\penalty0 (3):\penalty0 279--291,
  2002.
\newblock ISSN 0166-218X.
\newblock \doi{10.1016/S0166-218X(01)00201-3}.

\bibitem[Ziegler(1995)]{Ziegler:1995}
G{\"u}nter~M. Ziegler.
\newblock \emph{Lectures on polytopes}, volume 152 of \emph{Grad. Texts Math.}
\newblock Berlin: Springer-Verlag, 1995.
\newblock ISBN 3-540-94365-X.
\newblock \doi{10.1007/978-1-4613-8431-1}.

\bibitem[Zielonka(1998)]{ZIELONKA1998135}
Wieslaw Zielonka.
\newblock Infinite games on finitely coloured graphs with applications to
  automata on infinite trees.
\newblock \emph{Theoretical Computer Science}, 200\penalty0 (1):\penalty0
  135--183, 1998.
\newblock ISSN 0304-3975.
\newblock \doi{https://doi.org/10.1016/S0304-3975(98)00009-7}.

\bibitem[Zwick and Paterson(1996)]{ZWICK1996343}
Uri Zwick and Mike Paterson.
\newblock The complexity of mean payoff games on graphs.
\newblock \emph{Theoretical Computer Science}, 158\penalty0 (1):\penalty0
  343--359, 1996.
\newblock ISSN 0304-3975.
\newblock \doi{https://doi.org/10.1016/0304-3975(95)00188-3}.

\bibitem[Éva Tardos(1986)]{Tardos:1986}
Éva Tardos.
\newblock A strongly polynomial algorithm to solve combinatorial linear
  programs.
\newblock \emph{Operations Research}, 34\penalty0 (2):\penalty0 250--256, 1986.
\newblock ISSN 0030364X, 15265463.
\newblock URL \url{http://www.jstor.org/stable/170819}.

\end{thebibliography}

\end{document}